\title{Intermediate Relation Size Bounds for Select-Project-Join-Union Query Plans} %TODO Please add
\author{Hubie Chen}{King's College London, UK }{hubie.chen@kcl.ac.uk}{https://orcid.org/0000-0000-0000-0000}{}%TODO mandatory, please use full name; only 1 author per \author macro; first two parameters are mandatory, other parameters can be empty. Please provide at least the name of the affiliation and the country. The full address is optional. Use additional curly braces to indicate the correct name splitting when the last name consists of multiple name parts.
\author{Markus Schneider}{King's College London, UK}{markus.schneider@kcl.ac.uk}{https://orcid.org/0000-0001-8338-5660}{}
\authorrunning{H. Chen and M. Schneider} %TODO mandatory. First: Use abbreviated first/middle names. Second (only in severe cases): Use first author plus 'et al.'
\keywords{Select-Project-Join-Union Plans, Query Optimization, Tree Decompositions, Key Dependencies} %TODO mandatory; please add comma-separated list of keywords
\begin{document}

\newcommand{\from}[3]{\textbf{FROM} #1 \textbf{TO} #2: #3}
%Complexity classes
\def\FP{\text{\rm FP}}
\def\FL{\text{\rm FL}}
\def\spanL{\text{\rm SpanL}}
\def\spanLL{\text{\rm SpanLL}}
\def\sharpP{\text{\rm \#P}}
\def\sharpL{\text{\rm \#L}}
\def\NL{\text{\rm NL}}
\def\PTIME{\text{\rm P}}
\def\PH{\text{\rm PH}}
\def\NP{\text{\rm NP}}
\def\LOGSPACE{\text{\rm L}}
\def\NSPACE{\rm NSPACE}
\def\NTIME{\rm NTIME}
\def\co{\rm co\text{-}}
\def\PSPACE{\rm PSPACE}
\def\EXPTIME{\rm EXPTIME}
\def\NEXP{\rm NEXPTIME}
\def\TWOEXPTIME{\rm 2EXPTIME}
\def\AEXSPACE{\rm AEXSPACE}
\def\ACZ{\rm AC_0}
\def\hard{\rm \text{-}hard}
\def\complete{\text{-{\rm complete}}}
\def\assign{ \text{:--} }

\newcommand\sem[1]{{[\![ #1 ]\!]}} %Semantics
\newcommand{\db}[1]{\mathsf{db}(#1)}
\newcommand{\depth}[1]{\mathsf{depth}(#1)}
\newcommand{\mi}[1]{\mathit{#1}}
\newcommand{\ins}[1]{\mathbf{#1}}
\newcommand{\adom}[1]{\mathsf{dom}(#1)}
\newcommand{\ra}{\rightarrow}
\newcommand{\fr}[1]{\mathsf{fr}(#1)}
\newcommand{\dep}{\Sigma}
\newcommand{\sch}[1]{\mathsf{sch}(#1)}
\newcommand{\esch}[1]{\mathsf{edb}(#1)}
\newcommand{\isch}[1]{\mathsf{idb}(#1)}
\newcommand{\sign}{\ins{S}}
\newcommand{\body}[1]{\mathsf{body}(#1)}
\newcommand{\head}[1]{\mathsf{head}(#1)}
\newcommand{\guard}[1]{\mathsf{guard}(#1)}
\newcommand{\class}[1]{\mathsf{#1}}
\newcommand{\pos}[1]{\mathsf{pos}(#1)}
\newcommand{\crel}[1]{\prec_{#1}}
\newcommand{\base}[1]{\mathsf{base}(#1)}
\newcommand{\var}[1]{\mathsf{var}(#1)}
\newcommand{\vr}[1]{\langle #1 \rangle}
\newcommand{\const}[1]{\mathsf{const}(#1)}

\newcommand{\lsign}[1]{\mathsf{b}(#1)}
\newcommand{\lvar}[1]{\mathsf{v}(#1)}

\newcommand{\reach}[1]{\rightsquigarrow_{#1}}
\newcommand{\obl}{\mathsf{o}}
\newcommand{\sobl}{\mathsf{so}}
\newcommand{\std}{\mathsf{std}}
\newcommand{\cta}[1]{\class{CT}_{\forall \forall}^{#1}}
\newcommand{\cte}[1]{\class{CT}_{\forall \exists}^{#1}}

\newcommand{\ctda}[2]{\class{CT}_{\forall,#2}^{#1}}
\newcommand{\ctde}[2]{\class{CT}_{\exists,#2}^{#1}}

\newcommand{\ct}[1]{\class{CT}_{\forall}^{#1}}
\newcommand{\ctd}[2]{\class{CT}^{#1}_{#2}}
\newcommand{\ctapr}[1]{\mathsf{CT}_{\forall \forall}^{#1}}
\newcommand{\ctepr}[1]{\mathsf{CT}_{\forall \exists}^{#1}}
\newcommand{\ctdapr}[1]{\mathsf{CT}_{\forall}^{#1}}
\newcommand{\ctdepr}[1]{\mathsf{CT}_{\exists}^{#1}}
\newcommand{\ctpr}[1]{\mathsf{CT}_{\forall}^{#1}}
\newcommand{\ctdpr}[1]{\mathsf{CT}^{#1}}
\newcommand{\cri}[1]{\mathsf{cr}(#1)}
\newcommand{\lin}[1]{\mathsf{Lin_{\class{S}}}(#1)}
\newcommand{\ling}[1]{\mathsf{lin}(#1)}
\newcommand{\shape}[1]{\mathsf{shape}(#1)}
\newcommand{\svar}[1]{\mathsf{svar}(#1)}
\newcommand{\constfree}[1]{\mathsf{c\text{-}free}(#1)}
\newcommand{\id}[2]{\mathsf{id}_{#1}(#2)}
\newcommand{\unique}[1]{\mathsf{unique}(#1)}
\newcommand{\simple}[1]{\mathsf{simple}(#1)}
\newcommand{\gsimple}[1]{\mathsf{gsimple}(#1)}
\def\sub{\sqsubseteq}
\def\substrict{\sqsubset}

\newcommand{\rew}[1]{\hat{#1}}
\newcommand{\whichver}[1]{#1^w}
\newcommand{\support}[1]{\mathsf{support}(#1)}
\newcommand{\bagsupport}[1]{\mathsf{bagsupport}(#1)}
\newcommand{\norm}[1]{\mathsf{Norm}(#1)}
\newcommand{\depg}[1]{\mathsf{dg}(#1)}
\newcommand{\edepg}[1]{\mathsf{edg}(#1)}
\newcommand{\sodg}[1]{\mathsf{so\text{-}dg}(#1)}
\newcommand{\odg}[1]{\mathsf{o\text{-}dg}(#1)}
\newcommand{\ex}[1]{\mathsf{exvar}(#1)}
\newcommand{\mgu}[2]{\mathsf{mgu}(#1,#2)}
\newcommand{\res}[1]{\mathsf{res}(#1)}
\newcommand{\f}[2]{f_{#1}(#2)}
\newcommand{\arity}[1]{\mathsf{ar}(#1)}
\newcommand{\atoms}[1]{\mathsf{atoms}(#1)}
\newcommand{\lfacts}[1]{\mathsf{LFacts}(#1)}
\newcommand{\why}[3]{\mathsf{why}(#1,#2,#3)}
\newcommand{\trio}[3]{\mathsf{trio}(#1,#2,#3)}
\newcommand{\which}[3]{\mathsf{which}(#1,#2,#3)}
\newcommand{\whichnr}[3]{\mathsf{which}_{\mathsf{NR}}(#1,#2,#3)}
\newcommand{\whichun}[3]{\mathsf{which}_{\mathsf{UN}}(#1,#2,#3)}
\newcommand{\whichstar}[3]{\mathsf{which}_{\star}(#1,#2,#3)}
\newcommand{\posbool}[3]{\mathsf{posbool}(#1,#2,#3)}
\newcommand{\posboolnr}[3]{\mathsf{posbool}_{\mathsf{NR}}(#1,#2,#3)}
\newcommand{\posboolun}[3]{\mathsf{posbool}_{\mathsf{UN}}(#1,#2,#3)}
\newcommand{\bx}[3]{\mathbb{B}(#1,#2,#3)}
\newcommand{\bxnr}[3]{\mathbb{B}_{\mathsf{NR}}(#1,#2,#3)}
\newcommand{\bxun}[3]{\mathbb{B}_{\mathsf{UN}}(#1,#2,#3)}
\newcommand{\nx}[3]{\mathbb{N}(#1,#2,#3)}
\newcommand{\nrwhy}[3]{\mathsf{why}_{\mathsf{NR}}(#1,#2,#3)}
\newcommand{\mdwhy}[3]{\mathsf{why}_{\mathsf{MD}}(#1,#2,#3)}
\newcommand{\mtd}[3]{\mathsf{min\text{-}tree\text{-}depth}(#1,#2,#3)}
\newcommand{\mgd}[3]{\mathsf{min\text{-}dag\text{-}depth}(#1,#2,#3)}
\newcommand{\unwhy}[3]{\mathsf{why}_{\mathsf{UN}}(#1,#2,#3)}
\newcommand{\nrtrio}[3]{\mathsf{trio}_{\mathsf{NR}}(#1,#2,#3)}
\newcommand{\untrio}[3]{\mathsf{trio}_{\mathsf{UN}}(#1,#2,#3)}
\newcommand{\precnode}[1]{\prec_{#1}}
\newcommand{\repr}[2]{\mathsf{repr}_{#2}(#1)}
\newcommand{\reprtree}[2]{\mathsf{reprTrees}^{#1}(#2)}
\newcommand{\childlabels}[1]{\mathsf{child\textrm{-}labels}(#1)}
\newcommand{\birth}[2]{\mathsf{birth}_{#1}(#2)}
\newcommand{\OMIT}[1]{}
\newcommand{\crt}[1]{\texttt{cr}(#1)}
\newcommand{\precd}[1]{\prec_{#1}}
\newcommand{\level}[2]{\mathsf{lvl}_{#2}(#1)}
\newcommand{\lvl}[2]{\texttt{lv}_{#1}(#2)}
\newcommand{\posvar}[2]{\mathsf{pos}(#1,#2)}
\newcommand{\posterm}[2]{\mathsf{pos}(#1,#2)}
\newcommand{\varpos}[2]{\mathsf{var}(#1,#2)}
\newcommand{\termpos}[2]{\mathsf{term}(#1,#2)}
\newcommand{\CT}[2]{\mathsf{CT}^{#1}_{#2}}
\newcommand{\frpos}[1]{\mathsf{frpos}(#1)}
\newcommand{\dom}{\mathbf{C}}
\newcommand{\freshdom}{\mathbf{N}}
\newcommand{\frontier}[1]{\mathsf{fr}(#1)}
\newcommand{\eqtype}[1]{\mathsf{eqtype}(#1)}

\newcommand{\whymult}[3]{\mathsf{whymult}_{\mathsf{ALL}}(#1,#2,#3)}
\newcommand{\nrwhymult}[3]{\mathsf{whymult}_{\mathsf{NR}}(#1,#2,#3)}
\newcommand{\unwhymult}[3]{\mathsf{whymult}_{\mathsf{UN}}(#1,#2,#3)}
\newcommand{\vecenc}[1]{\mathsf{vec}(#1)}
\newcommand{\ground}[2]{\mathsf{gg}(#1,#2)}
\newcommand{\groundh}[2]{\mathsf{gh}(#1,#2)}
\newcommand{\multiset}[1]{\{\!\{ #1 \}\!\}}

\def\iso{\simeq}
\newcommand{\can}[1]{\mathsf{can}(#1)}
\newcommand{\proj}[2]{\Pi_{#1}(#2)}
\newcommand{\pred}[1]{\mathit{pred}(#1)}
\newcommand{\predt}[1]{[#1]}
\newcommand{\atom}[1]{\underline{#1}}
\newcommand{\tuple}[1]{\bar{#1}}
\newcommand{\resolv}[1]{[#1]}
\newcommand{\parent}[1]{\mathit{par}(#1)}
\newcommand{\dept}[1]{\mathit{depth}(#1)}
\newcommand{\chase}[2]{\mathsf{chase}(#1,#2)}
\newcommand{\chasesize}[2]{\mathsf{chsize}(#1,#2)}
\newcommand{\starchasei}[3]{\star\text{-}\mathsf{chase}^{#3}(#1,#2)}
\newcommand{\starchase}[2]{\star\text{-}\mathsf{chase}(#1,#2)}
\newcommand{\sochasei}[3]{\sobl\text{-}\mathsf{chase}^{#3}(#1,#2)}
\newcommand{\sochase}[2]{\sobl\text{-}\mathsf{chase}(#1,#2)}
\newcommand{\ochasei}[3]{\obl\text{-}\mathsf{chase}^{#3}(#1,#2)}
\newcommand{\ochase}[2]{\obl\text{-}\mathsf{chase}(#1,#2)}
\newcommand{\completion}[2]{\mathsf{complete}(#1,#2)}
\newcommand{\mar}[1]{\hat{#1}}
\newcommand{\nullobl}[3]{\bot^{#1}_{#2,#3}}
\newcommand{\nullsobl}[3]{\bot^{#1}_{#2,#3_{|\frontier{#2}}}}
\newcommand{\startype}[1]{\star\textrm{-}\mathsf{type}(#1)}
\newcommand{\type}[2]{\mathsf{type}_{#1}(#2)}
\newcommand{\types}[2]{#1\textrm{-}\mathsf{types}(#2)}
\newcommand{\src}[1]{\mathsf{src}(#1)}
\newcommand{\IDTGD}{\class{ID}}
\newcommand{\DLLITETGD}{\mathsf{DL\textrm{-}Lite^{TGD}}}
\newcommand{\SLTGD}{\class{SL}}
\newcommand{\LDAT}{\class{LDat}}
\newcommand{\NRDAT}{\class{NRDat}}
\newcommand{\GDAT}{\class{G}}
\newcommand{\WGTGD}{\class{WG}}
\newcommand{\RATGD}{\class{RA}}
\newcommand{\LARATGD}{\class{LARA}}
\newcommand{\LCRATGD}{\class{LCRA}}
\newcommand{\LCWATGD}{\class{LCWA}}
\newcommand{\WATGD}{\class{WA}}
\newcommand{\DAT}{\class{Dat}}
\newcommand{\UCQ}{\class{UCQ}}
\newcommand{\SLRATGD}{\class{SLRA}}
\newcommand{\SLWATGD}{\class{SLWA}}
\newcommand{\LRATGDP}{\class{LRA}^{+}}
\newcommand{\LWATGDP}{\class{LWA}^{+}}
\newcommand{\oblrew}[1]{\mathsf{enrichment}(#1)}

\def\eqtree{\approx}
\newcommand{\downof}[2]{#1_{\downarrow#2}}
\newcommand{\gri}[2]{\mathsf{gri}(#1,#2)}
\newcommand{\downc}[3]{\mathsf{down}(#1,#2,#3)}
\newcommand{\quot}[1]{#1_{/\eqtree}}
\newcommand{\leaves}[1]{\mathsf{leaves}(#1)}
\newcommand{\dgt}[1]{\mathsf{dag}(#1)}
\newcommand{\cq}[1]{\mathsf{cq}(#1)}
\newcommand{\cqs}[2]{\mathsf{cq}(#1,#2)}
\newcommand{\cqsu}[2]{\mathsf{cq}(#1,#2)}
\newcommand{\cqseq}[2]{\mathsf{cq}^{=}(#1,#2)}
\newcommand{\cqeq}[1]{\mathsf{cq}^{\approx}(#1)}
\newcommand{\cqeqs}[1]{\mathsf{cqs}^{=}(#1)}
\newcommand{\fo}[1]{\Psi_{#1}}
\newcommand{\fod}[1]{\Xi_{#1}}
\newcommand{\cnf}[3]{\phi_{#1,#2,#3}}
\newcommand{\trees}[1]{\mathsf{TR}(#1)}
\newcommand{\treesf}{\mathsf{TR}}
\newcommand{\utrees}[1]{\mathsf{UTR}(#1)}
\newcommand{\utreesf}{\mathsf{UTR}}
\newcommand{\spec}[1]{\mathsf{sp}(#1)}

\newcommand{\rela}{\mathbf{A}}
\newcommand{\relb}{\mathbf{B}}
\newcommand{\relc}{\mathbf{C}}
\newcommand{\reld}{\mathbf{D}}
\newcommand{\ar}{\mathrm{ar}}

\renewcommand{\H}{\mathbf{H}}
\newcommand{\tup}[1]{\overline{#1}}

\newcommand{\out}{\mathsf{out}}
\newcommand{\homs}{\mathsf{homs}}
\newcommand{\subplans}{\mathsf{subplans}}

\newcommand{\Q}{\mathbb{Q}}
\newcommand{\N}{\mathbb{N}}

\newcommand{\aug}{\mathsf{aug}}
\newcommand{\wid}{\textup{-width}}

\newcommand{\new}[1]{\textcolor{red}{#1}}    %{#1}    % {\textcolor{red}{#1}}

%%%%%%%%%%%%%%%%%%%%%%%%% Environment delimiters

\def\qed{\hfill{\qedboxempty}      % qed with empty box
  \ifdim\lastskip<\medskipamount \removelastskip\penalty55\medskip\fi}

\def\qedboxempty{\vbox{\hrule\hbox{\vrule\kern3pt
                 \vbox{\kern3pt\kern3pt}\kern3pt\vrule}\hrule}}

\def\qedfull{\hfill{\qedboxfull}   % qed with full box
  \ifdim\lastskip<\medskipamount \removelastskip\penalty55\medskip\fi}

\def\qedboxfull{\vrule height 4pt width 4pt depth 0pt}

\newcommand{\markfull}{\qedboxfull}
\newcommand{\markempty}{\qed} 

\newtheorem{manualtheoreminner}{Theorem}
\newenvironment{manualtheorem}[1]{%
  \renewcommand\themanualtheoreminner{#1}%
  \manualtheoreminner
}{\endmanualtheoreminner}

\newtheorem{manualpropositioninner}{Proposition}
\newenvironment{manualproposition}[1]{%
  \renewcommand\themanualpropositioninner{#1}%
  \manualpropositioninner
}{\endmanualpropositioninner}

\newtheorem{manuallemmainner}{Lemma}
\newenvironment{manuallemma}[1]{%
  \renewcommand\themanuallemmainner{#1}%
  \manuallemmainner
}{\endmanuallemmainner}

\newcommand{\triox}{\mathsf{Trio}[X]}

\newcommand{\mtodo}[1]{{\color{blue} Markus Todo: #1}}
\newcommand{\htodo}[1]{{\color{blue} Hubie Todo: #1}}

\maketitle

%TODO mandatory: add short abstract of the document
\begin{abstract}
    We study the problem of statically optimizing select-project-join-union (SPJU) plans where unary key constraints are allowed. A natural measure of a plan, which we call the output degree and which has been studied previously, is the minimum degree of a polynomial bounding the plan's output relation, as a function of the input database's maximum relation size. This measure is, by definition, invariant under passing from a plan to another plan that is semantically equivalent to the first. In this article, we consider a plan measure which we call the intermediate degree; this measure is defined to be the minimum degree of a polynomial bounding the size of all intermediate relations computed during a plan's execution --- again, as a function of the input database's maximum relation size.
    We present an algorithm that, given an SPJU plan $p$ and a set $\Sigma$ of unary keys, computes an SPJU plan $p'$ that is semantically equivalent to $p$ (over databases satisfying $\Sigma$) and that has the minimum intermediate degree over all such semantically equivalent plans. For the types of plans considered, we thus obtain a complete and effective understanding of intermediate degree.
\end{abstract}

\section{Introduction}

{\bf Background and motivation.}
Modern database systems capable of evaluating queries
of the standard query language SQL typically have,
as a key component, a query optimizer that synthesizes
execution plans consisting of basic operations from a 
relational algebra; techniques for estimating and comparing the 
costs of evaluating such plans are considered central
in the study of databases~\cite[Chapter 6]{AbiteboulHullVianu95-foundationsdatabases}.  
In this article, we study a basic measure 
of execution plans:
the \emph{intermediate size}, by which we mean
the maximum size 
over all intermediate relations computed during a query's execution,
%over all relations computed by a plan,
relative to a database.
We view this as a natural and fundamental measure:
the intermediate size gives an obvious lower bound on the amount of time
needed to execute a plan, for algorithms that explicitly materialize
all intermediate relations; and, it characterizes the amount of space
needed to store the results of all such intermediate relations.

We study this size measure in an asymptotic setting,
as a function of the maximum relation size~$M_\reld$ of a database $\reld$.
We say that a plan 
has \emph{intermediate degree} $d$ when $d \geq 0$ and
there exists a degree $d$ polynomial upper bounding the plan's intermediate size,
as a function of $M_{\reld}$, 
but no lower degree polynomial provides such a bound.
%(Clearly, a plan's output degree is 
%less than or equal to the plan's intermediate degree.)
We study this measure for SPJU query plans,
which allow \emph{select}, \emph{project}, \emph{join}
and \emph{union} 
as operations on relations.  These plans 
are semantically equivalent to unions of conjunctive queries,
and
are intensely studied throughout database theory.
We permit \emph{unary key constraints} to be expressed on database relations.
%basic relations in these plans. 
In essence, a unary key constraint states
that there is a single coordinate of a relation where, for each tuple in the relation, knowing the value at the coordinate determines the entire tuple.

Having identified a relevant complexity measure of query plans---the intermediate degree---and a relevant class of query plans---the SPJU query plans, one of the most basic and natural questions one could ask, from the standpoint of query optimization, is this: is there an algorithm that receives a query plan as input, and reformulates it into a semantically equivalent query plan whose intermediate degree is minimal, over all such equivalent plans?  That is, is there an algorithm that can statically analyze a given query plan, and reformulate it so as to optimally minimize the intermediate degree, without disrupting the plan's semantics?  A primary contribution of this article is a positive answer to this question.

{\bf Contributions.}
Our main theorem presents a minimization algorithm that, given an SPJU plan~$p$ 
with a set $\Sigma$ of unary key constraints,
computes an SPJU plan~$p'$ that is semantically equivalent to $p$ (with respect to databases satisfying the constraints $\Sigma$),
and has the lowest possible intermediate degree
over all SPJU plans that are so semantically equivalent to $p$.
%---along with the intermediate degree of $p'$.
This theorem thus yields, in the asymptotic setting considered,
a full understanding of how to
statically optimize the intermediate degree so that it is the minimum possible.
We remark that (to our knowledge) this type of understanding was not previously known even for SPJ plans (SPJU plans that do not use the union operator) and on databases without any constraints.
%for vanilla SPJ plans without any constraints.
We view the definition of the 
%asymptotic 
notion of \emph{intermediate degree},
and its supporting theory, as conceptual contributions of this work.

Our result showing how
to minimize the intermediate degree amounts to demonstrating how to compute, 
for any given plan,
an asymptotically optimal procedure
in a restricted model of computation, namely, the class of SPJU plans.
Our focus on polynomial degree is aligned with the notion of
fixed-parameter tractability, whereby a problem is tractable essentially
when there is a degree $d$ such that each so-called \emph{slice} 
has an algorithm with running time bounded above by a degree $d$ polynomial;
a \emph{slice} is a set of all instances sharing the same \emph{parameter},
which is a value associated with each instance.

To prove our result, we make a number of technical contributions that we believe will be of significant utility in the future study of intermediate degree, which
%, again, 
we view as a highly natural measure of plan complexity. 
One tool that we present (in Subsection~\ref{ssect:query-structures}) is 
a theorem showing how to convert an SPJ query plan to 
a relational structure along with a certain form of tree decomposition
of the structure; this allows us to apply structural measures
based on tree decompositions to reason about the intermediate degree of plans. 
As a first step towards our main result, we show a restricted version of the main theorem to SPJ plans, that is, an algorithm that, given an SPJ plan $p$ and a set $\Sigma$ of unary keys, outputs a semantically equivalent SPJ plan $p'$ (over all databases satisfying $\Sigma$) with the lowest possible intermediate degree.
While the way that this algorithm computes the minimized plan $p'$ from a given plan $p$ is, on a high level, quite natural --- essentially, $p'$ is based on the core of the chase of a relational structure naturally derived from $p$ --- the proof that $p'$ is minimal in the desired sense is non-trivial, since we have to reason about arbitrary plans that are semantically equivalent to $p'$. 
Intuitively, the core is a minimized version of a relational structure and the chase is a well-known procedure that enforces constraints on a relational structure.

As defined, the intermediate degree is a measure of query plans. It follows from our definitions that any \emph{algorithm for evaluating a query plan $q$} that materializes all intermediate relations must require time at least $\Omega(M_\reld^d)$, where $d$ is the intermediate degree of $q$.
That is, the polynomial time
 dependence on $M_\reld$ must be of degree at least $d$.
%whose polynomial degree is bounded below by the intermediate degree; that is, the intermediate degree lower bounds any such algorithm's polynomial degree.  
Having observed this lower bound, we are naturally led to the question of whether or not this bound is tight.
We present results indicating that it is, which renders the intermediate degree as an algorithmically meaningful measure.  
When given a plan~$p$, our minimization algorithm can always output a plan $p'$ with the properties described above, and with the additional syntactic property that $p'$ is \emph{well-behaved}; 
as we show, well-behavedness of a plan $q$ implies that there is an evaluation algorithm for evaluating $q$ on a database $\reld$
that always runs within time $M_\reld^d$ times a multiplicative overhead, where $d$ is the intermediate degree of $q$.  So, a well-behaved plan can be algorithmically evaluated in time close to $O(M_\reld^d)$, complementing the time lower bound of $\Omega(M_\reld^d)$ discussed above. 
%(We leave an analysis of the exact nature of the multiplicative overhead as a question for future work; our presentation and analysis of our evaluation algorithm are relatively elementary, and we wish to emphasize the point that the time bound that we obtain is obtained in an elementary fashion.)
We leave an analysis of the exact nature of the multiplicative overhead as a question for future work; we wish to emphasize here that the analysis of our evaluation algorithm and the resulting time bound are obtained in an elementary fashion.

{\bf Related Work.}
Our work is inspired by and is closely related to
the work of Atserias, Grohe, and Marx~\cite{GroheMarx14-fractional-edge-covers,AtseriasGroheMarx13-size-bounds}, as well as the follow-up work thereof by
Gottlob, Lee, Valiant, and Valiant~\cite{GottlobLeeValiant12-size-tw-bounds}.
A main result shown by~\cite{AtseriasGroheMarx13-size-bounds} yields a general upper bound---the \emph{AGM bound}---on the
output size of any pure join plan $q$, namely, a bound of $M_{\reld}^{\rho^*(q)}$, where $\rho^*(q)$ is the so-called fractional edge cover number of the plan $q$.
They also supply an argument that this upper bound is tight. 
Here, by a pure join plan, we mean 
a plan that consists only of joins and basic relations of a database.

Let us define the \emph{output degree} of a query to be the minimum degree of a polynomial bounding the query's output relation, as a function of 
the maximum relation size $M_{\reld}$ of a database $\reld$.
Note that while the output degree is 
clearly invariant under passing from a query to another query
that is semantically equivalent to the first, the intermediate degree
does not enjoy this invariance.
The AGM bound along with the argument that it is tight establish $\rho^*(q)$ as the output degree
of a pure join plan $q$.
The article~\cite{AtseriasGroheMarx13-size-bounds}
also exhibits, for each such pure join plan, a semantically equivalent
join-project plan whose intermediate relations all have size 
less than (or equal to) the output size, implying 
a characterization of the 
minimum 
intermediate degree (over semantically equivalent plans) as the fractional edge cover number---for pure join plans.
We view the article~\cite{AtseriasGroheMarx13-size-bounds} as begging the study of more
general classes of plans: 
%beyond pure join plans: 
they show explicitly
the necessity of using projections in the plans witnessing
that the intermediate relation sizes can be made to be at most
the output size!

Extending the AGM bound, the work~\cite{GottlobLeeValiant12-size-tw-bounds} characterizes 
the output degree of any SPJ plan and considering databases where unary key constraints
can be expressed; this is done by identifying a measure called 
the \emph{color number}.
%, and applying it to the chase of the plan.
Note that this measure generalizes the fractional edge cover number
(as discussed in Section 4 of \cite{GottlobLeeValiant12-size-tw-bounds}).
Our work builds on this measure and the accompanying understanding of unary key constraints. In particular,
in the present work, we naturally extend the notion of color number
to define a measure of tree decompositions.
We remark that the results of~\cite{GottlobLeeValiant12-size-tw-bounds}
are phrased in terms of
conjunctive queries, which are semantically equivalent to SPJ plans.

%By a pure join plan, we mean 
%a plan that joins together basic relations of a database;
%this output size is described using the minimum size of a 
%fractional edge cover of a hypergraph naturally derived from the 
%relations joined.  

An important and influential line of research emerging after
the work of Atserias, Grohe, and Marx presented new join algorithms,
so-called worst-case optimal join algorithms,
with running times shown to be bounded above by
the AGM bound---apart from a multiplicative overhead
(see for example~\cite{NgoRR13-skew-strikes-back,NgoPRR18-jacm-wco-join,Ngo18-wco-join-survey} and the discussion therein).
Building on this work, further research presented algorithms, with
running time analyses, for more general forms of queries, including queries with
aggregates~\cite{JoglekarPR16-ajar,KhamisNR16-faq}.  
The focus of these further works are on presenting algorithms and \emph{upper bounding} their running time in terms of structural measures, generally without considering how to minimize queries up to semantic equivalence; the present article establishes a complementary result by showing, for each query plan~$p$, % (of the specified form), 
a \emph{tight lower bound} on the running time of a restricted class of procedures, namely, procedures that can be described by plans that are semantically equivalent to the given plan~$p$.  

{\bf Perspective.} One measure of query plans that has been studied heavily is \emph{width}, which by now has a mature underlying theory (refer for example to~\cite{KolaitisVardi00-containment,Grohe07-otherside,Chen14-existentialpositive,BovaChen14-width-ep,Otto17,BovaChen19-howmany,ChenMengel24}); while this measure is often defined on logical formulas (and is sometimes formulated as the \emph{number of variables}), for a query plan the \emph{width} amounts to the maximum arity over all subplans.  The present article is a contribution to a general research direction whereby we hope to identify further meaningful and informative measures of query plans, and study how to minimize various classes of queries with respect to introduced measures.

%\mtodo{Extend related work with other width notions.}

%{\bf Contribution and Outline.} Our main contribution is an algorithm that finds, given a SPJU plan $p$ and a set $\Sigma$ of unary keys, a $\Sigma$-semantically equivalent SPJU plan $p'$ which has the lowest possible intermediate degree among all $\Sigma$-semantically equivalent SPJU plans. Towards this result, we show that an analogous theorem holds for SPJ plans, which bears the bulk of the technical development of this work. 

{\bf Outline.} 
In Section~\ref{sec:prelim}, we present the necessary concepts used throughout the paper. Section~\ref{sect:main-theorem-statement} defines the central notion of intermediate degree, gives an intuition for the main theorem as well as discusses some algorithmic aspects of the plan output by the algorithm underlying the main theorem. In Section~\ref{sect:spj} we show how we can prove the main theorem restricted to SPJ plans before generalizing our result to SPJU plans in Section~\ref{sect:spju}.

\section{Preliminaries}\label{sec:prelim}

We use $\Q$ to denote the rational numbers, $\Q^+$ to denote the non-negative rational numbers, and $\N$ to denote the natural numbers, which we understand to include $0$. For a natural number $k \in \N$, we use $[k]$ to denote the first $k$ positive natural numbers, that is, the set $\{ i \in \N ~|~ 1 \leq i \leq k \}$.
For a map $f: A \to B$ and some $S \subseteq A$, $f \upharpoonright S$ denotes the restriction of $f$ to $S$.
%When $\tup{a} = (a_1, \ldots, a_k)$ is a tuple, we use the notation $\{ \tup{a} \}$ to denote the set $\{ a_1, \ldots, a_k \}$; when $(a_1, \ldots, a_k)$ is a tuple over a set $A$ and $f: A \to B$ is a map, we generally use $f(a_1, \ldots, a_k)$ to denote the entry-wise action of $f$ on the tuple, that is, to denote $(f(a_1), \ldots, f(a_k))$.
For a tuple $\tup{a} = (a_1, \ldots, a_k)$ over a set $A$ and a map $f: A \to B$, by extension of notation, we write $f(\tup{a})$ for the entry-wise action of $f$ on the tuple, i.e., to denote $(f(a_1), \ldots, f(a_k))$; we further write $\{ \tup{a} \}$ for the set $\{ a_1, \ldots, a_k \}$.
When $A$ is a set, we use $A^*$ to denote the set of all finite-length tuples over $A$.
When~$\approx$ is a binary relation on a set $A$, we define $\approx^*$ as the reflexive-symmetric-transitive closure of~$\approx$, that is, $\approx^*$ is the smallest equivalence relation on $A$ that contains $\approx$ as a subset.

\subsection{Structures}

%We define a \emph{signature} to be a finite set of relation symbols, where each symbol $R$ has associated with it an \emph{arity}, which is a natural number denoted by $\ar(R)$.
A \emph{signature} $\sigma$ is a finite set of relation symbols, each with an associated arity. We denote by $R/n$ that symbol $R$ has arity $n\in\N$; we may also write $\ar(R)$ for $n$.
A \emph{structure} $\rela$ over a signature $\sigma$ consists of a finite set called the \emph{universe} or {\em domain} of the structure, denoted $\adom{\rela}$, and, for each symbol $R \in \sigma$, a relation $R^{\rela}$ over $\adom{\rela}$ whose arity is that of $R$.
%We generally use the boldface letters $\rela, \relb, \ldots$ to denote structures, and the corresponding letters $A, B, \ldots$ to denote their respective universes.
An \emph{isolated element} of a structure $\rela$ is a universe element $a \in \adom{\rela}$ such that $a$ does not appear in any tuple of any relation of~$\rela$.
An \emph{open structure} $(\rela,\tup{a})$ over signature $\sigma$ is a pair consisting of a structure $\rela$ over signature $\sigma$ with no isolated elements and a tuple $\tup{a}$ over the universe $\adom{\rela}$ of~$\rela$.
%The following measure of a structure will be crucial in this article: 
For a structure~$\rela$ over signature $\sigma$, let $M_{\rela}$ be the maximum size over all relations of~$\rela$, that is, $\max_{R \in \sigma} |R^{\rela}|$.

Two (open) structures are \emph{similar} if they are defined over the same signature.
When~$\rela$ and~$\relb$ are similar structures over the signature~$\sigma$, we define their union, denoted by $\rela \cup \relb$, as the structure with universe $\adom{\rela} \cup \adom{\relb}$ and where, for each symbol $R \in \sigma$, it holds that $R^{\rela \cup \relb} = R^{\rela} \cup R^{\relb}$.
For two similar structures~$\rela$ and~$\relb$ over a signature~$\sigma$, a \emph{homomorphism} from $\rela$ to $\relb$ is a map $h: \adom{\rela} \to \adom{\relb}$ such that for each symbol $R \in \sigma$, it holds that $\tup{a} \in R^{\rela}$ implies $h(\tup{a}) \in R^{\relb}$.
For similar open structures $(\rela,\tup{a})$ and $(\relb,\tup{b})$ with $\tup{a}$ and $\tup{b}$ having the same arity, a homomorphism $h$ from $(\rela,\tup{a})$ to $(\relb,\tup{b})$ is a homomorphism from $\rela$ to $\relb$ such that $h(\tup{a})=\tup{b}$.
Two similar structures $\rela$ and $\relb$ are \emph{homomorphically equivalent} if there exist a homomorphism from $\rela$ to $\relb$ and a homomorphism from $\relb$ to $\rela$.
An \emph{isomorphism} from $\rela$ to $\relb$ is a bijection $h: \adom{\rela} \to \adom{\relb}$ such that $h$ is a homomorphism from $\rela$ to $\relb$, and $h^{-1}$ is a homomorphism from $\relb$ to $\rela$.
%
%There is a known correspondence between Boolean SPJ queries and structures~\cite{ChandraMerlin77-optimal}; by a \emph{Boolean} SPJ query, we mean one that evaluates to true or false (or alternatively, a relation of arity $0$) on a given structure $\reld$. This correspondence yields that each such query can be translated to a structure $\rela$ such that, for an arbitrary structure $\reld$, the query is true on $\reld$ if and only if there is a homomorphism from $\rela$ to $\reld$.
%We prove a version of this correspondence in Theorem~\ref{thm:p-rep}.
%However, the version that we establish concerns SPJ plans that can output relations of any finite arity, and not just relations of arity $0$. In order to formulate our version, we define the following notion of \emph{open structure}.
%An \emph{open structure} is a pair $(\rela,\tup{a})$ consisting of a structure $\rela$ and a finite-length tuple $\tup{a}$ over the universe $A$ of~$\rela$; such an open structure is said to be over signature $\sigma$ when $\rela$ is over signature $\sigma$.
%
%The tuple $\tup{a}$ corresponds to variables that are \emph{free}, or, in database terminology, \emph{non-projected}.
Let $(\rela,\tup{a})$ be an open structure and let $\reld$ be a structure that is similar to $\rela$; we use $\homs(\rela,\tup{a},\reld)$ to denote the relation 
\[\{ h(\tup{a}) ~|~ \mbox{$h$ is a homomorphism from $\rela$ to $\reld$} \}.\]
We conceive of this relation as the result of evaluating the open structure on the structure $\reld$.
Similarly, when $\rela$ and $\reld$ are similar structures and $S \subseteq \adom{\rela}$, we use $\homs(\rela,S,\reld)$ to denote the set of maps
\[\{ h \upharpoonright S ~|~ \mbox{$h$ is a homomorphism from $\rela$ to $\reld$} \}.\]
%We sometimes refer to the mappings in these sets $\homs(\cdot,\cdot,\cdot)$ as \emph{answers}.

\begin{comment}
\begin{example}
\label{ex:small}
As a small example, let $\sigma$ be the signature containing a single relation $E$ of arity $2$, and let $\rela_0$ be the structure with universe $\adom{\rela_0} = \{ u, v_1, v_2 \}$ and with relation $E^{\rela_0} = \{ (u,v_1), (u,v_2) \}$. Consider the open structure $(\rela_0, (u))$; when $\reld$ is a directed graph, we have that $\homs(\rela_0, (u), \reld)$ is equal to the arity $1$ relation containing each vertex (of $\reld$) having an outgoing edge. Next, consider the open structure $(\rela_0, (v_1,v_2))$; when $\reld$ is a directed graph, we have that $\homs(\rela_0, (v_1,v_2), \reld)$ is equal to the arity $2$ relation containing each pair of vertices (of $\reld$) that each receive an incoming edge from a common vertex. This relation is symmetric, and when $x$ is a vertex of $\reld$, this relation contains the pair $(x,x)$ if and only if $x$ receives an incoming edge (from some vertex).
\end{example}%    
\end{comment}
The next fact follows from the classic work of Chandra and Merlin~\cite{ChandraMerlin77-optimal}.

\begin{proposition}[\cite{ChandraMerlin77-optimal}] \label{prop:cm}
Let $(\rela,\tup{a})$ and $(\rela',\tup{a'})$ be open structures over some signature $\sigma$. There exists a homomorphism from $(\rela,\tup{a})$ to $(\rela',\tup{a'})$ if and only if for each structure $\reld$ over $\sigma$ we have that $\homs(\rela',\tup{a'},\reld) \subseteq \homs(\rela,\tup{a},\reld)$.
\end{proposition}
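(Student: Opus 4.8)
The plan is to prove the two implications separately: the forward direction by composing homomorphisms, and the converse by the standard ``canonical structure'' trick of instantiating the quantified structure $\reld$ to be $\rela'$ itself.

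For the forward direction, I would assume a homomorphism $g$ from $(\rela,\tup{a})$ to $(\rela',\tup{a'})$, so that $g$ is a homomorphism from $\rela$ to $\rela'$ satisfying $g(\tup{a}) = \tup{a'}$. I then fix an arbitrary structure $\reld$ over $\sigma$ and take an arbitrary element of $\homs(\rela',\tup{a'},\reld)$; by definition it has the form $h(\tup{a'})$ for some homomorphism $h$ from $\rela'$ to $\reld$. The key observation is that the composite $h \circ g$ is a homomorphism from $\rela$ to $\reld$ (for each symbol $R$ and each tuple of $R^{\rela}$, applying $g$ lands the tuple in $R^{\rela'}$ and then applying $h$ lands it in $R^{\reld}$), and that $(h \circ g)(\tup{a}) = h(g(\tup{a})) = h(\tup{a'})$. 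Hence $h(\tup{a'}) \in \homs(\rela,\tup{a},\reld)$, which gives the desired inclusion $\homs(\rela',\tup{a'},\reld) \subseteq \homs(\rela,\tup{a},\reld)$. The only routine point to verify here is that a composition of homomorphisms is again a homomorphism, which is immediate from the defining condition.

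For the converse, I would instantiate the hypothesis at the single structure $\reld = \rela'$. Since the identity map on $\adom{\rela'}$ is trivially a homomorphism from $\rela'$ to $\rela'$ fixing $\tup{a'}$, we obtain $\tup{a'} \in \homs(\rela',\tup{a'},\rela')$. The hypothesized inclusion then forces $\tup{a'} \in \homs(\rela,\tup{a},\rela')$, which, by the definition of $\homs$, means exactly that there is a homomorphism $h$ from $\rela$ to $\rela'$ with $h(\tup{a}) = \tup{a'}$---that is, a homomorphism from $(\rela,\tup{a})$ to $(\rela',\tup{a'})$, as required.

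The conceptual crux, and the only nontrivial step, is the choice of witness structure in the converse: rather than reasoning about all structures $\reld$ simultaneously, one recognizes that $\rela'$ together with its identity homomorphism already certifies membership of $\tup{a'}$ in the right-hand relation, so the hypothesized inclusion directly yields the sought homomorphism. Everything else amounts to unwinding the definitions of homomorphism and of $\homs$.
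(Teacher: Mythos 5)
Your proof is correct and follows essentially the same argument as the paper's: the forward direction by composing the given homomorphism with any homomorphism into $\reld$, and the converse by instantiating $\reld = \rela'$ and using the identity map to witness $\tup{a'} \in \homs(\rela',\tup{a'},\rela')$. The only difference is notational (your $h \circ g$ versus the paper's $g \circ h$ naming), and your write-up is somewhat more explicit in unwinding the definitions.
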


%For the sake of completeness, we provide a proof in the appendix.

For similar structures $\rela$ and $\relb$, we say that $\rela$ is a \emph{substructure} of $\relb$ if $\adom{\rela} \subseteq \adom{\relb}$ and for each symbol $R \in \sigma$, it holds that $R^{\rela} \subseteq R^{\relb}$.
When $\relb$ is a structure and $S \subseteq \adom{\relb}$, the \emph{induced substructure} $\relb[S]$ is defined as the structure with universe~$S$ and where, for each symbol $R$, it holds that $R^{\relb[S]} = \{ (b_1, \ldots, b_k) \in R^{\relb} ~|~ \{ b_1, \ldots, b_k \} \subseteq S \}$.
When $\rela$ is a substructure of $\relb$, a \emph{retraction} from $\relb$ to $\rela$ is a homomorphism from $\relb$ to $\rela$ that acts as the identity on $\adom{\rela}$.
%fixes each element $a \in A$.
When there exists a retraction from $\relb$ to $\rela$, we say that $\relb$ \emph{retracts} to $\rela$.  A \emph{core of a structure} $\relb$ is a substructure $\relc$ of $\relb$ such that $\relb$ retracts to~$\relc$, but $\relc$ does not retract to any proper substructure of $\relc$, that is, any substructure of $\relc$ whose universe is a proper subset of $\relc$'s universe. 
A structure is a \emph{core} if it is a core of itself. 
We provide an example illustrating the notions of (open) structure, the relation $\homs(\rela,\tup{a},\reld)$, substructure, and core for the interested reader in the appendix.

\begin{comment}
\begin{example}
Consider the structure $\rela_0$ of Example~\ref{ex:small}. For each $i = 1, 2$, let $\rela_i$ be the structure with universe $\adom{\rela_i} = \{ u, v_i \}$ and with relation $E^{\rela_i} = \{ (u,v_i) \}$. Clearly, each of $\rela_1,\rela_2$ is a substructure of $\rela_0$. In addition, we have that the map fixing $u$ and $v_1$ and sending $v_2$ to $v_1$ is a retraction from $\rela_0$ to $\rela_1$; dually, the map fixing $u$ and $v_2$ and sending $v_1$ to $v_2$ is a retraction from $\rela_0$ to $\rela_2$.  Each of $\rela_1,\rela_2$ is a core of $\rela_0$ (and of itself): to argue this, we need to argue that (say) $\rela_1$ has no retraction to a proper substructure $\rela'$ (of $\rela_1$). (With respect to $\rela_0$, the structures $\rela_1$, $\rela_2$ are symmetric to each other.)
We argue this as follows: if we had such a proper substructure, it would have a size $1$ universe,
and the retraction would map each element in $\adom{\rela_1}$ to the single element $x$ in that size $1$ universe. But then this proper substructure $\rela'$ would have $E^{\rela'} = \{ (x, x) \}$, which could not be a subset of~$E^{\rela_1}$, contradicting that $\rela'$ is a substructure of $\rela_1$.
\end{example}
\end{comment}

The following facts are known and straightforward to verify.

\begin{proposition}
\label{prop:core}
The following facts hold.
\begin{enumerate}

\item For each structure $\relb$, there exists a core of $\relb$.
\item A core $\relc$ of a structure $\relb$ is homomorphically equivalent
to $\relb$.
\item Any two cores of a structure are isomorphic.  (We thus, by a mild abuse of terminology, speak of \emph{the} core of a structure.)
\item When $\relc$ is a core, and $\relb$ is a structure
that is homomorphically equivalent to $\relc$, 
there exists a core $\relc'$ of $\relb$ that is isomorphic to $\relc$.

\end{enumerate}
\end{proposition}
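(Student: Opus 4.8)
The plan is to prove all four facts on top of a single key lemma: \emph{every endomorphism of a finite core is an automorphism}. I would establish this lemma first. Given a core $\relc$ and a homomorphism $h \colon \relc \to \relc$, I would use finiteness of $\adom{\relc}$ to obtain a power $e = h^m$ that is idempotent, i.e.\ $e \circ e = e$. Writing $S = e(\adom{\relc})$ for the image and $\relc[S]$ for the induced substructure, I would check that $e$ is a retraction from $\relc$ onto $\relc[S]$: it lands in $\relc[S]$ because $e$ is a homomorphism and every entry of $e(\tup{a})$ lies in $S$, and it fixes $S$ pointwise because $e(e(x)) = e(x)$. Since $\relc$ is a core it retracts to no proper substructure, forcing $S = \adom{\relc}$; hence $e = h^m$ is surjective, thus bijective (as $\adom{\relc}$ is finite), and a bijective idempotent equals the identity, so $h^m = \mathrm{id}$. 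Consequently $h^{-1} = h^{m-1}$ is a homomorphism, so $h$ is an automorphism.

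With the lemma in hand I would dispatch the parts as follows. For (1), the identity witnesses that $\relb$ retracts to itself, so within the nonempty finite family of substructures of $\relb$ onto which $\relb$ retracts I would take one, $\relc$, whose universe is of minimum size; composing retractions shows $\relc$ admits no retraction to a proper substructure, so $\relc$ is a core of $\relb$. For (2), the defining retraction $\relb \to \relc$ together with the inclusion homomorphism $\relc \hookrightarrow \relb$ (valid since $\relc$ is a substructure) exhibits homomorphic equivalence.

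For (3), given two cores $\relc_1, \relc_2$ of $\relb$, part (2) makes both homomorphically equivalent to $\relb$ and hence to each other, so there are homomorphisms $f \colon \relc_1 \to \relc_2$ and $g \colon \relc_2 \to \relc_1$. The key lemma applied to the endomorphisms $g \circ f$ of $\relc_1$ and $f \circ g$ of $\relc_2$ shows both are automorphisms, which forces $f$ to be a bijection; writing $\alpha = g \circ f$ and noting that $f^{-1} = \alpha^{-1} \circ g$ is a composite of homomorphisms, I conclude $f$ is an isomorphism. Part (4) reduces to the same argument: using (1) I would extract a core $\relc'$ of $\relb$, observe that $\relc'$ is a core of itself, note that (2) together with the hypothesis makes $\relc$ and $\relc'$ homomorphically equivalent, and then rerun the reasoning of (3) on the pair $\relc, \relc'$ to produce an isomorphism.

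The main obstacle is the key lemma, specifically the passage from ``bijective homomorphism'' to ``isomorphism'': a bijective homomorphism of structures need not admit a homomorphic inverse in general, and it is precisely the core and finiteness hypotheses that rescue the situation, through the idempotent-power argument that yields $h^m = \mathrm{id}$. Once that lemma is secured, parts (1)--(4) are essentially bookkeeping about composing homomorphisms and retractions.
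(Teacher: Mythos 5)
Your proof is correct, but note that the paper does not actually prove this proposition: it states these facts as ``known and straightforward to verify'' and moves on, so there is no in-paper argument to compare against. What you have written is the standard textbook development (as in the homomorphism literature, e.g.\ Hell and Ne\v{s}et\v{r}il): the idempotent-power lemma that every endomorphism of a finite core is an automorphism, from which (3) and (4) follow by the $g \circ f$ / $f \circ g$ bijectivity argument, with (1) by minimizing universe size over the (nonempty, finite) family of retracts and (2) by pairing the defining retraction with the inclusion homomorphism. All the delicate points check out under the paper's definitions: the universe is finite by the paper's definition of structure, so an idempotent power $e = h^m$ with $m \geq 1$ exists; $e$ is indeed a retraction onto the induced substructure $\relc[S]$, since the induced substructure contains every tuple of $\relc$ all of whose entries lie in $S$; and your inverse $f^{-1} = (g \circ f)^{-1} \circ g$ is a homomorphism because automorphism inverses are. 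One step you use implicitly in (3) and state explicitly only in (4) deserves a sentence: a core $\relc_1$ \emph{of} $\relb$ is a core \emph{of itself}, which is immediate from the paper's definition (the second defining condition is exactly that $\relc_1$ retracts to no proper substructure), and this is what licenses applying your key lemma to $g \circ f$ and $f \circ g$. With that sentence added, the argument is complete; the only mild alternative route would be to prove (3) directly by a minimality argument without isolating the endomorphism lemma, but your factorization is cleaner and reusable.
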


An open structure is not a structure per se. However, it will be useful to translate an open structure to a structure.
%; under this translation, we will be able to speak of homomorphisms between open structures, a notion which will be used crucially, via an upcoming fact (Proposition~\ref{prop:cm}).
%The translation we use is the following.
We reserve a set $\{R_k\}_{k\in\N}$ of relation names for the purpose of translating open structures to structures, where each relation $R_k$ has arity $k$ and is distinct from the relations mentioned in any non-augmented signature. 
When $(\rela,(a_1,\ldots,a_k))$ is an open structure over the signature $\sigma$, we define its \emph{augmented structure}, denoted by $\aug(\rela,(a_1,\ldots,a_k))$, as the structure $\rela_+$ over signature $\sigma \cup \{ R_k \}$ 
%, where $R_k$ is a fresh relation symbol with associated arity $k$. We then let ...
where $R^{\rela_+} = R^{\rela}$ for each $R \in \sigma$, and $R_k^{\rela_+} = \{ (a_1, \ldots, a_k) \}$. 
%Whenever we form the augmented structure of an open structure over a signature $\sigma$, we assume that $R_k$ is a fresh symbol not contained in $\sigma$.
In particular, this means that if open structures $(\rela,\tup{a})$ and $(\rela',\tup{a'})$ are similar and tuples $\tup{a}$ and $\tup{a'}$ have the same arity, then also the augmented structures $\aug(\rela,\tup{a})$ and $\aug(\rela',\tup{a'})$ are similar.

\OMIT{
\begin{example}
Regarding the structure introduced in Example~\ref{ex:small}, the augmented structure $\rela_+$  of the open structure $(\rela_0, (v_1,v_2))$ is over the signature $\{ E, R_2 \}$ where each of $E$ and $R_2$ have arity $2$. This augmented structure has relations $E^{\rela_+} = \{ (u,v_1), (u,v_2) \}$ and $R_2^{\rela_+} = \{ (v_1, v_2) \}$.
\end{example}
}

We associate each open structure with its augmented structure,
so for example we regard an open structure $(\relc,\tup{c})$ as
a core of an open structure $(\relb,\tup{b})$ when 
$\aug(\relc,\tup{c})$ is a core of $\aug(\relb,\tup{b})$.
Note that a homomorphism from 
an open structure $(\rela,\tup{a})$ to an open structure $(\relb,\tup{b})$,
as we defined it above, is a homomorphism from $\aug(\rela,\tup{a})$ to $\aug(\relb,\tup{b})$, and vice-versa.

%We associate each open structure with its augmented structure; for example, when $(\rela,\tup{a})$ and $(\relb,\tup{b})$ are open structures over the same signature where $\tup{a}$ and $\tup{b}$ have the same length, by a homomorphism from $(\rela,\tup{a})$ to $(\relb,\tup{b})$, we mean a homomorphism from $\aug(\rela,\tup{a})$ to $\aug(\relb,\tup{b})$.

\subsection{Query plans}
\label{subsect:query-plans}

For completeness and clarity, we precisely define the types of plans to be studied.
For each $m \geq 0$, we define an $m$-suitable identification as an expression $j = k$ where $j, k \in [m]$.

For a signature $\sigma$, a \emph{SPJU plan} over $\sigma$ is inductively defined as follows:

\begin{itemize}
\item (basic)
For each $R \in \sigma$, it holds that $R$ is a plan of arity $\ar(R)$.
\item (select)
When $p$ is a plan of arity $m$ and $\theta$ is a set of $m$-suitable identifications,  $\sigma_{\theta}(p)$ is a plan of arity $m$.
\item (project)
When $p$ is a plan of arity $m$ and $j_1, \ldots, j_n$ is a sequence of numbers from $[m]$, with $n \geq 0$, it holds that $\pi_{j_1, \ldots, j_n}(p)$ is a plan of arity $n$.
\item (join)
When $p_1, \ldots, p_\ell$ are plans of arity $m_1, \ldots, m_\ell$, respectively, with $\ell \geq 1$, and $\theta$ is a set of $(m_1 + \cdots + m_\ell)$-suitable identifications, $\Join_{\theta}(p_1, \ldots, p_\ell)$ is a plan of arity $(m_1 + \cdots + m_\ell)$.
\item (union)
When $p_1, p_2$ are plans of arity $m$, it holds that $p_1 \cup p_2$ is a plan of arity $m$.
\end{itemize}

In this paper, we will sometimes simply use the term \emph{plan} 
to refer to an SPJU plan.
We define an \emph{SPJ plan} to be a SPJU plan that does not make use of union.

Suppose that $Q$ is a relation of arity $m$. For any set $\theta$ of $m$-suitable identifications, we define 
%\begin{center}
$\sigma_{\theta}(Q) = \{ (d_1, \ldots, d_m) \in Q ~|~ \mbox{for each $(j=k)\in\theta$, $d_j=d_k$ holds} \}.$
%\end{center}
For any sequence $j_1, \ldots, j_n \in [m]$, we define 
%\[
$\pi_{j_1, \ldots, j_n}(Q) = \{ (d_{j_1}, \ldots, d_{j_n}) ~|~ (d_1, \ldots, d_m) \in Q \}.$
%\]

Let $p$ be a SPJU plan over a signature $\sigma$, let $\reld$ be a structure over $\sigma$, and let $m$ be the arity of~$p$. We define the \emph{output of $p$ over $\reld$}, denoted by $\out(p,\reld)$, to be the arity $m$ relation over~$D$ defined as follows.
Note that, as is usual in this context, we understand that there is a unique tuple of arity $0$, called the \emph{empty tuple}.
\begin{itemize}
\item
When $p$ has the form $R$, we define $\out(p,\reld) = R^{\reld}.$

\item
When $p$ has the form $\sigma_{\theta}(p')$, we define $\out(p,\reld) = \sigma_{\theta}( \out(p',\reld)).$

\item
When $p$ has the form $\pi_{j_1, \ldots, j_n}(p')$, we define $\out(p,\reld) = \pi_{j_1, \ldots, j_n}( \out(p',\reld) ).$

\item
When $p$ has the form $\Join_{\theta}(p_1, \ldots, p_\ell)$, we define $\out(p,\reld) = \sigma_{\theta}( \out(p_1,\reld) \times \cdots \times \out(p_\ell,\reld)).$

\item When $p$ has the form $p_1 \cup p_2$, we define $\out(p,\reld) = \out(p_1,\reld) \cup \out(p_2,\reld).$
\end{itemize}

Observe that selection is a special case of join, in particular, it corresponds to a $1$-way join.
We say that two plans $p, p'$ over the same signature $\sigma$ are \emph{semantically equivalent} when, for every structure $\reld$ over $\sigma$, it holds that $\out(p,\reld) = \out(p',\reld)$. 
Further, for a set $\Sigma$ of keys over $\sigma$, we say that two plans $p, p'$ over $\sigma$ are \emph{$\Sigma$-semantically equivalent} when, for every structure $\reld$ over $\sigma$ that satisfies $\Sigma$, it holds that $\out(p,\reld) = \out(p',\reld)$.
A \emph{subplan} of a plan is defined in the usual manner. 
Formally, we define a set $\subplans(p)$, for each plan $p$, inductively as follows:
\begin{itemize}
\item
When $p$ has the form $R$, we define $\subplans(p) = \{ p \}$.

\item
When $p$ has the form $\sigma_{\theta}(p')$ or $\pi_{j_1, \ldots, j_n}(p')$, we define $\subplans(p) = \subplans(p') \cup \{ p \}.$

\item
When $p$ has the form $\Join_{\theta}(p_1, \ldots, p_\ell)$, we define $\subplans(p) = (\bigcup_{i = 1}^\ell \subplans(p_i)) \cup \{ p \}.$

\item When $p$ has the form $p_1 \cup p_2$, we define $\subplans(p) = \subplans(p_1) \cup \subplans(p_2) \cup \{ p \}.$

\end{itemize}
We then say that $q$ is a \emph{subplan} of $p$ when $q \in \subplans(p)$.

\subsection{Hypergraphs and tree decompositions}

A \emph{hypergraph} is a pair $(V,E)$ where $V$ is a set, called the \emph{vertex set}, and $E \subseteq 2^V$ is a set called the \emph{edge set}.  
A \emph{graph} is a hypergraph where each edge has size $2$. 
When $H$ is a hypergraph, we use $V(H)$ and $E(H)$ to denote its vertex set and edge set, respectively.
An \emph{isolated vertex} of a hypergraph $H$ is a vertex not occurring in any edge, that is, an element of $V(H) \setminus \bigcup_{e \in E(H)} e$.
When $H$ is a hypergraph and $S \subseteq V(H)$, we use $H[S]$ to denote the induced subhypergraph of $H$ on $S$, that is, the hypergraph with vertex set $S$ and with edge set $\{ e \cap S ~|~ e \in E(H) \}$.

When $\rela$ is a structure over signature $\sigma$ we use $\H(\rela)$ to denote the hypergraph with vertex set $\adom{\rela}$ and edge set $\{ \{ \tup{a} \} ~|~ R \in \sigma, \tup{a} \in R^{\rela} \}$.
When $(\rela,\tup{a})$ is an open structure, we use $\H(\rela,\tup{a})$ to denote the hypergraph
with vertex set $\adom{\rela}$ and with edge set $E(\H(\rela)) \cup \{ \{ \tup{a} \} \}$.

A \emph{tree decomposition} of a hypergraph $H$ is a pair $(T,\chi)$, where $T$ is a tree and $\chi$ is a labeling function $V(T) \ra 2^{V(H)}$, i.e., it assigns each vertex of $T$ a subset of vertices of $H$, such that 
\begin{enumerate}
    \item (vertex coverage) for each $v \in V(H)$, there exists $t \in V(T)$ such that $v \in \chi(t)$,
    \item (edge coverage) for each $e \in E(H)$, there exists $t \in V(T)$ such that $e \subseteq \chi(t)$, and
    \item (connectivity) for each $v \in V(H)$, the set $\{ t\in V(T) ~|~ v \in \chi(t) \}$ induces a connected subtree of~$T$.
\end{enumerate}
Note that, if $H$ has no isolated vertices, condition $(1)$ is implied by condition $(2)$.
A tree decomposition of an open structure $(\rela,\tup{a})$, is a tree decomposition of its hypergraph $\H(\rela,\tup{a})$.

\subsection{Key constraints}

A \emph{key constraint} $\kappa$ over signature $\sigma$ is an expression $\textrm{key}(R) = K$, where $R/n \in\sigma$ and $K\subseteq [n]$. For an $n$-tuple $\tup{a} = (a_1,\ldots,a_n)$ and some $K = \{i_1,\ldots,i_m\}\subseteq [n]$, for some $m\in [n]$, we write $\tup{a}[K]$ for the tuple $(a_{i_1},\ldots,a_{i_m})$. A structure $\rela$ over signature $\sigma$ satisfies $\kappa$ if, for every two tuples $\tup{a},\tup{b} \in R^{\rela}$, $\tup{a}[K] = \tup{b}[K]$ implies $\tup{a} = \tup{b}$. If, for some key $\kappa$ of the form $\textrm{key}(R) = K$, the set $K$ is a singleton, we say that $\kappa$ is a unary key. We say a structure satisfies a set of keys $\Sigma$, if it satisfies every key $\kappa\in\Sigma$. 

The \emph{chase} is a well-known tool for enforcing key dependencies (see, e.g.,~\cite{MaierMendelzohnSagiv79-testing-data-dependencies, Deutsch06-query-reformulation}).
For an open structure $(\rela,\tup{a})$ over signature $\sigma$ and a set $\Sigma$ of keys over $\sigma$ a \emph{chase step} on $(\rela,\tup{a})$ w.r.t.~$\Sigma$ is a substitution of the form $x \mapsto y$, which is applied globally on $\rela$ and $\tup{a}$, where there are two tuples $\tup{b}, \tup{b'}\in R^{\rela}$ and a key $\textrm{key}(R) = K$ for some $R/n\in\sigma$, such that $\tup{b}[K] = \tup{b'}[K]$ but there exists some $i\in [n]\setminus K$ such that $\tup{b}[i]=x$ and $\tup{b'}[i]=y$. The \emph{chase} of $(\rela,\tup{a})$ w.r.t.~$\Sigma$, denoted $\textrm{Chase}_\Sigma(\rela,\tup{a})$, is the result of repeated applications of chase steps until a fixpoint is reached; this fixpoint is known to be unique up to isomorphism. The chase of a structure $\rela$ over signature $\sigma$ w.r.t.~$\Sigma$, denoted $\textrm{Chase}_\Sigma(\rela)$, is defined analogously in the obvious way.

We next define a measure for hypergraphs based on a notion, called color number, defined on conjunctive queries and introduced in~\cite{GottlobLeeValiant12-size-tw-bounds}.  
For a structure $\rela$ over signature $\sigma$ and a set $\Sigma$ of keys over $\sigma$, a \emph{valid coloring} of $\rela$ w.r.t. $\Sigma$ is a mapping $\mathsf{Col}_{\Sigma}^{\rela}: \adom{\rela} \ra 2^{\{1,\ldots,k\}}$, for some $k>0$, which assigns to each element of the universe of $\rela$ a subset of $k$ colors such that the following hold:
\begin{enumerate}
    \item For each key $\textrm{key}(R) = \{i_1,\ldots,i_m\}\in\Sigma$, i.e., $R/n\in\sigma$ and $\{i_1,\ldots,i_m\}\subseteq [n]$, we have that for every $\tup{a}\in R^{\rela}$ and every $i\in [n]\setminus\{i_1,\ldots,i_m\}$ it holds that $\mathsf{Col}_{\Sigma}^{\rela}(\tup{a}[i]) \subseteq \bigcup_{j\in[m]}\mathsf{Col}_{\Sigma}^{\rela}(\tup{a}[i_j])$.
    \item There exists an element $x\in \adom{\rela}$ such that $\mathsf{Col}_{\Sigma}^{\rela}(x) \neq \emptyset$.
\end{enumerate}
For a structure $\rela$ over signature $\sigma$, a set $S\subseteq\adom{\rela}$, and a set $\Sigma$ of keys over $\sigma$ the \emph{color number} of $S$ w.r.t.~$\rela$ and $\Sigma$, denoted $C_{\Sigma}^{\rela}(S)$, is defined as the maximal ratio of colors assigned to the elements in $S$ over the maximum number of colors assigned to the elements of some tuple in $\rela$ which is attainable by any valid coloring (for any $k$) of $\rela$ w.r.t.~$\Sigma$. Formally, where $\max_{\mathsf{Col}_{\Sigma}^{\rela}}$ ranges over all such valid colorings:
\[C_\Sigma^{\rela}(S) = \max_{\mathsf{Col}_{\Sigma}^{\rela}}\frac{|\bigcup_{x\in S}\mathsf{Col}_{\Sigma}^{\rela}(x)|}{\max_{\tup{b}\in\rela}|\bigcup_{x\in\tup{b}}\mathsf{Col}_{\Sigma}^{\rela}(x)|}.\]
A crucial observation is that for a coloring that maximises the color number of $S$, we have that $x\notin S$ implies $\mathsf{Col}_{\Sigma}^{\rela}(x) = \emptyset$. Note that, for a structure $\rela$ over signature $\sigma$, the color number $C_{\Sigma}^{\rela}(S)$, as defined here, is equal to the color number, as defined in~\cite{GottlobLeeValiant12-size-tw-bounds}, of the conjunctive query whose atoms are those in 
$\{ R(v_1,\ldots,v_k) ~|~ R \in \sigma, (v_1,\ldots,v_k) \in R \}$,
and whose output variables are $S$.
We refer the reader to~\cite{GottlobLeeValiant12-size-tw-bounds} for more information on the color number, but note the following fact here.
\begin{proposition} (refer to ~\cite[Section 3]{GottlobLeeValiant12-size-tw-bounds}) %the last paragraph of
\label{prop:alpha}
Let $\relb$ be a structure, and let $S$ be a subset of $\adom{\relb}$.
The value $C^\relb_{\emptyset}(S)$ is the fractional edge cover number of the induced hypergraph of $\H(\relb)$ on $S$. 
\end{proposition}
%
%By the induced hypergraph of a hypergraph $G = (V,F)$ on a subset $S \subseteq V$, we mean the hypergraph $(S, \{ e \cap S ~|~ e \in F \})$.
A \emph{fractional edge cover} of a hypergraph $G = (V,F)$ is a mapping $x: F \to [0,\infty)$ such that for each $v \in V$, it holds that  $\sum_{e \in E, v \in e} x(e) \geq 1$; the sum $\sum_{e \in E} x(e)$ is the \emph{weight} of $x$.  The \emph{fractional edge cover number} of $G$ is the minimum weight over all fractional edge covers of $G$; it follows from the theory of linear programming that a minimum exists~\cite{GroheMarx14-fractional-edge-covers}.

%; here, $\sigma$ denotes the signature
%of $\rela$.

%Note that $C_{\Sigma}^{\rela}$ is essentially a mapping from the subsets of the vertices of the hypergraph $\H(\rela)$ to $\Q^+$, since the vertices of the hypergraph are by definition elements of the universe of $\rela$. 

Let us view  $C_{\Sigma}^{\rela}$ as a mapping from the power set of 
$\adom{\rela}$ to $\Q^+$.
We define the following width notions. For an open structure $(\rela,\tup{a})$ over signature $\sigma$, a set $\Sigma$ of keys over $\sigma$, 
%the hypergraph $\H(\rela) = (V(H),E(H))$, 
and a tree decomposition $F=(T, \chi)$ of $\H(\rela,\tup{a})$, 
%and the mapping $C_{\Sigma}^{\rela} : 2^{V(H)} \ra \Q^{+}$, 
let $C_{\Sigma}^{\rela}\wid(F) = \max \{ C_{\Sigma}^{\rela}(\chi(t)) ~|~ t \in V(T) \}$. Further, let $C_{\Sigma}^{\rela}\wid(\rela,\tup{a}) = \min_F C_{\Sigma}^{\rela}\wid(F)$, where $F$ ranges over all tree decompositions of $\H(\rela,\tup{a})$.
%Note that the order of elements in $\tup{a}$ does not impact the color number of an open structure $(\rela, \tup{a})$.
%, and thus, for the purpose of computing the color number for a subset $S$ of the universe of $\rela$, we can consider a canonical tuple $\tup{S}$ over the elements in $S$. 
%For an open structure $(\rela, \tup{a})$ over signature $\sigma$, potentially given with an associated set of keys $\Sigma$ over $\sigma$, we can thus define the $C_{\Sigma}^{\rela}\wid$ of $(\rela, \tup{a})$ as above.
%consider the hypergraph $\H(\rela)$ and a tree decomposition $F=(T = (V^T,E^T), \chi)$ of $\H(\rela)$. The \emph{color-width} of $F$ is defined as $\mathsf{color}\wid(F) = $

\OMIT{
Let $\gamma: 2^W \to \Q^+$ be a set function on a finite set $W$. Following Adler~\cite{Adler06-thesis},
we define the following width notions.
\begin{itemize}
\item When $F = (T,(B_t))$ is a tree decomposition where each bag is a subset of $W$, we define 
\[\gamma\wid(F) = \max \{ \gamma(B_t) ~|~ t \in V(T) \}.\]
(Note that the possible values of $\gamma\wid(F)$ are those in the set $\{ \gamma(U) ~|~ U \subseteq W \}$.)
\item When $H$ is a hypergraph with $V(H) \subseteq W$, we define
\[\gamma\wid(H) = \min_F \gamma\wid(F),\]
where the minimum is over all tree decompositions $F$ of $H$.
\end{itemize}

\new{
Following Adler~\cite{Adler08-tw-and-functional-dependencies}, we define the following width notions. For a hypergraph $H = (V(H),E(H))$, a tree decomposition $F=(T = (V(T),E(T)), \chi)$ of $H$ and a mapping $\gamma : 2^{V(H)} \ra \Q^{+}$, let $\gamma\wid(F) = \max \{ \gamma(\chi(t)) ~|~ t \in V(T) \}$. Further, let $\gamma\wid(H) = \min_F \gamma\wid(F)$, where the minimum is over all tree decompositions $F$ of $H$.
}

\new{
Following Adler~\cite{Adler08-tw-and-functional-dependencies}, we define the following width notions. For an open structure $(\rela,\tup{a})$, its hypergraph $\H(\rela,\tup{a}) = (V(H),E(H))$, a tree decomposition $F=(T = (V(T),E(T)), \chi)$ of $\H(\rela,\tup{a})$, and a mapping $\gamma : 2^{V(H)} \ra \Q^{+}$, let $\gamma\wid(F) = \max \{ \gamma(\chi(t)) ~|~ t \in V(T) \}$. Further, let $\gamma\wid((\rela,\tup{a})) = \min_F \gamma\wid(F)$, where the minimum is over all tree decompositions $F$ of $\H(\rela,\tup{a})$.
}
}

\section{Degree Notions and Main Theorem Statement}
\label{sect:main-theorem-statement}

%\% Repeat some motivation from introduction re output and intermediate degree?
%In this section, we present our main theorem, and make some preparatory technical observations.  
%We begin by presenting the notion of \emph{intermediate degree}, along with a closely related notion called \emph{output degree}---first on an intuitive level, and then formally.

\subsection{Degree notions}

A key measure in the context of query optimization is the size of the output of a query. Here, we are interested in an asymptotic measure: the output degree of a query plan $p$, which, informally, is the smallest possible degree of a polynomial bounding the output size of $p$ on any structure~$\reld$ satisfying given constraints, as a function of the maximum relation size $M_\reld$; the formal definition is the following.

%We say a query plan $p$ has output degree $d$ if there exists a degree $d$ polynomial bounding the output size of $p$ on any structure~$\reld$, as a function of the maximum relation size $M_\reld$---and there is no lower degree polynomial providing such a bound. 

\begin{definition}[Output Degree]
    Let $\Sigma$ be a set of keys over signature~$\sigma$. Let $p$ be a plan over $\sigma$, and let $d \in \Q$. The plan~$p$ has \emph{output degree $\leq d$} if there exists a function $h: \N \to \N$, with $h(M) \in O(M^d)$, such that for every structure $\reld$ satisfying $\Sigma$, it holds that $|\out(p,\reld)| \leq h(M_\reld)$. 
    We say that $p$ has \emph{output degree $d$} if it has output degree $\leq d$, but for all $\epsilon > 0$, it does not have output degree $\leq (d - \epsilon)$. 
\end{definition}

Instead of just focusing on some plan as a whole, we are interested in bounding also the size of the output of every subplan of the original plan, or in other words, bounding the size of intermediate results obtained during the execution of the plan. 
To do so, we say a plan $p$ has intermediate degree $d$ if there exists a degree $d$ polynomial bounding the sizes of all intermediate relations computed during the execution of $p$ on any structure $\reld$ satisfying given constraints, as a function of $M_\reld$---and again, there is no lower degree polynomial providing such a bound.  

\begin{definition}[Intermediate Degree]
    Let $\Sigma$ be a set of keys over signature $\sigma$. Let $p$ be a plan over $\sigma$, and let $d \in \Q$. The plan $p$ has \emph{intermediate degree $\leq d$} if there exists a function $h: \N \to \N$, with $h(M) \in O(M^d)$, such that for every structure $\reld$ satisfying $\Sigma$, and for every subplan $q$ of $p$, it holds that $|\out(q,\reld)| \leq h(M_\reld)$.
    We say that $p$ has \emph{intermediate degree $d$} if it has intermediate degree $\leq d$, but for all $\epsilon > 0$, it does not have intermediate degree $\leq (d - \epsilon)$.
\end{definition}

Note that the above definitions of output degree and intermediate degree are with respect to the keys $\Sigma$, but we omit $\Sigma$ when it is clear from context for the sake of readability.
%We say that a plan $p'$ has the \emph{best possible intermediate degree} if no other semantically equivalent plan achieves a strictly lower intermediate degree, that is, $p'$ has the lowest intermediate degree among all semantically equivalent plans.
Let $\Sigma$ be a set of keys.
We say that a plan $p$ has the \emph{SPJ-best possible intermediate degree} if there exists $d \in \Q$ such that $p$ has intermediate degree $d$, and for every SPJ plan $p'$ that is $\Sigma$-semantically equivalent to $p$, there is no value $\epsilon > 0$ such that the plan $p'$ has intermediate degree $\leq (d-\epsilon)$.
In other words, no $\Sigma$-semantically equivalent SPJ plan achieves an intermediate degree that is strictly lower than that of $p$. 
Analogously, we say that a plan $p$ has the \emph{SPJU-best possible intermediate degree} if there exists $d \in \Q$ such that $p$ has intermediate degree $d$, and for every SPJU plan $p'$ that is $\Sigma$-semantically equivalent to $p$, there is no value $\epsilon > 0$ such that the plan $p'$ 
 has intermediate degree $\leq (d-\epsilon)$.

In the appendix, we give a number of examples of plans along with computations of the output degree and the best possible intermediate degree.

\subsection{Main theorem statement}

We are now ready to formulate our main theorem. 

\begin{theorem}[Main theorem] \label{thm:main-spju}
%Let $\Sigma$ be a set of unary keys over signature $\sigma$. 
%There exists an exponential-time algorithm that, given an SPJU plan $p$ over $\sigma$, outputs a SPJU plan $p'$ satisfying the following properties:
There exists an exponential-time algorithm that, given a signature~$\sigma$, a set $\Sigma$ of unary keys over $\sigma$, and an SPJU plan $p$ over $\sigma$, outputs a SPJU plan $p'$ satisfying the following properties:
\begin{itemize}
\item $p'$ is $\Sigma$-semantically equivalent to $p$,
\item $p'$ has the SPJU-best possible intermediate degree.
\end{itemize}
\end{theorem}

By an exponential-time algorithm, we mean an algorithm that runs within time $O(2^{Q(n)})$, where $n$ is the input size and $Q$ is a polynomial. 

The principal argumentation needed to establish our main theorem is located in the case where this theorem is specialized to SPJ queries.
In this case, which we present next, we can describe the intermediate degree of the output plan quite directly, but require the following notion to do so.
A \emph{$p$-representation} of a plan $p$ is an open structure $(\rela,\tup{a})$ where, for any structure $\reld$, we have $\out(p,\reld) = \homs(\rela,\tup{a},\reld)$. A formal treatment of $p$-representation and related notions utilized to show our results will follow in Subsection~\ref{ssect:query-structures}.

\begin{theorem} \label{thm:main-spj}
%Let $\Sigma$ be a set of unary keys over signature $\sigma$. 
%There exists an exponential-time algorithm that, given an SPJ plan $p$ over $\sigma$, outputs a SPJ plan $p'$ satisfying the following properties:
There exists an exponential-time algorithm that, given a signature~$\sigma$, a set $\Sigma$ of unary keys over $\sigma$, and an SPJ plan $p$ over $\sigma$, outputs a SPJ plan $p'$ satisfying the following properties:
\begin{itemize}
\item $p'$ is $\Sigma$-semantically equivalent to $p$,
\item $p'$ has the SPJ-best possible intermediate degree.
\end{itemize}
Moreover, suppose that $p$ is an SPJ plan over $\sigma$ and 
that $(\rela,\tup{a})$ is a $p$-representation; let $(\relc,\tup{c})$ be a core of $\textrm{Chase}_\Sigma(\rela,\tup{a})$, and let $d = C_{\Sigma}^{\relc}\wid(\relc,\tup{c})$.
Then, when the algorithm is given $p$ as input, the output plan $p'$ has intermediate degree $d$, and also has size polynomial in the size of $p$.
\end{theorem}

Some remarks are in order. First, note that in Theorem~\ref{thm:main-spj}, the intermediate degree of each plan~$p'$ can be described quite explicitly.
%Note also, that given a plan $p'$ output by the algorithm of Theorem~\ref{thm:main-spj}, its intermediate degree can be computed in polynomial time;
Note also that given an SPJ plan $p$, the intermediate degree $d$ of the plan $p'$ given by the algorithm of Theorem~\ref{thm:main-spj} is computable in exponential time from $p$; 
this follows directly from the exponential-time computability of $(\relc,\tup{c})$ from $p$ and the polynomial time computability of the color number, which follows from results in Sections~3 and 4 from~\cite{GottlobLeeValiant12-size-tw-bounds}.
In turn, given an SPJU plan $p$, the intermediate degree of a plan $p'$ output by Theorem~\ref{thm:main-spju}'s algorithm is also computable in exponential time from $p$, which follows from the above fact and the construction of $p'$ in the proof of Theorem~\ref{thm:main-spju}.
We also wish to remark that, while our main theorem only considers constraints which are unary keys, this limitation arises from the literature on output degree bounds;  extending the literature's bounds is likely to be challenging, for this would constitute significant progress towards a long-standing open problem in information theory~\cite[Section 1]{GottlobLeeValiant12-size-tw-bounds}.  %\htodo{can be give a pointer into this reference?}

\subsection{Algorithmic aspects}

%\htodo{this section has to be written properly.}

%Towards establishing our positive results, that is, presenting SPJ plans satisfying certain upper bounds, we rely on the notion of well-behaved plans.
%Let us make a couple of remarks on our formalization of plans.
%First, observe that selection is a special case of join, in particular, it corresponds to a $1$-way join.
%Second, note that we allow multiway joins (following~\cite{AtseriasGroheMarx13-size-bounds}). These are used to establish our positive results, where we present plans satisfying certain upper bounds. As regards such results, we view the sizes of relations defined by these multiway joins as faithful to the time complexity of computing these relations in accordance with the plans, since each plan synthesized to establish positive results will be \emph{well-behaved} in the following sense.
We here discuss algorithms for evaluating \emph{well-behaved} plans in time that is close to a plan's intermediate degree; this issue was motivated in the introduction.
We define a plan $p$ to be \emph{well-behaved} 
if for each subplan $q$ that is a join $\Join_\theta(p_1,\ldots,p_\ell)$ of plans $p_1,\ldots,p_\ell$ having arities $m_1,\ldots,m_\ell$,
%letting all indices range over the values in $[m_1 + \cdots + m_\ell]$,
the following is satisfied, when we let
 $s$ denote the sum $m_1 + \cdots + m_\ell$:
there exists an index $k \in [s]$ such that for each index $i \in [s] \setminus ([m_1] \cup \{ k \})$, there exists an index $j \in [m_1] \cup \{ k \}$ such that $(i = j) \in \theta$.
In words, 
each join adds at most one column of new information 
to the plan $p_1$: letting $J$ denote the set containing
(1) the indices in $[m_1]$, which correspond to $p_1$,
and (2) another chosen index $k$; it must be that every 
index $i \notin J$ is equated with an index in $J$.
% and another index $k$, every other index $i$ is equated with one of 
This property allows for the efficient computation of such joins and, thus, well-behaved plans; we will make this formal in Proposition~\ref{prop:well-behaved}. 
First, we point out that our minimization algorithms can be implemented in such a way that they always output well-behaved plans.
\begin{theorem}
The algorithms of Theorems \ref{thm:main-spju} and \ref{thm:main-spj} 
can be implemented in such a way so that each plan $p'$ output by them is well-behaved.
\end{theorem}

This easily follows from the fact that the algorithm of Theorem~\ref{thm:plan-synthesis} always outputs well-behaved SPJ plans and that the SPJU plan output by the algorithm of Theorem~\ref{thm:main-spju} is simply the union of well-behaved SPJ plans, which is also a well-behaved plan.

\begin{comment}
\begin{proof}
In the case of SPJ queries, this theorem follows from the observation that every plan output by the algorithm of Theorem~\ref{thm:main-spj} comes from 
the algorithm of Theorem~\ref{thm:plan-synthesis}, which always outputs well-behaved plans. In the case of SPJU queries, this theorem follows from the observation that every plan output by the algorithm of Theorem~\ref{thm:main-spju} is the union of plans output by the algorithm of Theorem~\ref{thm:main-spj}.
\end{proof}    
\end{comment}

%Note that the algorithm of Theorem~\ref{thm:main-spj} is \emph{meta} as an algorithm:

%The algorithm provided by Theorem~\ref{thm:main-spj} passes from a plan $p$ to a second plan $p'$ with the best possible intermediate degree. A natural question that arises is how much time it takes to evaluate $p'$ on a given structure $\reld$.

The following proposition shows that, given a well-behaved plan, apart from a multiplicative overhead depending on the plan, the time needed to algorithmically evaluate the plan on a database~$\reld$ is at most $|\adom{\reld}| * M_\reld^{d}$, where $d$ is the intermediate degree of the plan---so in essence, the intermediate degree of the plan dictates the polynomial degree of evaluation (as a function of $M_\reld$). 
To argue this, we in essence use the argumentation for \cite[Theorem 6]{AtseriasGroheMarx13-size-bounds}, and we assume a standard random access model of computation, with a uniform cost measure.

\begin{proposition} \label{prop:well-behaved}
There exists an algorithm that, given a well-behaved SPJU plan $p$ of intermediate degree $\leq d$ and a structure $\reld$, evaluates $p$ on $\reld$ in time $O(|p|^2 * |\adom{\reld}| * M_\reld^{d})$.
%, where $R$ is a computable mapping sending each SPJ plan $q$ to a natural number.
\end{proposition}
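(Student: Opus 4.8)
The plan is to give an algorithm that evaluates $p$ bottom-up, materializing the relation $\out(q,\reld)$ for each subplan $q$ exactly once (processing subplans in an order consistent with the subplan-containment relation, and reusing the stored relations at parent nodes). Since $p$ has intermediate degree $\leq d$, there is a fixed $h(M) \in O(M^d)$ with $|\out(q,\reld)| \leq h(M_\reld)$ for every subplan $q$; thus every relation we materialize has size $O(M_\reld^{d})$. As there are at most $|p|$ distinct subplans, the stated bound $O(|p|^2 \cdot |\adom{\reld}| \cdot M_\reld^{d})$ will follow once I show that each individual subplan can be evaluated, from the already-computed relations of its children, in time $O(|p| \cdot |\adom{\reld}| \cdot M_\reld^{d})$. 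To support the evaluation, when each relation $\out(q,\reld)$ is produced I would also build an index on it (e.g.\ by hashing, or by a trie on its tuples) so that membership of a given tuple can be tested in time proportional to the arity, which is at most $|p|$; building such an index costs time linear in the size of the relation times its arity, well within budget.

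The basic, select, and project nodes are routine. For a basic plan $R$ we read off $R^{\reld}$, of size at most $M_\reld$. For $\sigma_{\theta}(p')$ we scan $\out(p',\reld)$ and retain each tuple satisfying the identifications in $\theta$, spending $O(|p|)$ time per tuple. For $\pi_{j_1,\ldots,j_n}(p')$ we scan $\out(p',\reld)$, project each tuple onto the chosen coordinates, and insert it into the output index to remove duplicates; again $O(|p|)$ time per tuple. Since $|\out(p',\reld)| \leq O(M_\reld^{d})$, each of these nodes costs $O(|p| \cdot M_\reld^{d})$, comfortably within the per-node budget.

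The crux is the join $q = \Join_{\theta}(p_1,\ldots,p_\ell)$, where the well-behaved property is used to avoid enumerating the full cross product (whose size could be $\prod_i |\out(p_i,\reld)|$). Let $m_1,\ldots,m_\ell$ be the arities, $s = m_1 + \cdots + m_\ell$, and let $k \in [s]$ be the index witnessing well-behavedness: every position $i \in [s] \setminus ([m_1] \cup \{ k \})$ is directly equated through $\theta$ to some position in $[m_1] \cup \{ k \}$. Consequently, any tuple of $\out(q,\reld)$ is completely determined by its restriction to the positions $[m_1]$ (which must form a tuple of $\out(p_1,\reld)$) together with the single value at position $k$. The algorithm therefore enumerates all pairs $(\tup{t}_1, v)$ with $\tup{t}_1 \in \out(p_1,\reld)$ and $v \in \adom{\reld}$ (when $k \in [m_1]$, the value at position $k$ is already fixed by $\tup{t}_1$ and $v$ is not iterated); for each pair it constructs the unique candidate tuple $\tup{t}$ over $[s]$ obtained by placing $\tup{t}_1$ on $[m_1]$, placing $v$ on position $k$, and copying each remaining position from its $\theta$-equated position in $[m_1] \cup \{ k \}$. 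It then verifies in $O(|p|)$ time that $\tup{t}$ satisfies all identifications of $\theta$ and that, for each block $i \in [\ell]$, the restriction of $\tup{t}$ to $p_i$'s positions lies in $\out(p_i,\reld)$, using the precomputed indices; the valid tuples $\tup{t}$ are exactly $\out(q,\reld)$, and distinct admissible pairs yield distinct candidate tuples (their restrictions to positions $[m_1]$, and to position $k$, differ), so no duplicates arise. The number of enumerated pairs is $|\out(p_1,\reld)| \cdot |\adom{\reld}| \leq O(M_\reld^{d} \cdot |\adom{\reld}|)$, and each is handled in $O(|p|)$ time, giving $O(|p| \cdot |\adom{\reld}| \cdot M_\reld^{d})$ for the join node and matching the per-node budget.

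Summing the per-node costs over the at most $|p|$ subplans yields the claimed overall bound $O(|p|^2 \cdot |\adom{\reld}| \cdot M_\reld^{d})$. I expect the join step to be the main obstacle: its correctness rests on the well-behaved property guaranteeing that fixing the values on $[m_1] \cup \{ k \}$ determines the whole tuple (so the candidate set has size $|\out(p_1,\reld)| \cdot |\adom{\reld}|$ rather than the full product), and its running-time claim rests on being able to test membership in each child relation in time proportional to its arity, which is precisely where the auxiliary indices are used. The remaining cases and the final summation are then straightforward bookkeeping.
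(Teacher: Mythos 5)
Your proof is correct, and its overall skeleton --- bottom-up evaluation of each subplan once, within a per-subplan budget of $O(|p| \cdot |\adom{\reld}| \cdot M_\reld^{d})$ justified by the intermediate-degree bound on every $\out(q,\reld)$ --- is exactly the paper's. The one place you genuinely diverge is the join step, which is also where the paper is least self-contained. The paper handles $\Join_{\theta}(q_1,\ldots,q_\ell)$ by rewriting it as a left-deep chain of binary joins $\Join_{\theta_{\ell-1}}(\ldots\Join_{\theta_1}(q_1,q_2)\ldots,q_\ell)$, observing that well-behavedness keeps every intermediate relation inside (a copy of) $\out(q_1,\reld) \times \adom{\reld}$, and then citing the join algorithm of Flum, Frick, and Grohe to evaluate each binary join in time $O(|\adom{\reld}| \cdot M_\reld^{d})$. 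You instead give a direct generate-and-test procedure: enumerate the candidate pairs $(\tup{t}_1, v) \in \out(p_1,\reld) \times \adom{\reld}$, use the $\theta$-equalities to reconstruct the unique candidate tuple determined by the positions $[m_1] \cup \{k\}$, and verify $\theta$ and block-wise membership via precomputed indices; your injectivity observation (distinct pairs yield distinct candidates) cleanly rules out duplicates. Both arguments rest on the same key consequence of well-behavedness --- that fixing the values on $[m_1] \cup \{k\}$ determines the whole output tuple, so the search space is $|\out(p_1,\reld)| \cdot |\adom{\reld}|$ rather than the full cross product --- but your realization is more elementary and self-contained (no external join algorithm needed, and the absence of intermediate blow-up is explicit rather than delegated), at the cost of spelling out index-construction details that the citation would otherwise absorb. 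Note also that both your argument and the paper's share the same mild looseness: the hidden constant in the claimed $O(|p|^2 \cdot |\adom{\reld}| \cdot M_\reld^{d})$ bound inherits the plan-dependent constant from the function $h$ witnessing intermediate degree $\leq d$, so this is not a defect of your write-up relative to the paper.
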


To argue this proposition, we consider an algorithm that evaluates the plan $p$ inductively. It suffices to show that each subplan $q\in\subplans(p)$ can be computed in time $O(|p| * |\adom{\reld}| * M_\reld^{d})$. 
By the definition of intermediate degree, each subplan $q$ has at most $O(M_\reld^d)$ answers (tuples in $\out(p,\reld)$). When $q$ is a basic plan, the required bound holds immediately. When $q$ is a select or project subplan, i.e., of the form $\sigma_{\theta}(q')$ or $\pi_{j_1,\ldots,j_n}(q')$, respectively, the answers of $q$ can be naturally derived from the answers of the immediate subplan $q'$, within the given time bound.  When $q$ is a union subplan, i.e., of the form $q_1\cup q_2$, it is immediate that the answers of $q$ can be derived by combining the answers of $q_1$ and $q_2$ and eliminating duplicates, which is feasible in the given time bound.
The key point is that, when $q$ is a join subplan, by well-behavedness, it is of the form $\Join_\theta(q_1,\ldots,q_\ell)$ and $\out(q,\reld)$ can be computed by maintaining a relation that is a subset of $\out(q_1,\reld)\times \adom{\reld}$. When evaluating the multiway join, we can thus evaluate the sequence of binary joins $\Join_{\theta_{\ell-1}}(\ldots(\Join_{\theta_2}(\Join_{\theta_1}(q_1,q_2),q_3),\ldots),q_\ell)$, where $\theta_1,\ldots,\theta_{\ell-1}$ are the corresponding equalities from $\theta$ for the join partners in question. Each binary join can be evaluated in time $O(|\adom{\reld}| * M_\reld^{d})$ by~\cite{FlumFrickGrohe02-query}, and in total, there are at most $|p|$ many joins, giving us the desired runtime of $O(|p| * |\adom{\reld}| * M_\reld^{d})$ to evaluate the subplan $q$.

%Again by intermediate degree, the result of each subquery, denoted $|q_1|$ and $|q_2|$, is of size at most $O(M_\reld^d)$. 
%Crucially, the well-behavedness of $p$ guarantees that the output of the join is of arity at most one more than the bigger arity of either subplan. Thus, the size of the output of $q$, denoted $|q|$, is at most $O(M_\reld * M_\reld^d)$. Together with the fact from~\cite{FlumFrickGrohe02-query} that a join $q = \Join(q_1,q_2)$ can be evaluated in time $O(|q_1| + |q_2| + |q|)$, and the fact that the check for $\theta$ can be evaluated in the required time as before, we have that $q$ can be computed in time $O(M_\reld^{d+1})$, as required.

%of plans $p_1$ and $p_2$ having arity $m_1$ and $m_2$, respectively, such that there exists an index $k\in\{m_{m_1}+1,\ldots,m_1+m_2\}$ such that for all $i\in \{m_{m_1}+1,\ldots,m_1+m_2\}\setminus \{k\}$ there exists $j\in [m_1]$ where $(i=j)\in\theta$.  In words, each join is a binary join of two plans $p_1$ and $p_2$, and the second plan $p_2$ is only allowed to add up to one column of new information.
%joins together $p_1, \ldots, p_\ell$, by well-behavedness,
%there is an index $i$ 

\section{SPJ Plans}
\label{sect:spj}

%\mtodo{Shorten subsections, move detailed proofs to appendix}

This section is dedicated to proving Theorem~\ref{thm:main-spj}, which, apart from being of interest in its own right, serves as a building block towards proving our main result in the next section.

Proving Theorem~\ref{thm:main-spj} essentially requires us to show two statements: (i) we can synthesize a plan, which is $\Sigma$-equivalent to the original one and has intermediate degree $\leq d$ and (ii) any plan which is $\Sigma$-equivalent to the original one has a subplan with output degree at least $d$. We will formalize and prove (i) and (ii) in Theorems~\ref{thm:plan-synthesis} and~\ref{thm:lowerbound-intermediate-sjp-color}, respectively, in Subsection~\ref{ssect:bounds}.

To show the theorem, we make two more basic observations on the introduced degree notions.
%In the remainder of this section, we make some basic observations on the introduced degree notions that will aid us in the sequel. 
We begin by presenting a definition that will aid us in arguing that a plan has a certain output degree. We say that $p$ has \emph{output degree $\geq d$} if there exists a function $g: \N \to \N$ with $g(M) \in \Omega(M^d)$ and there exists an infinite sequence $(\reld_n)_{n\geq1}$ of structures with $M_{\reld_n}$ unbounded such that (for all $n \geq 1$) $|\out(p,\reld_n)| \geq g(M_{\reld_n})$. We then have the following lemma.

\begin{lemma} \label{lem:degree-lower-bound}
Let $d \geq 0$. If a plan has output degree $\geq d$, then for any $\epsilon > 0$, it does not have output degree $\leq (d-\epsilon)$.
\end{lemma}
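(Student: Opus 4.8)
The plan is to argue by contradiction, combining the two asymptotic bounds on a single witnessing sequence. Suppose that $p$ has output degree $\geq d$ and, for contradiction, that there is some $\epsilon > 0$ for which $p$ also has output degree $\leq (d - \epsilon)$. Unpacking the first hypothesis, I obtain a function $g$ with $g(M) \in \Omega(M^d)$ together with an infinite sequence $(\reld_n)_{n \geq 1}$ of ($\Sigma$-satisfying) structures with $M_{\reld_n}$ unbounded and $|\out(p,\reld_n)| \geq g(M_{\reld_n})$ for all $n$. Unpacking the second hypothesis, I obtain a function $h$ with $h(M) \in O(M^{d-\epsilon})$ such that $|\out(p,\reld)| \leq h(M_{\reld})$ for every $\Sigma$-satisfying structure $\reld$; in particular, this bound applies to each $\reld_n$.

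The key step is then to chain these inequalities along the witnessing sequence: for every $n$ we have $g(M_{\reld_n}) \leq |\out(p,\reld_n)| \leq h(M_{\reld_n})$. I next make the $\Omega$ and $O$ estimates explicit, obtaining constants $c_1 > 0$ and $M_0$ with $g(M) \geq c_1 M^d$ for $M \geq M_0$, and constants $c_2 > 0$ and $M_1$ with $h(M) \leq c_2 M^{d-\epsilon}$ for $M \geq M_1$. Since $M_{\reld_n}$ is unbounded, I may select an index $n$ for which $M_{\reld_n}$ simultaneously exceeds $M_0$, $M_1$, and $(c_2/c_1)^{1/\epsilon}$. For this $n$ the chained inequalities give $c_1 M_{\reld_n}^{d} \leq c_2 M_{\reld_n}^{d-\epsilon}$, equivalently $M_{\reld_n}^{\epsilon} \leq c_2/c_1$, which contradicts the choice of $n$. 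This completes the argument.

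I do not anticipate a genuine obstacle here; the statement is essentially the observation that a polynomial lower bound of degree $d$ holding on an infinite, size-unbounded family of structures is incompatible with a strictly smaller-degree polynomial upper bound that holds universally. The one point requiring care is ensuring that the upper bound supplied by output degree $\leq (d-\epsilon)$ is genuinely applicable to the structures $\reld_n$ witnessing output degree $\geq d$; this hinges on those witnessing structures satisfying $\Sigma$, which I take to be implicit in the definition, since output degree is defined purely with respect to $\Sigma$-satisfying structures. Granting this, the remainder is a routine manipulation of the definitions of $\Omega(\cdot)$ and $O(\cdot)$, using $\epsilon > 0$ together with the unboundedness of $M_{\reld_n}$ to force the contradiction.
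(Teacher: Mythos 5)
Your proof is correct and follows essentially the same route as the paper's: both argue by contradiction, chaining $g(M_{\reld_n}) \leq |\out(p,\reld_n)| \leq h(M_{\reld_n})$ along the unbounded witnessing sequence to show that a function in $\Omega(M^d)$ cannot be dominated by one in $O(M^{d-\epsilon})$ at infinitely many unboundedly large arguments. You merely make explicit the constants and threshold that the paper compresses into ``$g \leq h$ on infinitely many values, a contradiction,'' and your side remark that the witnessing structures must satisfy $\Sigma$ (implicit in the definition) is a fair and accurate reading.
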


Next, we characterize the intermediate degree of a plan in terms of the output degrees of subplans, in the following sense.

\begin{lemma} \label{lem:intermediate-output}
Let $d \geq 0$. A plan $p$ has intermediate degree $\leq d$ if and only if each subplan $q$ of $p$ has output degree $\leq d$.
\end{lemma}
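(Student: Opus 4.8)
The plan is to prove the biconditional by treating the two implications separately, after observing that the sole difference between "$p$ has intermediate degree $\leq d$" and "each subplan $q$ has output degree $\leq d$" is a matter of quantifier order: intermediate degree demands a \emph{single} bounding function $h$ that works simultaneously for all subplans $q \in \subplans(p)$, whereas output degree of the subplans allows a \emph{separate} bounding function $h_q$ for each $q$. The bridge between these two formulations is the finiteness of $\subplans(p)$, which follows from the inductive definition of $\subplans$ on the finite syntactic structure of $p$. I would state this finiteness explicitly at the outset, as it is the only nontrivial ingredient.

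For the forward direction, I would unfold the definition of intermediate degree: assuming $p$ has intermediate degree $\leq d$, fix a witnessing function $h$ with $h(M) \in O(M^d)$ such that $|\out(q,\reld)| \leq h(M_\reld)$ for every subplan $q$ and every $\reld$ satisfying $\Sigma$. For any fixed subplan $q$, the same $h$ then directly witnesses that $q$ has output degree $\leq d$. This direction is immediate and requires no construction.

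For the backward direction, I would assume each subplan $q \in \subplans(p)$ has output degree $\leq d$, witnessed by some $h_q$ with $h_q(M) \in O(M^d)$, and then define the uniform bound $h(M) = \max_{q \in \subplans(p)} h_q(M)$. Since $\subplans(p)$ is finite, this maximum is well-defined, and for every $\reld$ satisfying $\Sigma$ and every subplan $q$ we get $|\out(q,\reld)| \leq h_q(M_\reld) \leq h(M_\reld)$, so $h$ is the required simultaneous bound. The remaining task is to confirm that $h$ itself lies in $O(M^d)$.

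The one place where care is needed---and the closest thing to an obstacle---is verifying that a finite pointwise maximum of functions, each in $O(M^d)$, is again in $O(M^d)$. This is routine: if $h_q(M) \leq c_q M^d$ for all $M \geq M_q$, then taking $c = \max_q c_q$ and $M_0 = \max_q M_q$ (both finite maxima) yields $h(M) \leq c M^d$ for all $M \geq M_0$. I would emphasize that the finiteness of $\subplans(p)$ is essential here, since an infinite family of $O(M^d)$ functions need not admit a common constant $c$, and hence its pointwise supremum could fail to be $O(M^d)$. With this closure property in hand, both directions are complete.
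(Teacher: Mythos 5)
Your proof is correct and follows essentially the same route as the paper's: the forward direction is immediate, and the backward direction combines the finitely many per-subplan bounds $h_q$ into one uniform bound using the finiteness of $\subplans(p)$. The only (immaterial) difference is that you take $h = \max_q h_q$ where the paper takes $h = \sum_q h_q$; both remain in $O(M^d)$ for the reason you spell out.
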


%The proofs of both lemmas can be found in the appendix. 
Theorem~\ref{thm:main-spj} can then be argued as follows: Given a plan $p$, we first compute the open structure $(\rela,\tup{a})$, which can be done in polynomial time by Theorem~\ref{thm:p-rep}. Then, the algorithm of Theorem~\ref{thm:main-spj} computes $\textrm{Chase}_\Sigma(\rela,\tup{a})$, again in polynomial time.  
It then computes the core of this open structure; this can be performed in exponential time. Let $(\relc,\tup{c})$ be this core.
Note that the size of $(\relc,\tup{c})$ is bounded above by the size of $\textrm{Chase}_\Sigma(\rela,\tup{a})$, since the core of a structure is a substructure of the structure. Let $d = C_{\Sigma}^{\relc}\wid(\relc,\tup{c})$. 
From $(\relc,\tup{c})$, we can then synthesize the desired plan $p'$ of intermediate degree $\leq d$ by statement (i), formalized in Theorem~\ref{thm:plan-synthesis}, from above.
%applying the algorithm of Theorem~\ref{thm:plan-synthesis}.
Consider any plan $p''$ that is $\Sigma$-semantically equivalent to $p'$. By statement (ii), formalized in Theorem~\ref{thm:lowerbound-intermediate-sjp-color},
%Theorem~\ref{thm:lowerbound-intermediate-sjp-color}, 
$p''$ has a subplan with output degree $\geq d$. For any $\epsilon > 0$, by Lemma~\ref{lem:degree-lower-bound}, this subplan does not have output degree $\leq ( d - \epsilon)$, and so by Lemma~\ref{lem:intermediate-output}, $p''$ does not have intermediate degree $\leq ( d - \epsilon)$. 
Hence, we have computed a $\Sigma$-equivalent plan $p'$ of polynomial size and of intermediate degree $d$, which is the best possible intermediate degree.

%It allows us to upper bound the number of homomorphisms from a structure $\rela$ to a structure $\reld$, with respect to a subset $S$ of the universe $A$ of $\rela$.

In the following, we will formalize the connection from plans to open structures and auxiliary notions, which are needed to prove the theorems formalizing statements (i) and (ii) in Subsection~\ref{ssect:bounds}.

%%%%%%%%%%%%%%%%%%%%%%%%%%%%%%%%%%%%%%%%%%%%%%%%%%%

\subsection{Relating Queries to Structures}
\label{ssect:query-structures}
There is a known correspondence between Boolean SPJ plans and structures~\cite{ChandraMerlin77-optimal}; by a \emph{Boolean} SPJ plan, we mean one that evaluates to true or false (or alternatively, a relation of arity $0$) on a given structure $\reld$. This correspondence yields that each such plan can be translated to a structure $\rela$ such that, for an arbitrary structure $\reld$, the plan is true on $\reld$ if and only if there is a homomorphism from $\rela$ to $\reld$.
We prove a version of this correspondence in Theorem~\ref{thm:p-rep} for SPJ plans that can output relations of any finite arity, and not just relations of arity $0$. In order to formulate our version, we utilize the notion of open structure.

%The output degree of an SPJ plan can be characterized by a known result.
\begin{definition}[$p$-Representation]
    Let $p$ be an SPJ plan over signature $\sigma$. A \emph{$p$-representation} is an open structure $(\rela,\tup{a})$ over $\sigma$, where for each structure $\reld$ over $\sigma$, it holds that $\out(p,\reld) = \homs(\rela,\tup{a},\reld)$.
\end{definition}
%For an SPJ query plan $p$ over $\sigma$, we define a \emph{$p$-representation} as an open structure $(\rela,\tup{a})$ over $\sigma$ where the structure $\rela$ does not have any isolated elements, and, for each structure $\reld$ over $\sigma$, it holds that $\out(p,\reld) = \homs(\rela,\tup{a},\reld)$. 
%
%In this translation from a query plan $p$ to an open structure $(\rela,\tup{a})$, $\rela$ comes from the known translation from Boolean queries to structures, while $\tup{a}$ essentially collects the free variables of $p$. %; we will make this precise in a moment.

Apart from showing that any SPJ plan $p$ can be translated to a $p$-representation, we also want to provide a tree decomposition of this open structure with certain additional properties, which we call $p$-decomposition.
%
%We begin by showing that any SPJ plan can be translated to an open structure along with a tree decomposition of that open structure's hypergraph.  
This allows us to relate plans to structural, decomposition-based measures.
While the translation that we present is, on a high level, along the lines of known translations from SPJ plans to conjunctive queries, we emphasize that our translation also provides a tree decomposition that will allow us to analyze the intermediate degree. In particular, for a given plan, each subplan will correspond to a vertex of the tree decomposition whose associated set of elements corresponds to the free variables of the subplan.  

We say a tree decomposition $(T,\chi)$ is \emph{rooted} if there is a distinguished node $v_0\in V(T)$, called the root of $T$. The formal definition of $p$-decomposition follows.

%Let $p$ be an SPJ query plan over $\sigma$.
%We show how to translate $p$ to a $p$-representation, 
%an open structure called
%a \emph{$p$-representation}, 
%along with a form of tree decomposition of this $p$-representation, called a \emph{$p$-decomposition}.  
%The precise definitions follow.
\begin{definition}[$p$-Decomposition]
    Let $p$ be an SPJ plan over $\sigma$ and $(\rela,\tup{a})$ a $p$-representation. A \emph{$p$-decomposition} of $(\rela,\tup{a})$ is a rooted tree decomposition $E = (T, \chi)$ of $(\rela,\tup{a})$ such that there exists
    \begin{itemize}
    %\item a distinguished node $v_0\in V(T)$, called the root of $T$,
    \item a bijection $\alpha: \subplans(p) \to V(T)$, where $\alpha(p)$ is the root of $T$, and 
    \item a map $\beta: \subplans(p) \to \adom{\rela}^*$, where $\beta(p)$ is equal to $\tup{a}$, 
    \end{itemize}
    where for each $q \in \subplans(p)$, we have that $\{ \beta(q) \} = \chi(\alpha(q))$.
    %and $q$ satisfies the containment property. 
\end{definition}

In the definition of $p$-decomposition, each subplan $q$ of $p$ corresponds to a vertex of the tree~$T$, via $\alpha$, and for each such subplan $q$, it holds that $\beta(q)$ is a tuple that lists, in some order (and possibly with repetition), the elements of the set to which $\chi$ maps q's tree vertex.
%the elements of the set assigned via $\chi$ of the tree vertex corresponding to $q$, namely, the tree vertex $\alpha(q)$.

Let $p$ be an SPJ plan over signature $\sigma$, $(\rela,\tup{a})$ be a $p$-representation, and $E = (T,\chi)$ be a rooted $p$-decomposition $E = (T,\chi)$ of $(\rela,\tup{a})$ with associated mappings $\alpha$ and $\beta$. For a vertex $v\in T$, let $S_v$ be the union of $\chi(u)$ over all descendants $u$ of $v$ (including $v$ itself) and let $\rela_{\leq v, E}$ denote the induced substructure $\rela[S_v]$.
%
%For an SPJ plan $p$ over signature $\sigma$, a $p$-representation $(\rela,\tup{a})$, and a rooted $p$-decomposition $E = (T,\chi)$ of $(\rela,\tup{a})$, 
We say a subplan $q \in \subplans(p)$ satisfies the \emph{containment property} w.r.t.~a $p$-decomposition if, for each structure $\reld$ over~$\sigma$, the relationship $\out(q,\reld) \supseteq \homs(\rela_{\leq \alpha(q),E}, \beta(q), \reld)$ holds. Further, we say a $p$-decomposition satisfies the containment property, if every subplan $q \in \subplans(p)$ satisfies the containment property w.r.t.~the $p$-decomposition.
%
%Here, when $\rela$ is a structure and $E = (T,\chi)$ is a rooted tree decomposition whose bags are subsets of the universe $\adom{\rela}$, and $s$ is a vertex of $T$, 
%we use $\rela_{\leq s, E}$ to denote the induced substructure $\rela[S]$, 
%where~$S$ is the union of $\chi(u)$ over all descendants $u$ of $s$ (including $s$ itself).
%
%We remark that
%, in the definition of $p$-decomposition, 
%requiring the containment property to hold on $p$ itself is a matter of formulation; in the case where we consider $q = p$, the containment property holds as a consequence of the definition of $p$-representation.
The containment property will allow us to lower bound the sizes of intermediate relations of $p$ using a $p$-representation along with a $p$-decomposition satisfying the containment property.

\begin{example}
    Consider the signature $\sigma$ containing the binary relations $R/2$ and $S/2$ and the SPJ plan $p = \Join_{(1=3)}(R, \pi_{1}(S))$ over $\sigma$. Then it can be readily seen, with structure $\rela = \{R(a,b),S(a,d)\}$ and tuple $\tup{a} = (a,b,a)$, that $(\rela,\tup{a})$ is a $p$-representation. Consider the following trees:
    
    \begin{figure}[h]
	   \centering
	   \includegraphics[width=.5\textwidth]{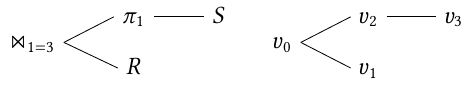}
	   %\caption{The plan $p$ and tree $T$}
	   %\label{fig:decomp-example}
    \end{figure}

    %\vspace{-0.1em}
    On the left, we have the plan $p$ represented as a tree (each node represents the subplan of its subtree), while on the right we have some tree $T$. We provide mappings $\alpha$, $\beta$, and $\chi$ such that $(T,\chi)$ is a $p$-decomposition via mappings $\alpha$ and $\beta$. First, $\alpha$ is the obvious bijection from the subplans of $p$ to $V(T)$ with $\alpha(p)=v_0$, $\alpha(R)=v_1$, $\alpha(\pi_1(S))=v_2$, and $\alpha(S)=v_3$. Further, we define $\beta$ as $\beta(p)=\tup{a}$, $\beta(R)=(a,b)$, $\beta(\pi_1(S))=(a)$, and $\beta(S)=(a,d)$. Lastly, define $\chi(\alpha(q))$ as $\{ \beta(q) \}$ for all subplans $q$ of $p$. It can be readily verified that $(T,\chi)$ is indeed a $p$-decomposition, and that it satisfies the containment property.
    %Then show the $p$-representation and $p$-decomposition with containment property.
\end{example}

We show with the following theorem, that a $p$-representation and a $p$-decomposition satisfying the containment property %with the above properties 
always exist and are efficiently computable, via an inductive argument over the structure of $p$ given in the appendix.

%If in addition 
%\[
%\out(p,\reld) = \homs(\rela\langle \approx \rangle, \beta(p), \reld).
%\]
%holds
%for each structure $\reld$ over $\sigma$,
%we say that the $p$-decomposition is a
%\emph{strong $p$-decomposition}.

\begin{theorem}
\label{thm:p-rep}
There exists a polynomial-time algorithm that, given an SPJ plan $p$ over signature $\sigma$, outputs a $p$-representation over $\sigma$ along with a $p$-decomposition of this $p$-representation which satisfies the containment property.
\end{theorem}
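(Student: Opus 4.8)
The plan is to proceed by structural induction on the SPJ plan $p$, building simultaneously the open structure $(\rela,\tup a)$, the rooted tree $T$ with labeling $\chi$, and the witnessing maps $\alpha,\beta$, while maintaining three invariants at once: (a) $(\rela,\tup a)$ is a $p$-representation, i.e.\ $\out(p,\reld)=\homs(\rela,\tup a,\reld)$ for every $\reld$; (b) $(T,\chi)$ is a tree decomposition of $\H(\rela,\tup a)$ for which $\alpha,\beta$ witness that it is a $p$-decomposition; and (c) the containment property holds at every subplan. I treat $p$ as its syntax tree, so that the elements of $\subplans(p)$ are distinct occurrences and $\alpha$ is well defined. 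In the base case $p=R$ I introduce fresh elements $v_1,\dots,v_{\ar(R)}$, set $R^{\rela}=\{(v_1,\dots,v_{\ar(R)})\}$ with all other relations empty, put $\tup a=(v_1,\dots,v_{\ar(R)})$, and take $T$ to be a single node whose bag is $\{\tup a\}$; then $\homs(\rela,\tup a,\reld)=R^{\reld}=\out(R,\reld)$ and every invariant holds trivially. In each inductive step I create a fresh root $v_0$, set $\alpha(p)=v_0$, $\beta(p)=\tup a$, and $\chi(v_0)=\{\tup a\}$, reusing the child data on all proper subplans; the constructors differ only in how $(\rela,\tup a)$ is formed and how $v_0$ is attached.

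For projection $p=\pi_{j_1,\dots,j_n}(p')$, with $\tup a'=(a'_1,\dots,a'_m)$, I keep $\rela=\rela'$ (so $\alpha,\beta,\chi$ are unchanged on the child) and merely reindex the output, taking $\tup a=(a'_{j_1},\dots,a'_{j_n})$; this gives $\homs(\rela,\tup a,\reld)=\pi_{j_1,\dots,j_n}(\homs(\rela',\tup a',\reld))=\out(p,\reld)$. I attach $v_0$ as parent of the old root, and since $\{\tup a\}\subseteq\{\tup a'\}=\chi'(\alpha'(p'))$, connectivity and coverage are immediate. Because $\out(\Join_\theta(p'),\reld)=\sigma_\theta(\out(p',\reld))$, selection $\sigma_\theta(p')$ is exactly the single-operand instance ($\ell=1$) of the join construction; it therefore suffices to treat projection and join.

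For the join $p=\Join_\theta(p_1,\dots,p_\ell)$ I first take the disjoint union $\rela^\circ=\rela_1\uplus\cdots\uplus\rela_\ell$ with concatenated tuple $\tup a^\circ=\tup a_1\cdots\tup a_\ell$; since homomorphisms from a disjoint union factor componentwise, $\homs(\rela^\circ,\tup a^\circ,\reld)=\out(p_1,\reld)\times\cdots\times\out(p_\ell,\reld)$. I then quotient: let $\sim$ be the equivalence on $\adom{\rela^\circ}$ generated by $\{a^\circ_j\sim a^\circ_k : (j{=}k)\in\theta\}$, put $\rela=\rela^\circ/{\sim}$ with quotient map $\mu$, and set $\tup a=\mu(\tup a^\circ)$. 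By the universal property of the quotient, homomorphisms out of $\rela$ correspond exactly to homomorphisms $h'$ out of $\rela^\circ$ with $h'(a^\circ_j)=h'(a^\circ_k)$ for all $(j{=}k)\in\theta$, so $\homs(\rela,\tup a,\reld)=\sigma_\theta(\homs(\rela^\circ,\tup a^\circ,\reld))=\out(p,\reld)$. The tree $T$ is the disjoint union of the child trees with $v_0$ made adjacent to every child root $\alpha_i(p_i)$; all bags are pushed forward through $\mu$, and $\chi(v_0)=\{\tup a\}=\bigcup_i\mu(\chi_i(\alpha_i(p_i)))$. Taking disjoint unions and quotients of structures without isolated elements (and leaving $\rela$ unchanged under projection) keeps $(\rela,\tup a)$ a legitimate open structure.

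The hard part will be verifying the connectivity axiom after the quotient, since merging elements that sit in different child subtrees could a priori disconnect the set of bags containing a merged vertex. The saving observation is that $\sim$ only ever identifies tuple positions of the $\tup a_i=\beta_i(p_i)$, which already lie in the child root bag $\chi_i(\alpha_i(p_i))$; hence any merged vertex $\bar x=\mu(x)$ occurs, after pushforward, both in (the image of) a child root bag and in $\chi(v_0)$, and the child roots are precisely the neighbours of $v_0$. Thus the set of bags containing $\bar x$ is a union of subtrees each meeting a child root together with $\{v_0\}$, which is connected through $v_0$; edge coverage (relation edges come from the $\rela_i$ and the tuple edge is $\{\tup a\}=\chi(v_0)$), vertex coverage, and the identities $\{\beta(q)\}=\chi(\alpha(q))$ are then routine. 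For containment, at $p$ itself $\rela_{\leq v_0,E}=\rela$ and invariant (a) gives equality; at a proper subplan $q$ descending from component $i$, its descendants in $T$ coincide with those in $T_i$, so $\rela_{\leq\alpha(q),E}=\rela[\mu(S)]$ for $S=S_{\alpha_i(q)}$ (computed in $E_i$), and precomposing any homomorphism $g$ on $\rela[\mu(S)]$ with the homomorphism $\mu\!\upharpoonright\! S:\rela_i[S]\to\rela[\mu(S)]$ maps $\homs(\rela_{\leq\alpha(q),E},\beta(q),\reld)$ into $\homs((\rela_i)_{\leq\alpha_i(q),E_i},\beta_i(q),\reld)\subseteq\out(q,\reld)$ by the inductive containment property. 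Finally, each step adds exactly one tree node and a bounded amount of structure relative to its children, so the total construction has size polynomial in $|p|$ and runs in polynomial time.
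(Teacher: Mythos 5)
Your proof is correct and follows essentially the same route as the paper's: structural induction that simultaneously builds the open structure, the rooted tree with a fresh root node per operator, and the witnessing maps $\alpha,\beta$, with the join handled by disjoint union plus identification of output-tuple positions and the containment property obtained by composing with the canonical quotient homomorphism (the paper's Proposition~\ref{prop:cm} step). The only difference is bookkeeping: the paper carries the identifications lazily as a relation $\approx$ inside an ``extended $p$-representation'' and divides out by $\approx^*$ rather than quotienting eagerly at each join as you do, and your explicit verification that connectivity survives the quotient (via the observation that only entries of the concatenated output tuple, which lie in the child root bags, are ever merged) spells out a check the paper leaves as straightforward.
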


\subsection{Intermediate Size Bounds}
\label{ssect:bounds}

We devote this subsection to formalizing and establishing the statements used to prove Theorem~\ref{thm:main-spj}.
We start with statement (i), i.e. we show that we can synthesize a plan which is $\Sigma$-equivalent to the one given originally and which has the desired intermediate degree.

\begin{theorem}[Plan synthesis] 
\label{thm:plan-synthesis}
There exists an exponential-time algorithm that, given a set $\Sigma$ of unary keys over a signature $\sigma$ and an open structure $(\relc,\tup{c})$ over $\sigma$ with $\textrm{Chase}_\Sigma(\relc,\tup{c}) = (\relc,\tup{c})$ and which is a $p$-representation for some plan $p$, outputs a well-behaved plan $p'$ where
%(a) $(\relc,\tup{c})$ is a $p'$-representation, 
(a) $p'$ is $\Sigma$-semantically equivalent to $p$, and 
(b) $p'$ has intermediate degree $\leq C_{\Sigma}^{\relc}\wid(\relc,\tup{c})$.
\end{theorem}

To show the above theorem, we combine proof techniques from~\cite{AtseriasGroheMarx13-size-bounds} and~\cite{GottlobLeeValiant12-size-tw-bounds}. As a first step, following arguments from Section 4 of~\cite{GottlobLeeValiant12-size-tw-bounds}, we transform the open structure $(\relc,\tup{c})$ together with $\Sigma$ to an open structure $(\relc',\tup{c})$ (over a modified signature $\sigma'$) which has no constraints posed over it. This transformation can be achieved via the application of a well-behaved SPJ plan of intermediate degree 1, and where every new relation is of a size that is bounded from above by an original relation. Crucially, for any structure $\reld$ satisfying $\Sigma$, we have that $(\relc,\tup{c})$ returns the same answers on $\reld$ as $(\relc',\tup{c})$ on an analogous transformation of $\reld$. Moreover, we have that $C_{\emptyset}^{\relc'}\wid(\relc',\tup{c}) \leq C_{\Sigma}^{\relc}\wid(\relc,\tup{c})$ and it thus suffices to show the result for the empty set of unary keys $\Sigma' = \emptyset$ along with $(\relc',\tup{c})$, in place of $\Sigma$ with $(\relc,\tup{c})$. Finally, we compute a tree decomposition $F$ of $(\relc',\tup{c})$ where $C_{\Sigma'}^{\relc'}\wid(F)$ is equal to $C_{\Sigma'}^{\relc'}\wid(\relc',\tup{c})$, which can be done in exponential time, and then compute a SPJ plan $p'$ following the tree composition using standard techniques from~\cite{AtseriasGroheMarx13-size-bounds}. We provide a detailed proof in the appendix.

We continue to show that statement (ii) 
%of Section~\ref{sect:main-theorem-statement} 
also holds. To that end, let us first formalize the statement.
%
%Towards establishing Theorem~\ref{thm:sjp-intermediate}, we show the following result relating, on the one hand, the defined degree $e$, and on the other hand, the plans semantically equivalent to the plan $p$ of interest.
In essence, it provides a lower bound on the intermediate degree of any plan that is $\Sigma$-semantically equivalent to $p$.

\begin{theorem}
\label{thm:lowerbound-intermediate-sjp-color}
Let $\Sigma$ be a set of unary keys over signature $\sigma$. Let $p$ be an SPJ plan over $\sigma$ and $(\rela,\tup{a})$ be a $p$-representation. Let $\textrm{Chase}_\Sigma(\rela,\tup{a}) = (\relb,\tup{b})$, and let $(\relc,\tup{c})$ be a core of $(\relb,\tup{b})$. Let $d = C_{\Sigma}^{\relc}\wid(\relc,\tup{c})$.
For any SPJ plan $p^+$ that is $\Sigma$-semantically equivalent to $p$, it holds that $p^+$ has a subplan with output degree $\geq d$.
\end{theorem}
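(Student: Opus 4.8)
The plan is to prove the contrapositive-free statement $d \le d^\ast$, where $d^\ast$ denotes the maximum output degree over all subplans of $p^+$; since $d^\ast$ is attained by some subplan, establishing $d \le d^\ast$ gives the claim. Write $\deg(q)$ for the output degree of a subplan $q$. First I would invoke Theorem~\ref{thm:p-rep} to obtain a $p^+$-representation $(\rela^+,\tup{a}^+)$ together with a $p^+$-decomposition $E^+=(T^+,\chi^+)$, with associated maps $\alpha^+,\beta^+$, satisfying the containment property. Since $d = C_{\Sigma}^{\relc}\wid(\relc,\tup{c})$ is by definition the minimum over all tree decompositions of $\H(\relc,\tup{c})$ of the maximal bag color number, it suffices to exhibit one tree decomposition of $\H(\relc,\tup{c})$ all of whose bags have color number at most $d^\ast$.

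The first key step identifies $(\relc,\tup{c})$ intrinsically from $p^+$. Writing $(\relb^+,\tup{b}^+) = \textrm{Chase}_\Sigma(\rela^+,\tup{a}^+)$, I would chain together four equalities that hold for every $\reld$ satisfying $\Sigma$: the homomorphic equivalence of a structure and its core gives $\homs(\relc,\tup{c},\reld)=\homs(\relb,\tup{b},\reld)$ (via Proposition~\ref{prop:cm}); the standard chase-invariance gives $\homs(\rela,\tup{a},\reld)=\homs(\relb,\tup{b},\reld)$ and $\homs(\rela^+,\tup{a}^+,\reld)=\homs(\relb^+,\tup{b}^+,\reld)$; and $\Sigma$-equivalence of $p$ and $p^+$ gives $\homs(\rela,\tup{a},\reld)=\homs(\rela^+,\tup{a}^+,\reld)$. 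Chaining yields $\homs(\relc,\tup{c},\reld)=\homs(\relb^+,\tup{b}^+,\reld)$ for all $\reld\models\Sigma$. As both $\relc$ and $\relb^+$ satisfy $\Sigma$, evaluating this equality at $\reld=\relc$ and at $\reld=\relb^+$ and reading off the images of the respective identity maps produces homomorphisms in both directions between $(\relc,\tup{c})$ and $(\relb^+,\tup{b}^+)$. Since $(\relc,\tup{c})$ is a core, Proposition~\ref{prop:core}(4) shows the core of $(\relb^+,\tup{b}^+)$ is isomorphic to $(\relc,\tup{c})$; composing the chase quotient $\rela^+\to\relb^+$, the core retraction, and this isomorphism yields a homomorphism $\eta:\rela^+\to\relc$ that is surjective on every relation and satisfies $\eta(\tup{a}^+)=\tup{c}$.

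The second key step bounds, for each subplan $q$, the color number in $\relc$ of the pushed-forward bag $S_q := \eta(\{\beta^+(q)\})$ by $\deg(q)$. By the containment property, $\out(q,\reld)\supseteq\homs(\rela^+_{\leq\alpha^+(q),E^+},\beta^+(q),\reld)$; writing $(\relb_q,\tup{b}_q)$ for the chase of the open structure $(\rela^+_{\leq\alpha^+(q),E^+},\beta^+(q))$, Lemma~\ref{lemma:lower-color} gives $\deg(q)\geq C_{\Sigma}^{\relb_q}(\{\tup{b}_q\})$. Restricting $\eta$ to $\rela^+_{\leq\alpha^+(q),E^+}$ gives a homomorphism into the $\Sigma$-model $\relc$, which by the universal property of the chase factors through $(\relb_q,\tup{b}_q)$ as a homomorphism $\phi:\relb_q\to\relc$ with $\phi(\tup{b}_q)=\eta(\beta^+(q))$. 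Finally I would prove that color number is monotone along homomorphisms of $\Sigma$-models: pulling a valid coloring of $\relc$ back along $\phi$ is again valid (for unary keys, the coloring inequality at each tuple is inherited because $\phi$ sends tuples to tuples), its numerator for $\{\tup{b}_q\}$ equals the original numerator for $S_q=\phi(\{\tup{b}_q\})$, and its denominator is at most the original denominator; hence $C_{\Sigma}^{\relc}(S_q)\le C_{\Sigma}^{\relb_q}(\{\tup{b}_q\})\le\deg(q)\le d^\ast$.

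It remains to turn the pushed-forward bags into a genuine tree decomposition of $\H(\relc,\tup{c})$. Surjectivity of $\eta$ on relations makes the family $(\eta(\chi^+(t)))_{t\in V(T^+)}$ cover every edge of $\H(\relc,\tup{c})$ (the distinguished edge $\{\tup{c}\}$ being covered at the root, where $\eta(\{\tup{a}^+\})=\{\tup{c}\}$), and hence, as $(\relc,\tup{c})$ has no isolated elements, every vertex. I expect the \emph{main obstacle} to be connectivity: because $\eta$ identifies variables, the occurrences of a single vertex $w\in\adom{\relc}$ may split into several subtrees of $T^+$, so the naive pushforward need not satisfy the connectivity condition of a tree decomposition. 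The plan for repairing this is to reconnect the subtrees by inserting $w$ into the bags along the connecting paths while keeping every bag's color number at most $d^\ast$; the enabling tool is the key-domination property of unary keys, namely that if $w$ co-occurs in a tuple of $\relc$ with an element already present in a bag that functionally determines it, then adding $w$ leaves that bag's color number unchanged, since in any valid coloring $w$'s colors are contained in the dominator's. The chase-induced identifications are precisely of this key-forced form, which makes such repairs color-number-safe; the identifications arising from folding onto the core require the additional argument that the routing can always be carried out through already-present dominators. Once a valid tree decomposition $F$ of $\H(\relc,\tup{c})$ with $\max_t C_{\Sigma}^{\relc}(\chi_F(t))\le d^\ast$ is obtained, the characterization of $d$ as the minimal width gives $d\le d^\ast$, so some subplan of $p^+$ has output degree $\ge d$, as required.
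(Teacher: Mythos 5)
Your first half is essentially the paper's: the chain of $\homs$-equalities, the two-directional homomorphisms obtained by evaluating at $\reld=\relc$ and $\reld=\relb^+$, and the identification of $(\relc,\tup{c})$ with the core of $(\relb^+,\tup{b^+})$ all match the published argument. Your coloring-pullback lemma (pulling a valid coloring of $\relc$ back along the chase-factored homomorphism $\phi:\relb_q\to\relc$ to get $C_\Sigma^{\relc}(\phi(S))\le C_\Sigma^{\relb_q}(S)$) is correct for unary keys and is a legitimate alternative to the paper's route, which instead transfers homomorphism \emph{counts} via Lemma~\ref{lemma:substruct} and Lemma~\ref{lemma:td-bag-color}. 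But the proof as proposed does not go through, and the gap is exactly where you flag it: you never obtain a tree decomposition of $\H(\relc,\tup{c})$. Pushing the bags forward along $\eta$ merges, for each $w\in\adom{\relc}$, the occurrence subtrees of all $\eta$-preimages of $w$, and these can be far apart in $T^+$. For chase-forced identifications your key-domination repair is sound (the merged element is dominated by the key value, which by connectivity is present along the connecting path). For identifications performed by the core retraction, however, there is no such dominator: a retraction may send $u\mapsto w$ where $u$ and $w$ never co-occur in any tuple and stand in no functional relationship, so inserting $w$ into the bags along a connecting path can strictly increase their color numbers, and nothing in your argument controls this. Since you only assert that ``the routing can always be carried out through already-present dominators'' without proof, the central inequality $d\le d^\ast$ is not established.

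The paper circumvents this obstacle entirely by never pushing bags forward along the retraction. After establishing homomorphic equivalence, it \emph{renames} the elements of $\adom{\relb^+}$ so that $\aug(\relc,\tup{c})$ is literally a substructure of $\aug(\relb^+,\tup{b^+})$ with $\tup{b^+}=\tup{c}$, keeps the retraction $h$ from $\relb^+$ to $\relc$ only as a counting device, equates elements of the $p^+$-decomposition along the chase (the step your key-domination observation supports), and then defines $\chi(t)=\chi^+(t)\cap\adom{\relc}$. Deletion, unlike merging, preserves connectivity trivially: each $w\in\adom{\relc}$ occurs in exactly the same tree nodes as before. Edge coverage holds because every tuple of $\relc$ (including $\tup{c}$) is a tuple of $\relb^+$, hence already covered by a bag whose core elements survive the intersection. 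The quantitative conclusion is then recovered without any coloring pullback: Lemma~\ref{lemma:td-bag-color} supplies a hard bag $\chi(t)$ for $\relc$, and the retraction gives the injection $|\homs(\relb^+,\chi^+(t),\reld)|\ge|\homs(\relc,\chi(t),\reld)|$ of Lemma~\ref{lemma:substruct}, which chains with the containment property of the $p^+$-decomposition to lower-bound $|\out(q^+,\reld_n)|$. If you want to salvage your pushforward approach, you would need to prove your missing routing claim, which appears at least as hard as the theorem itself; the substructure-plus-restriction device is the idea your proposal is missing.
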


Before proceeding with the proof, we establish two preparatory lemmas that will aid us in showing the lower bound on the intermediate degree.
First, we show that for an open structure $(\relc,\tup{c})$ which is a core of the chase of a $p$-representation and which has width $d$, there is a vertex in any tree decomposition of $(\relc,\tup{c})$ witnessing that the intermediate results grow in accordance with a degree $d$ polynomial.
%every tree decomposition of a core of the chase of some open structure has a vertex 
%whose associated set of elements on which the number of answers can be made to grow in accordance with a degree $d$ polynomial; this is made formal as follows.

\begin{lemma}
\label{lemma:td-bag-color}
Let $\Sigma$ be a set of unary keys over signature $\sigma$. Let $p$ be an SPJ plan over $\sigma$ and $(\rela,\tup{a})$ be a $p$-representation. Let $\textrm{Chase}_\Sigma(\rela,\tup{a}) = (\relb,\tup{b})$, and let $(\relc,\tup{c})$ be a core of $(\relb,\tup{b})$. Let $d = C_{\Sigma}^{\relc}\wid(\relc,\tup{c})$.
For any tree decomposition $(T,\chi)$ of $(\relc,\tup{c})$, there exists a vertex $t_0$ of $T$, a constant $c' > 0$, and a sequence of structures $(\reld_n)_{n \geq 1}$ with $M_{\reld_n}$ unbounded where $|\homs(\relc, \chi(t_0), \reld_n)| \geq c' M_{\reld_n}^d$.
\end{lemma}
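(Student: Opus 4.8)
The plan is to combine the min-max definition of the width measure with the lower-bound Lemma~\ref{lemma:lower-color}, applied to an open structure built from a suitable bag of the given decomposition.

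First I would unwind the definition of the width. Since $(T,\chi)$ is a tree decomposition of $\H(\relc,\tup{c})$ and $d = C_{\Sigma}^{\relc}\wid(\relc,\tup{c}) = \min_F C_{\Sigma}^{\relc}\wid(F)$ is the minimum, over all such tree decompositions, of the maximal bag color number, we have $\max_{t \in V(T)} C_{\Sigma}^{\relc}(\chi(t)) = C_{\Sigma}^{\relc}\wid((T,\chi)) \geq d$. Fix a vertex $t_0$ attaining this maximum, so that $C_{\Sigma}^{\relc}(\chi(t_0)) \geq d$. Write $S = \chi(t_0)$ and let $\tup{s}$ be a tuple listing the elements of $S$, so that $\{\tup{s}\} = S$; this bag is what I turn into an open structure.

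Next I would verify that the chase leaves $(\relc,\tup{s})$ unchanged, i.e.\ $\textrm{Chase}_\Sigma(\relc,\tup{s}) = (\relc,\tup{s})$. The key observation is that $\relc$ already satisfies $\Sigma$: the core $(\relc,\tup{c})$ is a substructure of $(\relb,\tup{b}) = \textrm{Chase}_\Sigma(\rela,\tup{a})$, and $\relb$ satisfies every key in $\Sigma$ because, being the result of the chase, no further chase step applies to it, which is precisely satisfaction of $\Sigma$; since satisfaction of a key is inherited by substructures (fewer $R$-tuples make the condition only easier), $\relc$ satisfies $\Sigma$ as well. Moreover, a chase step is triggered only by a key violation among tuples of a $\sigma$-relation, never by the distinguished output tuple, so no chase step applies to $(\relc,\tup{s})$ regardless of the choice of $\tup{s}$, giving the claimed fixpoint. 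I would then apply Lemma~\ref{lemma:lower-color} directly to the open structure $(\relc,\tup{s})$: with $\textrm{Chase}_\Sigma(\relc,\tup{s}) = (\relc,\tup{s})$, it supplies a constant $c > 0$ and a sequence $(\reld_n)_{n \geq 1}$ of structures satisfying $\Sigma$ with $M_{\reld_n}$ unbounded such that $|\homs(\relc,\{\tup{s}\},\reld_n)| \geq c\, M_{\reld_n}^{C_{\Sigma}^{\relc}(\{\tup{s}\})}$. Since $\{\tup{s}\} = S = \chi(t_0)$, the left-hand side is exactly $|\homs(\relc,\chi(t_0),\reld_n)|$ and the exponent is $C_{\Sigma}^{\relc}(\chi(t_0)) \geq d$. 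Discarding the finitely many $n$ with $M_{\reld_n} < 1$ (the sequence stays unbounded), we get $M_{\reld_n}^{C_{\Sigma}^{\relc}(\chi(t_0))} \geq M_{\reld_n}^{d}$, hence $|\homs(\relc,\chi(t_0),\reld_n)| \geq c\, M_{\reld_n}^{d}$; setting $c' = c$ completes the argument.

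The step I expect to require the most care is the handling of the open-structure hypotheses when invoking Lemma~\ref{lemma:lower-color}: one must ensure $(\relc,\tup{s})$ is a legitimate open structure and that the color number $C_{\Sigma}^{\relc}(\{\tup{s}\})$ in the lemma coincides with the bag color number $C_{\Sigma}^{\relc}(\chi(t_0))$ from the width definition. Because the core $\aug(\relc,\tup{c})$ has no isolated elements, every element of $S$ either lies in a $\sigma$-atom of $\relc$ or occurs in $\tup{c}$; any element of $S$ that is isolated in $\relc$ (appearing only in $\tup{c}$) only increases the homomorphism count and leaves the relevant exponent unaffected, so it poses no real difficulty, but it is the point that demands attention.
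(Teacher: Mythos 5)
Your proposal is correct and follows essentially the same route as the paper's proof: select a vertex $t_0$ whose bag attains the maximum color number (hence $C_{\Sigma}^{\relc}(\chi(t_0)) \geq d$ by the min-max definition of the width), form the tuple listing $\chi(t_0)$, and invoke Lemma~\ref{lemma:lower-color} on the open structure $(\relc,\tup{\chi(t_0)})$. Your explicit verification that $\relc$ already satisfies $\Sigma$ (so the chase fixes $(\relc,\tup{s})$ and the exponent from the lemma is exactly $C_{\Sigma}^{\relc}(\chi(t_0))$) is a detail the paper leaves implicit, and it is a welcome addition.
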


\begin{comment}

\begin{proof}
By the definition of $C_{\Sigma}^{\relc}\wid$ of the open structure $(\relc,\tup{c})$, there exists a vertex $t_0$ of the given tree decomposition such that $C_{\Sigma}^{\relc}\wid(\chi(t_0)) \geq d$.
Let $\tup{\chi(t_0)}$ be the tuple of elements in $\chi(t_0)$. Invoke Lemma~\ref{lemma:lower-color} on the structure $\relc$ and the open structure $(\relc,\tup{\chi(t_0)})$ to obtain a constant $c' > 0$ and structures $(\reld_n)_{n \geq 1}$ such that 
$|\homs(\relc,\chi(t_0),\reld_n)| \geq c' M_{\reld_n}^{C_{\Sigma}^{\relc}(\chi(t_0))}$.
%The present lemma follows.
\end{proof}    

\end{comment}

We next give a lemma whose assumptions are a relaxation of (that is, they are implied by) the condition that $\relc$ is a core of $\relb$.
It allows us to lower bound, for any subset $S \subseteq \adom{\relb}$, the size of $\homs(\relb,S,\cdot)$ by the size of $\homs(\relc,S \cap \adom{\relc}, \cdot)$.

\begin{lemma}
\label{lemma:substruct}
Suppose that $\relb, \relc$ are structures over signature $\sigma$ such that $\relc$ is a substructure of $\relb$, and there exists a retraction $h$ from $\relb$ to $\relc$.
Then, for any subset $S \subseteq \adom{\relb}$ and any structure $\reld$ over $\sigma$, it holds that $|\homs(\relb,S,\reld)| \geq |\homs(\relc, S \cap \adom{\relc}, \reld)|$.
\end{lemma}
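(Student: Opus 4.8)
The plan is to exhibit a surjection from $\homs(\relb,S,\reld)$ onto $\homs(\relc, S \cap \adom{\relc}, \reld)$, which immediately yields the claimed inequality of cardinalities. Write $S' = S \cap \adom{\relc}$. The natural candidate is the restriction map $\rho$ that sends an element $\psi \in \homs(\relb,S,\reld)$ to $\psi \upharpoonright S'$; since $S' \subseteq S$, this restriction is well-defined as a map with domain $S'$. The two things to verify are that $\rho$ lands inside $\homs(\relc, S', \reld)$ and that it is onto, and I expect each of the two hypotheses to be used exactly once, for exactly one of these two checks.

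For the first point, I would use that $\relc$ is a substructure of $\relb$. Any $\psi \in \homs(\relb,S,\reld)$ is by definition $\phi \upharpoonright S$ for some homomorphism $\phi \colon \relb \to \reld$. Restricting $\phi$ to $\adom{\relc}$ gives a homomorphism $\relc \to \reld$: every tuple of a relation of $\relc$ is also a tuple of the corresponding relation of $\relb$ (by the substructure property), and its image under $\phi$ uses only values of $\phi$ on $\adom{\relc}$. Hence $\phi \upharpoonright S' = (\phi \upharpoonright \adom{\relc}) \upharpoonright S'$ is the restriction to $S'$ of a homomorphism from $\relc$, so it lies in $\homs(\relc, S', \reld)$; and it equals $\rho(\psi)$ because $S' \subseteq S$.

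For surjectivity, I would use the retraction $h \colon \relb \to \relc$. Given any $f \in \homs(\relc, S', \reld)$, pick a homomorphism $g \colon \relc \to \reld$ with $g \upharpoonright S' = f$, and form the composite $g \circ h \colon \relb \to \reld$, which is a homomorphism. Its restriction $(g \circ h) \upharpoonright S$ is an element of $\homs(\relb,S,\reld)$, and applying $\rho$ returns its restriction to $S'$. For $s \in S' = S \cap \adom{\relc}$ we have $h(s) = s$ because $h$ acts as the identity on $\adom{\relc}$, so $(g \circ h)(s) = g(s) = f(s)$; thus $\rho$ maps this element back to $f$.

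The main subtlety to watch is the direction of the argument. A naive attempt to send $f$ directly to $(g \circ h) \upharpoonright S$ and call it an \emph{injection} fails, because for $s \in S \setminus \adom{\relc}$ the value $(g \circ h)(s) = g(h(s))$ depends on $g$ at the point $h(s) \in \adom{\relc}$, which need not lie in $S'$ and so is not pinned down by $f = g \upharpoonright S'$; different choices of $g$ extending $f$ could give different images. Packaging the construction as a surjection $\rho$ in the opposite direction sidesteps any such well-definedness worry, since $\rho$ is literally a restriction map, while the substructure hypothesis makes its codomain correct and the retraction hypothesis makes it onto.
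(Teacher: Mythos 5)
Your proof is correct, but it is organized dually to the paper's. The paper constructs an \emph{injection} from $\homs(\relc, S \cap \adom{\relc}, \reld)$ into $\homs(\relb,S,\reld)$: extend $g \in \homs(\relc, S \cap \adom{\relc}, \reld)$ to a homomorphism $g'$ from $\relc$ to $\reld$, pass to $(g' \circ h) \upharpoonright S$, and observe injectivity because two sources differing at some $c \in S \cap \adom{\relc}$ yield images differing at $c$, since $h(c) = c$ forces the image's value at $c$ to be $g(c)$. You instead exhibit the canonical restriction map $\rho$ as a \emph{surjection} in the opposite direction --- and the section you build to witness surjectivity is exactly the paper's extend-compose-restrict passage, so the single substantive construction is shared. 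What your packaging buys is that $\rho$ is choice-free, at the cost of an extra codomain check (that $\psi \upharpoonright (S \cap \adom{\relc})$ is the restriction of a homomorphism from $\relc$), which is where you invoke the substructure hypothesis; the paper's argument, by contrast, uses the substructure hypothesis only through the definition of retraction ($h$ fixing $\adom{\relc}$ pointwise).

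One correction to your closing paragraph: the direct injection route does not ``fail.'' The non-canonicity you identify --- that $(g \circ h)(s)$ for $s \in S \setminus \adom{\relc}$ depends on the chosen extension $g$ of $f$ --- only means the map $f \mapsto (g \circ h) \upharpoonright S$ requires fixing one extension $g_f$ per $f$ to be well-defined; once such choices are fixed, injectivity holds regardless of them, because it is detected at coordinates in $S \cap \adom{\relc}$, where the image's values equal those of $f$ itself. This is precisely the paper's proof, so your stated subtlety is a well-definedness issue for a choice-free map, not an obstruction to the injection argument; both routes are sound.
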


\begin{comment}
 
\begin{proof}
We show how to pass from any map in $\homs(\relc, S~\cap~\adom{\relc}, \reld)$ to a map in $\homs(\relb,S,\reld)$, and then explain why the passage is injective, which suffices.
Suppose that $g \in \homs(\relc, S \cap \adom{\relc}, \reld)$ and $h$ is a retraction from $\relb$ to $\relc$. Then there exists an extension $g'$ of $g$ that is in $\homs(\relc, \adom{\relc}, \reld)$, from which it follows that the composition $g' \circ h$ is in $\homs(\relb, \adom{\relb}, \reld)$ (in writing this composition, $h$ is applied first).
It follows that the restriction $(g' \circ h) \upharpoonright S$ is in $\homs(\relb, S, \reld)$.  This passage is injective since if two mappings in $\homs(\relc, S \cap \adom{\relc}, \reld)$ differ on an element $c$ of $S \cap \adom{\relc}$, the two mappings that they are passed to will also differ on $c$, due to the assumption that $h$ acts as the identity on $C$.
\end{proof}
   
\end{comment}

We proceed to prove the final statement needed to show Theorem~\ref{thm:main-spj}.

\begin{proof}[Proof of Theorem~\ref{thm:lowerbound-intermediate-sjp-color}]
Fix a set $\Sigma$ of unary keys and a SPJ plan $p$ over $\Sigma$. Let $(\rela,\tup{a})$ be a $p$-representation, $(\relb,\tup{b}) = \textrm{Chase}_\Sigma(\rela,\tup{a})$, and $(\relc,\tup{c})$ be a core of $(\relb,\tup{b})$. For any structure $\reld$ satisfying $\Sigma$ we have that 
\[\out(p,\reld) = \homs(\rela,\tup{a},\reld) = \homs(\relb,\tup{b},\reld) = \homs(\relc,\tup{c},\reld),\]
where the first equality holds by definition of $p$-representation, the second equality holds by definition of the chase procedure and the third equality holds by Propositions~\ref{prop:core}.

Let $p^+$ be any SPJ plan that is $\Sigma$-semantically equivalent to $p$. Let $(\rela^+,\tup{a^+})$ be a $p^+$-representation and $(\relb^+,\tup{b^+}) = \textrm{Chase}_\Sigma(\rela^+,\tup{a^+})$. For any structure $\reld$ satisfying $\Sigma$ we have that 
\[\out(p,\reld) = \out(p^+,\reld) = \homs(\rela^+,\tup{a^+},\reld) = \homs(\relb^+,\tup{b^+},\reld),\]
where the first equality holds by $\Sigma$-semantic equivalence and the second and third equality hold by definition as above.

\OMIT{
We have, for each structure $\reld$, that
$\homs(\relc,\tup{c},\reld) = 
 \homs(\rela,\tup{a},\reld)$,
via Propositions~\ref{prop:core} and~\ref{prop:cm}.
Let $p^+$ be any SPJ plan that is semantically equivalent to $p$.
By Theorem~\ref{thm:p-rep},
there exists a $p^+$-representation 
$(\relb,\tup{b})$ 
as well as a $p^+$-decomposition $E^+ = (T,(B^+_t))$ thereof,
with associated mappings $\alpha,\beta$.
%Set $(\relb,\tup{b})$ to be equal to 
%$(\rela^+\langle \approx^+ \rangle, \tup{a^+}\langle \approx^+ \rangle)$.
We have, for each structure $\reld$, that
\[\homs(\relc,\tup{c},\reld) = \out(p,\reld) = \out(p^+,\reld) = \homs(\relb,\tup{b},\reld).\]  
}

%Thus by Proposition~\ref{prop:cm}, we have that $(\relc,\tup{c})$ and $(\relb,\tup{b})$ are homomorphically equivalent.
Since we have that $\homs(\relc,\tup{c},\reld) = \homs(\relb^+,\tup{b^+},\reld)$ for all structures $\reld$ satisfying $\Sigma$, and we know that both $\relc$ and $\relb^+$ are structures satisfying $\Sigma$, it holds specifically that $\tup{c} \in \homs(\relc,\tup{c},\relc) = \homs(\relb^+,\tup{b^+},\relc)$ and $\tup{b^+} \in \homs(\relb^+,\tup{b^+},\relb^+) = \homs(\relc,\tup{c},\relb^+)$. This immediately implies that there is a homomorphism from $(\relb^+,\tup{b^+})$ to $(\relc,\tup{c})$, and vice-versa, showing that the two open structures are homomorphically equivalent.
Since $(\relc,\tup{c})$ is a core, by Proposition~\ref{prop:core} we have that $\aug(\relc,\tup{c})$ is isomorphic to a substructure of $\aug(\relb^+,\tup{b^+})$. By renaming the elements of $\adom{\relb^+}$, we can achieve that $\aug(\relc,\tup{c})$ is a core of (and thus a substructure of) $\aug(\relb^+,\tup{b^+})$.  We then have that $\tup{b^+} = \tup{c}$.

By Theorem~\ref{thm:p-rep}, there exists a $p^+$-decomposition $E^+ = (T,\chi^+)$ of $(\rela^+,\tup{a^+})$ with associated mappings $\alpha,\beta$. Within $E^+$, we equate elements following the chase procedure on $(\rela,\tup{a})$ such that the resulting $E^+$ is a decomposition of $(\relb^+,\tup{b^+})$.
We define a tree decomposition of $(\relc,\tup{c})$ as follows. Take the pair $(T,\chi)$ where $T$ is as in $E^+$, and where, for each vertex $t$ of $T$, we define $\chi(t) = \chi^+(t) \cap \adom{\relc}$.  
It is straightforward to verify that $(T,\chi)$ is a tree decomposition of $(\relc,\tup{c})$.
%In the tree decomposition $(T,(B^+_t))$, the bag $B^+_r$ of the root vertex $r$
%was $\{ \tup{b} \}$ (by definition of $p^+$-decomposition),
%which is equal to $\{ \tup{c} \}$; we thus have that $B_r = \{ \tup{c} \}$.
By Lemma~\ref{lemma:td-bag-color}, there exists a vertex $t$ of $T$, a constant $c' > 0$, and structures $(\reld_n)_{n\geq 1}$ with $M_{\reld_n}$ unbounded, where $|\homs(\relc,\chi(t),\reld_n)| \geq c' M^d_{\reld_n}$.
There exists a subplan $q^+$ of $p^+$ such that $\alpha(q^+) = t$. We then have
%
\begin{comment}
\begin{align*}
|\out(q^+,\reld_n)| 
& \geq |\homs(\relb_{\leq t, E^+}, \chi^+(t), \reld_n)| \\
& \geq |\homs(\relb, \chi^+(t), \reld_n)| \\
& \geq |\homs(\relc, \chi(t), \reld_n)| \\
& \geq c' M^d_{\reld_n}. 
\end{align*}
\end{comment}
%
\begin{align*}
    |\out(q^+,\reld_n)| &\geq |\homs(\relb_{\leq t, E^+}, \chi^+(t), \reld_n)| \geq |\homs(\relb, \chi^+(t), \reld_n)|\\
    &\geq |\homs(\relc, \chi(t), \reld_n)| \geq c' M^d_{\reld_n}.
\end{align*}
The first inequality holds by the definition of $p^+$-decomposition; the second inequality holds since $\relb_{\leq t, E^+}$ is a substructure of $\relb$; the third inequality follows from Lemma~\ref{lemma:substruct}, whose assumptions hold because $\aug(\relc,\tup{c})$ is a core of $\aug(\relb,\tup{b})$; the last inequality follows from our choice of $(\reld_n)_{n\geq 1}$. It follows that the subplan $q^+$ has output degree $\geq d$.
\end{proof}

\section{SPJU Plans}
\label{sect:spju}

%\mtodo{Give proof of main theorem (Outline, details in appendix)}

Building on the established understanding of intermediate degree for SPJ plans from the previous section, we are now ready to prove Theorem~\ref{thm:main-spju}. 
%We prove Theorem~\ref{thm:main-spju}.
%To do so, we build on the established understanding of intermediate degree for SPJ plans.
%
It is well-known that each SPJU plan $p$ is $\Sigma$-semantically equivalent to some plan $p'$ which is a union $p_1 \cup \cdots \cup p_m$ of SPJ plans.
%(this follows, for example, from this article's Theorem~\ref{thm:sjpu});
Let us note that the output degree of an SPJU plan can then be characterized as follows: 
%it is readily verifiable that, 
for such a $\Sigma$-semantically equivalent plan $p'$, the output degree of $p$ is equal to the maximum output degree over the plans $p_1, \ldots, p_m$.
%
%In this section, we build on the established understanding of SPJ plans to prove the main theorem.  
%It is known that that any SPJU plan can be translated to the union of SPJ plans;
To argue about the intermediate degree of an SPJU plan, we first give a theorem showing that a natural such translation to a union of SPJ plans exists and can be computed in exponential time, which has the additional property that the relation computed by a subplan of one of the SPJ plans is a subset of some relation computed by a subplan of the original SPJU plan; this inclusion then facilitates proving lower bounds on the intermediate degree of the original SPJU plan.

\begin{theorem}
\label{thm:sjpu}
There exists an exponential-time algorithm that, given an SPJU plan $p$, outputs a sequence of SPJ plans $p_1, \ldots, p_m$, a sequence of maps $\alpha_1, \ldots, \alpha_m$, and a sequence of open structures $(\rela_1,\tup{a_1}), \ldots, (\rela_m,\tup{a_m})$ such that the following holds:
\begin{enumerate}

\item $p$ is semantically equivalent to $p_1 \cup \cdots \cup p_m$,

\item for each $i = 1, \ldots, m$, we have that $\alpha_i: \subplans(p_i) \to \subplans(p)$ is a map such that
(for each structure $\reld$) when $q \in \subplans(p_i)$,
the containment
$\out(q,\reld) \subseteq \out(\alpha_i(q),\reld)$ holds,

\item for each $i = 1, \ldots, m$, the open structure
$(\rela_i, \tup{a_i})$ is a $p_i$-representation,
and $i \neq j$ implies that there is no homomorphism 
from $(\rela_i,\tup{a_i})$ to  $(\rela_j,\tup{a_j})$.

\end{enumerate}
\end{theorem}

On a high level, Theorem~\ref{thm:main-spju} can then be proved as follows. For some SPJU plan $p$, the algorithm first computes this convenient set of SPJ plans $p_1, \ldots, p_m$ via Theorem~\ref{thm:sjpu}.
%invokes the algorithm of Theorem~\ref{thm:sjpu}
%on the given plan $p$;
Then, for each obtained plan $p_i$, it computes a core $(\relc_i,\tup{c_i})$ of the chase of a $p$-representation, and then applies Theorem~\ref{thm:main-spj} to obtain a value $d_i$, and a plan $p'_i$ such that $p_i$, $d_i$, and $p'_i$ satisfy the properties given in that theorem statement.
The algorithm outputs the value $d = \max_{i \in [m]} d_i$ and the plan $p' = p'_1 \cup \cdots \cup p'_m$. 
Since this algorithm outputs a union of well-behaved SPJ plans, also the resulting SPJU plan is well-behaved.
%We first give the theorem that this convenient set of SPJs exists and can be computed, followed by a proof that the resulting $p'$ has indeed the desired properties.
It is easy to see that $p'$ has intermediate degree $d$, it just remains to show that $d$ is indeed the SPJU-best possible intermediate degree. In essence, we prove that for any SPJU $q'$ that is semantically equivalent to $p'$, we have that from the sequence of SPJ plans $q'_1, \ldots, q'_n$ obtained by applying Theorem~\ref{thm:sjpu} to $q'$, there exists an index $\ell\in [n]$ such that $q'_\ell$ does not have (for any $\epsilon > 0$) intermediate degree $\leq (d-\epsilon)$ implying by condition (2) of Theorem~\ref{thm:sjpu} that $q'$ does not have (for any $\epsilon > 0$) intermediate degree $\leq (d-\epsilon)$. Full details can be found in the appendix.

%\end{comment}

\section{Conclusion}
\label{sect:conclusion}

The intermediate degree asymptotically characterizes the size of intermediate relations obtained during the execution of a query plan. For SPJU plans, we have shown how to optimize for this measure among all semantically equivalent SPJU plans with respect to a given set of unary key constraints. A natural way to extend our work is to consider either more expressive query languages, by adding more operators, or more expressive constraints, such as 
%arbitrary key constraints or 
arbitrary functional dependencies. 
%
%Summing up, we have shown how to compute an SPJ plan with the best possible intermediate degree and which is equivalent to a given one
%
%, given some SPJ plan $p$, how to compute a plan $p'$, which is equivalent to $p$ under given unary key constraints, and which has the best possible intermediate degree, i.e.
%
%Recap of work;
%Extension to SPJU in appendix; consider more expressive constraints such as keys or general functional dependencies; 

%%
%% Bibliography
%%

%% Please use bibtex, 

\bibliography{main-icdt26}

\appendix

\section{Examples}

\subsection{Example Illustrating Structure-related Notions (cf. Section~\ref{sec:prelim})}

Let $\sigma$ be the signature containing a single relation $E$ of arity $2$, and let $\rela_0$ be the structure with universe $\adom{\rela_0} = \{ u, v_1, v_2 \}$ and with relation $E^{\rela_0} = \{ (u,v_1), (u,v_2) \}$. Consider the open structure $(\rela_0, (u))$; when $\reld$ is a directed graph, we have that $\homs(\rela_0, (u), \reld)$ is equal to the arity $1$ relation containing each vertex (of $\reld$) having an outgoing edge. Next, consider the open structure $(\rela_0, (v_1,v_2))$; when $\reld$ is a directed graph, we have that $\homs(\rela_0, (v_1,v_2), \reld)$ is equal to the arity $2$ relation containing each pair of vertices (of $\reld$) that each receive an incoming edge from a common vertex. This relation is symmetric, and when $x$ is a vertex of $\reld$, this relation contains the pair $(x,x)$ if and only if $x$ receives an incoming edge (from some vertex).

Further, for each $i = 1, 2$, let $\rela_i$ be the structure with universe $\adom{\rela_i} = \{ u, v_i \}$ and with relation $E^{\rela_i} = \{ (u,v_i) \}$. Clearly, each of $\rela_1,\rela_2$ is a substructure of $\rela_0$. In addition, we have that the map fixing $u$ and $v_1$ and sending $v_2$ to $v_1$ is a retraction from $\rela_0$ to $\rela_1$; dually, the map fixing $u$ and $v_2$ and sending $v_1$ to $v_2$ is a retraction from $\rela_0$ to $\rela_2$.  Each of $\rela_1,\rela_2$ is a core of $\rela_0$ (and of itself): to argue this, we need to argue that (say) $\rela_1$ has no retraction to a proper substructure $\rela'$ (of $\rela_1$). (With respect to $\rela_0$, the structures $\rela_1$, $\rela_2$ are symmetric to each other.)
We argue this as follows: if we had such a proper substructure, it would have a size $1$ universe,
and the retraction would map each element in $\adom{\rela_1}$ to the single element $x$ in that size $1$ universe. But then this proper substructure $\rela'$ would have $E^{\rela'} = \{ (x, x) \}$, which could not be a subset of~$E^{\rela_1}$, contradicting that $\rela'$ is a substructure of $\rela_1$.

\subsection{Examples of plans and computations of degree notions (cf. Section~\ref{sect:main-theorem-statement})}

It follows from previous work that the output degree of an SPJ plan can be computed as follows.

\begin{proposition}[follows from ~\cite{GottlobLeeValiant12-size-tw-bounds}]\label{prop:spj-output}
Let $\Sigma$ be a set of unary keys over signature $\sigma$. Let $p$ be an SPJ plan over $\sigma$ and $(\rela,\tup{a})$ be a $p$-representation. Let $\textrm{Chase}_\Sigma(\rela,\tup{a}) = (\relb,\tup{b})$. Then, $p$ has output degree~$C^{\relb}_{\Sigma}(\{\tup{b}\})$.    
\end{proposition}

We remark that, as a consequence of this proposition,
the output degree of a given SPJU plan can also be computed by converting the plan into a semantically equivalent plan that is a union of SPJ plans and taking the maximum output degree over all these SPJ plans. 
%(Refer Remark~\cite{}.)  \htodo{ to do.}

%## Examples

%We will make use of the following fact:

%(Fact alpha): For a set S of elements of B, the value C^B_{emptyset}(S) is the fractional edge cover number of the induced hypergraph of H(B) on S. By the induced hypergraph of H = (V,F) on S (assumed to be a subset of V), we mean the hypergraph (S, { e \cap S | e \in F }). This fact is noted in the last paragraph of Section 3 of reference [6]. 

In the following, we give a number of examples of plans along with computations of the the output degree and the best possible intermediate degree.
Throughout, we let $E$ be a relation symbol of arity $2$,
and we
let $\rela$ be the structure with domain
$\{ a, b, c \}$
and the relation $E^\rela = \{ (a,b), (b,c), (c,a), (c,c) \}$.
We set $\Sigma$ to be the empty set.
We will use $\epsilon$ to denote the empty tuple.

\begin{example} (output degree 3/2, best possible intermediate degree 3/2) \label{ex:plan1}
Consider the plan $p_1 = \pi_{1,2,4} \Join_\theta(E,E,E,E)$
where $\theta = \{ 1=6, 2=3, 4=5, 5=7, 7=8 \}$.
The definition of this set $\theta$ is based on the tuple 
$(a,b,b,c,c,a,c,c)$, consisting of domain elements of $\rela$.
We have that $(\rela, (a,b,c))$ is a $p_1$-representation.
By Proposition~\ref{prop:spj-output}, the output degree of $p_1$ is 
$C^A_\Sigma(\{a,b,c\})$, which 
by Proposition~\ref{prop:alpha}
is the fractional edge cover number of the hypergraph $\{ \{a,b\}, \{b,c\}, \{c,a\}, \{c\} \}$; this number is 3/2.

Any endomorphism of this open structure 
$(\rela, (a,b,c))$ has to fix $a$, $b$, and $c$, and so the only endomorphism is the identity map on $\{a,b,c\}$.
This open structure is thus its own core.
By Theorem~\ref{thm:main-spj}, 
relative to $p_1$, a plan $p'$ having the best possible intermediate degree has intermediate degree equal to $d := C^\rela_\Sigma\wid(\rela, (a,b,c))$; 
we compute this value as follows.
The hypergraph $\H(\rela,(a,b,c))$ contains $\{ a, b, c \}$ as an edge, 
so any tree decomposition of this hypergraph must contain a bag equal to 
$\{ a, b, c \}$, and so any such tree decomposition $F$ must have 
$C^\rela_\Sigma\wid(F) \geq C^\rela_\Sigma( \{a,b,c\} )$; as stated above, this latter value is $3/2$. On the other hand, one achieves a tree decomposition of this hypergraph by taking a one-vertex tree with bag $\{a, b, c\}$, and this tree decomposition $G$ has $C^\rela_\Sigma\wid(G) = C^\rela_\Sigma( \{a,b,c\} ) = 3/2$. 
Thus the open structure has $C^\rela_\Sigma\wid(\rela,(a,b,c)) = 3/2$.
%In the case of a plan consisting of a join of basic relations 
%(such as the plan $\Join_theta (E,E,E,E)$, where $\theta$ is as above), Theorem 6 of the AGM paper yields a well-behaved plan, as discussed and implied by the proof of Theorem 5.1 of the present paper.
\end{example}

\begin{example}
(output degree 1, best possible intermediate degree 3/2)
Suppose we have a plan $p_2$ where $(\rela, (a,b))$ is a $p_2$-representation.
An example of such a plan is $\pi_{1,2} \Join_\theta (E,E,E,E)$ 
where $\theta$ is as in Example~\ref{ex:plan1}.
By Proposition~\ref{prop:spj-output}, the output degree of $p_2$ is 
$C^A_\Sigma({a,b})$; by Proposition~\ref{prop:alpha}, this is the fractional edge cover number of a hypergraph with vertex set $\{a,b\}$ and that contains $\{a,b\}$ as an edge, so this output degree is 1.

Any endomorphism of this open structure $(\rela, (a,b))$ has to fix $a$ and $b$, and no endomorphism can map $c$ to $a$ or $b$, since $(c,c)$ is an element of $E^A$ but $(a,a)$ and $(b,b)$ are not elements of $E^A$. Hence, the only endomorphism of this open structure is the identity map.
This open structure is thus its own core.
By Theorem~\ref{thm:main-spj}, a plan $p'$ having the best possible intermediate degree has intermediate degree equal to $C^\rela_\Sigma\wid(\rela, (a,b))$;
we compute this value as follows.
The hypergraph $\H(\rela,(a,b))$ contains the edges $\{a,b\},\{a,c\},\{b,c\}$, and it is known that for any tree decomposition of a hypergraph and any clique appearing in a hypergraph, the tree decomposition must have a bag containing the clique.
Hence any tree decomposition has a bag containing $\{a,b,c\}$, and by the analysis of Example~\ref{ex:plan1}, we have $C^\rela_\Sigma\wid(\rela,(a,b)) = 3/2$.
\end{example}

\begin{example}
(output degree 0, best possible intermediate degree 1)
Suppose we have a plan $p_3$
where $(\rela, \epsilon)$ is a
 $p_3$-representation.
An example of such a plan is $\pi_{\epsilon} \Join_\theta (E,E,E,E)$ where $\theta$ is as in Example~\ref{ex:plan1}.
By Proposition~\ref{prop:spj-output}, 
the output degree of $p_3$ is $C^\rela_\Sigma(\epsilon)$, which is $0$. Indeed, $0$ is the output degree of any Boolean plan.

The core of the open structure $(\rela,\epsilon)$ is readily verified to be 
$(\relb, \epsilon)$ where $\relb$ is the structure with universe $\{ c \}$ and the relation $E^B = \{(c,c)\}$; this is via the endomorphism of $(\rela,\epsilon)$ that maps each of $a$, $b$, and $c$ to $c$.
By Theorem~\ref{thm:main-spj}, a plan $p'$ having the best possible intermediate degree has intermediate degree equal to $C^\relb_\Sigma\wid(\relb, \epsilon)$. Since the hypergraph $\H(\relb,\epsilon)$ simply has vertex set $\{ c \}$ and edge set $\{ \{ c \} \}$, this intermediate degree is $1$.
\end{example}

\subsection{Proof of Proposition~\ref{prop:spj-output}}

We show
 how the $p$-representation of a plan $p$ can be used to determine the output degree of the plan. To this end, we present two lemmas providing upper and lower bounds on the size of the result of evaluating an open structure on a structure. This lemma, providing an upper bound, is a reformulation of Theorem 4.4 of~\cite{GottlobLeeValiant12-size-tw-bounds}.
\begin{lemma}
\label{lemma:uppercolor}
Let $\Sigma$ be a set of unary keys over signature $\sigma$. Let $\reld$ be a structure over $\sigma$ that satisfies $\Sigma$. Let $(\rela,\tup{a})$ be an open structure over $\sigma$. Let $\textrm{Chase}_\Sigma(\rela,\tup{a}) = (\relb,\tup{b})$. It holds that 
\[|\homs(\rela,\{\tup{a}\},\reld)| \leq M_{\reld}^{C_{\Sigma}^{\relb}(\{\tup{b}\})}.\]
\end{lemma}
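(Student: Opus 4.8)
The plan is to prove the bound in three movements: translate the left-hand side into the answer count of a conjunctive query, reduce from $(\rela,\tup{a})$ to the chased open structure $(\relb,\tup{b})$ using the hypothesis that $\reld$ satisfies $\Sigma$, and then invoke Theorem 4.4 of~\cite{GottlobLeeValiant12-size-tw-bounds} together with the color-number correspondence already noted in the Preliminaries.

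First I would set up the reduction to the chase. Let $c: \adom{\rela} \to \adom{\relb}$ be the composite substitution obtained by applying all chase steps, so that $c$ is surjective and $c(\tup{a}) = \tup{b}$. The key observation is that, because $\reld$ satisfies $\Sigma$, every homomorphism $h$ from $\rela$ to $\reld$ factors through $c$: it suffices to check that $h$ respects each individual chase step, which is immediate, since if a step identifies $x$ and $y$ because tuples $\tup{t},\tup{t'} \in R^{\rela}$ agree on a key $K$ and carry $x,y$ in a common non-key position, then $h(\tup{t}),h(\tup{t'})$ agree on $K$ in $\reld$, so the fact that $\reld$ satisfies $\Sigma$ forces $h(\tup{t}) = h(\tup{t'})$ and hence $h(x) = h(y)$. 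By induction over the chase steps, $c(u) = c(v)$ implies $h(u) = h(v)$, so $h = \bar{h} \circ c$ for a unique homomorphism $\bar{h}$ from $\relb$ to $\reld$; conversely, every homomorphism $g$ from $\relb$ to $\reld$ arises this way, as $g \circ c$ is a homomorphism from $\rela$ to $\reld$ because each tuple of $\relb$ is the $c$-image of a tuple of $\rela$. Since $c$ restricts to a surjection from $\{\tup{a}\}$ onto $\{\tup{b}\}$, the assignment $h \upharpoonright \{\tup{a}\} \mapsto \bar{h} \upharpoonright \{\tup{b}\}$ is a well-defined bijection, giving
\[ |\homs(\rela,\{\tup{a}\},\reld)| = |\homs(\relb,\{\tup{b}\},\reld)|. \]

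It then remains to bound the right-hand side, and here I would read $\homs(\relb,\{\tup{b}\},\reld)$ as the answer relation of the conjunctive query whose atoms are $\{\, R(v_1,\dots,v_k) \mid R \in \sigma,\ (v_1,\dots,v_k) \in R^{\relb}\,\}$ and whose output variables are $\{\tup{b}\}$, evaluated on $\reld$; the number of distinct restrictions to the set $\{\tup{b}\}$ is exactly the number of distinct answer tuples. Applying Theorem 4.4 of~\cite{GottlobLeeValiant12-size-tw-bounds} to this query over $\reld$ (which satisfies $\Sigma$) yields a bound of $M_{\reld}$ raised to the color number of the query, and by the remark in the Preliminaries this color number equals $C_{\Sigma}^{\relb}(\{\tup{b}\})$; combining this with the displayed equality finishes the argument. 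The part requiring the most care is the chase reduction: one must verify that the factorization through $c$ is both well-defined and exhaustive, and that passing to the set $\{\tup{b}\}$ rather than the full tuple does not distort the count. The only remaining obstacle is one of matching hypotheses — confirming that the cited formulation of Theorem 4.4 applies verbatim to the chased query on any $\reld$ satisfying $\Sigma$, so that no normalization of $(\relb,\tup{b})$ beyond what the chase already supplies is needed.
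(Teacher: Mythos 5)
Your proposal is correct and takes essentially the same route as the paper: the paper proves this lemma simply by observing it is a reformulation of Theorem 4.4 of~\cite{GottlobLeeValiant12-size-tw-bounds}, and your write-up merely fills in the routine details of that reformulation --- the factorization of every homomorphism into a $\Sigma$-satisfying $\reld$ through the chase map (yielding $|\homs(\rela,\{\tup{a}\},\reld)| = |\homs(\relb,\{\tup{b}\},\reld)|$) and the color-number correspondence already recorded in the Preliminaries. All of these steps are sound, so there is no gap to report.
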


We next present a complementary lemma that provides the matching lower bound and which again follows directly from Theorem 4.4 of~\cite{GottlobLeeValiant12-size-tw-bounds}.
\begin{lemma} \label{lemma:lower-color}
Let $\Sigma$ be a set of unary keys over signature $\sigma$. Let $\rela$ be a structure over $\sigma$. For any open structure $(\rela, \tup{a})$, there exists a constant $c > 0$ such that, with $\textrm{Chase}_\Sigma(\rela,\tup{a}) = (\relb,\tup{b})$, there exists a sequence $(\reld_n)_{n\geq 1}$ of structures over $\sigma$ that satisfy $\Sigma$, with $M_{\reld_n}$ unbounded, and where, for each $n~\geq~1$,
\[|\homs(\rela,\{\tup{a}\},\reld_n)| \geq c M_{\reld_n}^{C_{\Sigma}^{\relb}(\{\tup{b}\})}.\]
\end{lemma}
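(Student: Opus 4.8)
The plan is to reduce the claimed lower bound, which concerns the original open structure $(\rela,\tup{a})$, to a statement about its chase $(\relb,\tup{b})$, and then to invoke the lower-bound direction of Theorem~4.4 of \cite{GottlobLeeValiant12-size-tw-bounds} through the correspondence between open structures and conjunctive queries recorded in the preliminaries. The first and main step is to show that evaluation on any key-satisfying database is insensitive to chasing: for every structure $\reld$ over $\sigma$ with $\reld \models \Sigma$, we have $|\homs(\rela,\{\tup{a}\},\reld)| = |\homs(\relb,\{\tup{b}\},\reld)|$.

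To establish this, I would exploit that the chase is built from a sequence of identifications, so that there is a surjective quotient homomorphism $\mu$ from $\rela$ onto $\relb$ with $\mu(\tup{a}) = \tup{b}$. Precomposition with $\mu$ turns any homomorphism $\relb \to \reld$ into a homomorphism $\rela \to \reld$. For the converse---the crux---I argue by induction along the chase steps that every homomorphism $h\colon \rela \to \reld$ factors through $\mu$, i.e., that $h$ is constant on each class identified by the chase. The key point is that a chase step identifies two elements $x,y$ only because some key $\textrm{key}(R)=K$ and tuples $\tup{b},\tup{b'}$ force them equal; since these tuples agree on the key position(s) $K$, their images under $h$ also agree there, and because $\reld \models \Sigma$ the key constraint forces $h(\tup{b}) = h(\tup{b'})$, hence $h(x) = h(y)$. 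Carrying this inductively across all chase steps, applied to the intermediate chase structures rather than to $\rela$ directly, shows $h = h' \circ \mu$ for a unique $h'\colon \relb \to \reld$. Since $\mu$ maps the set $\{\tup{a}\}$ onto $\{\tup{b}\}$, this correspondence descends to a bijection between $\homs(\rela,\{\tup{a}\},\reld)$ and $\homs(\relb,\{\tup{b}\},\reld)$, giving the desired cardinality equality.

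With this reduction in hand, the second step is purely a matter of translation. The structure $\relb$ satisfies $\Sigma$ and is its own chase, and by the observation in the preliminaries the color number $C_{\Sigma}^{\relb}(\{\tup{b}\})$ coincides with the color number of the conjunctive query $Q$ whose atoms are the facts of $\relb$ and whose output variables form the set $\{\tup{b}\}$; moreover $\homs(\relb,\{\tup{b}\},\reld)$ is exactly the answer relation of $Q$ on $\reld$. The lower-bound direction of Theorem~4.4 of \cite{GottlobLeeValiant12-size-tw-bounds} then supplies a constant $c > 0$ and a sequence $(\reld_n)_{n\geq 1}$ of databases satisfying $\Sigma$, with $M_{\reld_n}$ unbounded, for which the answer relation of $Q$ on $\reld_n$ has size at least $c\, M_{\reld_n}^{C_{\Sigma}^{\relb}(\{\tup{b}\})}$. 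Chaining this with the cardinality equality of the first step yields $|\homs(\rela,\{\tup{a}\},\reld_n)| \geq c\, M_{\reld_n}^{C_{\Sigma}^{\relb}(\{\tup{b}\})}$ for all $n$, as required.

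I expect the main obstacle to be the first step, and specifically the bookkeeping needed to verify that the databases $\reld_n$ returned by Theorem~4.4 do satisfy $\Sigma$---so that the chase-transfer equality actually applies to them---and that the order and multiplicity conventions in the tuples $\tup{a}$ and $\tup{b}$ interact correctly with the set-based notation $\{\tup{a}\}$ and $\{\tup{b}\}$ when passing between restricted homomorphisms and query answers. The homomorphism-factoring argument itself is routine once the unary-key chase step is unwound, but care is needed to run the induction over the intermediate chase structures and to confirm that the reduction parallels the one implicitly used for the matching upper bound in Lemma~\ref{lemma:uppercolor}.
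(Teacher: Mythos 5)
Your proposal is correct and takes essentially the same route as the paper: the paper derives Lemma~\ref{lemma:lower-color} directly from the lower-bound direction of Theorem~4.4 of \cite{GottlobLeeValiant12-size-tw-bounds}, whose statement is already phrased in terms of the chased query, so your explicit chase-transfer step (the bijection giving $|\homs(\rela,\{\tup{a}\},\reld)| = |\homs(\relb,\{\tup{b}\},\reld)|$ for every $\reld$ satisfying $\Sigma$) merely spells out what the paper leaves encapsulated in the citation. The factoring argument through the quotient map and the translation between restricted-homomorphism sets and conjunctive-query answers are both sound, so there is no gap.
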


%Now, Theorem~\ref{thm:p-rep} guarantees the existence of a $p$-representation for any SPJ plan $p$. Combining the existence of a $p$-representation with the upper and lower bounds given by Lemmas~\ref{lemma:uppercolor} and~\ref{lemma:lower-color}, respectively, we can prove the proposition pinpointing the output degree of a plan $p$.

We now give the proof of the proposition.
The existence of a $p$-representation follows from Theorem~\ref{thm:p-rep}. Fix some structure $\reld$ over $\sigma$. As $(\rela,\tup{a})$ is a $p$-representation, we have that $\out(p,\reld) = \homs(\rela, \tup{a}, \reld)$.  
It follows from Lemma~\ref{lemma:uppercolor} that $|\out(p,\reld)| \leq M_\reld^d$, and so $p$ has output degree $\leq d$. By Lemma~\ref{lemma:lower-color}, there is a constant $c > 0$ and an infinite sequence of structures $(\reld_n)_{n\geq1}$, with $M_{\reld_n}$ unbounded, where $|\out(p,\reld_n)| \geq c M_{\reld_n}^d$.
We obtain that the plan $p$ has output degree $\geq d$, and can thus conclude by Lemma~\ref{lem:degree-lower-bound} that $p$ has output degree $d$.

\section{Proof of Section~\ref{sec:prelim}}

\subsection{Proof of Proposition~\ref{prop:cm}}

For the forward direction, let $h$ be a homomorphism as described; then, for any homomorphism $g$ from $\rela'$ to $\reld$, the composition $g \circ h$ is a homomorphism from $\rela$ to $\reld$ where $g(\tup{a'}) = g(h(\tup{a}))$.
For the backward direction, set $\reld = \rela'$; then, we have that $\tup{a'} \in \homs(\rela',\tup{a'},\reld)$, implying that $\tup{a'} \in \homs(\rela,\tup{a},\reld)$, and thus there exists a homomorphism from $\rela$ to $\rela'$ sending $\tup{a}$ to $\tup{a'}$.

\section{Proofs of Section~\ref{sect:spj}}

\subsection{Proof of Lemma~\ref{lem:degree-lower-bound}}

Suppose that there existed a value $\epsilon > 0$ such that a plan $p$ had output degree $\geq d$ and output degree $\leq (d-\epsilon)$. Then, there would be functions $g \in \Omega(M^d)$ and $h \in O(M^{d-\epsilon})$ where $g(M_{\reld_n}) \leq |\out(p,\reld_n)| \leq h(M_{\reld_n})$ for all structures $\reld_n$ in a sequence $(\reld_n)_{n\geq1}$ of structures with $M_{\reld_n}$ unbounded, implying that $g \leq h$ on infinitely many values, a contradiction.

\subsection{Proof of Lemma~\ref{lem:intermediate-output}}

The forward direction is immediate from the definitions. For the backward direction, we have that, for each subplan $q$ of $p$, there exists a function $h_q(M) \in O(M^d)$ where, for all structures $\reld$ over the signature of $p$, $|\out(q,\reld)| \leq h_q(M_\reld)$. The plan $p$ then has intermediate degree $\leq d$ via the function $\sum_q h_q$, where the sum is over all subplans $q$ of $p$.

\subsection{Proof of Theorem~\ref{thm:p-rep}}

Before proceeding with the proof, we define an extended notion of representation that will facilitate our proving the existence of $p$-representations.
Suppose that $\rela$ is a structure, and~$\approx$ is a binary relation on $\adom{\rela}$.
We use the notation $\langle \approx \rangle$ to denote division by the equivalence relation~$\approx^*$. So, for each element $a \in \adom{\rela}$, we use $a \langle \approx \rangle$ to denote the equivalence class of $\approx^*$ containing $a$.
For each tuple $\tup{a} = (a_1, \ldots, a_k)$, we use $\tup{a}\langle \approx \rangle$ to denote $(a_1\langle \approx \rangle, \ldots, a_k\langle \approx \rangle)$. We use $\rela \langle \approx \rangle$ to denote the structure with universe $\adom{\rela} \langle \approx \rangle$, defined as $\bigcup_{a \in \adom{\rela}} a\langle \approx \rangle$, and where for each symbol $R$, it holds that $R^{\rela \langle \approx \rangle} = \{ \tup{a}\langle \approx \rangle ~|~ \tup{a} \in R^{\rela} \}$.
We define an  \emph{extended $p$-representation} of a SPJ plan $p$ as a triple $(\rela,\tup{a},\approx)$ where $(\rela \langle \approx \rangle, \tup{a}\langle \approx \rangle)$ is a $p$-representation.
A $p$-decomposition of an extended $p$-representation $(\rela,\tup{a},\approx)$ is defined as a $p$-decomposition of $(\rela \langle \approx \rangle, \tup{a}\langle \approx \rangle)$.

We are now ready to start the actual proof of the theorem.
We show, by induction on the structure of~$p$, how to derive an extended $p$-representation $(\rela,\tup{a},\approx)$ and $p$-decomposition thereof which satisfies the containment property.  
We then have that $(\rela \langle \approx \rangle, \tup{a}\langle \approx \rangle)$ is a $p$-representation along with a $p$-decomposition. We prove the statement inductively over the structure of $p$.
%The proof considers cases, depending on the form of $p$. 
%Here, we show the case where $p$'s outer-most operator is a join; the other cases are relatively straightforward and treated in the appendix.

\underline{Suppose that $p = R$.}  Let $m$ be the arity of $R$.
Define $\rela$ as the structure with $A = \{ a_1, \ldots, a_m \}$, $R^{\rela} = \{ (a_1, \ldots, a_m) \}$, and $S^{\rela} = \emptyset$ for each $S \in \sigma \setminus \{ R \}$.
Assign $\approx$ to be the empty set; we have that~$\approx^*$ is the equality relation on $\adom{\rela}$.  
Letting $\tup{a}$ denote $(a_1, \ldots, a_m)$, we have that  $(\rela\langle \approx \rangle,\tup{a}\langle \approx \rangle)$ is isomorphic to $(\rela,\tup{a})$.
%, and we work with the latter pair and also work with $\rela$ in place of $\rela\langle \approx \rangle$, etc. 
It is evident that $(\rela,\tup{a},\approx)$ is an extended $p$-representation.

We give a $p$-decomposition of this representation as follows.
Let~$T$ be the tree with one vertex $t$, and define $\chi(t) = \{ \tup{a} \}$. We then have that $(T,\chi)$ is a $p$-decomposition, via the mapping $\alpha$ with $\alpha(p) = t$ and the mapping $\beta$ with $\beta(p) = \tup{a}$. The containment property is, in this case, a direct consequence of the established fact that $(\rela,\tup{a},\approx)$ is an extended $p$-representation.

\underline{Suppose that $p$ has the form $\pi_{j_1, \ldots, j_n}(p_0)$,} where $p_0$ has arity $m$ and $j_1, \ldots, j_n \in [m]$. By induction there exist an extended $p_0$-representation $(\rela,(a_1, \ldots, a_m),\approx)$ and a $p_0$-decomposition $(T_0,\chi_0)$ thereof with root $r_0$ and mappings $\alpha_0$, $\beta_0$.
Define $\tup{a}$ to be the tuple  $(a_{j_1}, \ldots, a_{j_n})$; we then have that for any structure $\reld$ over $\sigma$:
\[\homs(\rela\langle \approx \rangle,\tup{a}\langle \approx \rangle, \reld) = \pi_{j_1, \ldots, j_n}(\homs(\rela\langle \approx \rangle,(a_1, \ldots, a_m)\langle \approx \rangle, \reld))\]
and thus that $(\rela,\tup{a},\approx)$ is an extended $p$-representation.

We obtain a $p$-decomposition $(T,\chi)$ of this representation by extending $(T_0,\chi_0)$ as follows: create a root node~$r$ whose unique child is~$r_0$, and let $\chi(t) = \chi_0(t)$ for every $t\in T_0$ and define $\chi(r) = \{ \tup{a}\langle \approx \rangle \}$.
Extend the mappings $\alpha_0$, $\beta_0$ to mappings $\alpha,\beta$
by defining $\alpha(p) = r$ and $\beta(p) =  \tup{a}\langle \approx \rangle$.
The containment property on each subplan of $p_0$ holds 
as a direct consequence of it having held on the $p_0$-decomposition.
When we consider $p$ (as a subplan of $p$), the containment property 
holds due to  $(\rela,\tup{a},\approx)$ being an extended $p$-representation.

\underline{Suppose that $p$ has the form $\Join_{\theta}(p_1, \ldots, p_\ell)$,} where each $p_i$ has arity $m_i$, for $i\in[\ell]$, and $p$ has arity $m = m_1 + \cdots + m_\ell$.
By induction, there exists (for each $i$) an extended $p_i$-representation $(\rela_i, \tup{t_i}, \approx_i)$.
We may assume without loss of generality that the universes $\adom{\rela_i}$ are pairwise disjoint (if they are not, their elements can be renamed to achieve this). We let $\tup{a} = (a_1, \ldots, a_m)$ denote the tuple $\tup{t_1} \ldots \tup{t_\ell}$, that is, the tuple obtained by concatenating the entries of the tuples $\tup{t_i}$, $i\in[\ell]$, in order.
Define $\approx_0$ as the relation $\bigcup_{i=1}^\ell \approx_i$, i.e., collecting all previously known equivalences, and define $\approx$ as the relation $\approx_0 \cup \bigcup_{(j=k) \in \theta} \{ (a_j, a_k) \}$, i.e., adding the equivalences newly introduced by $\theta$. 
We define $\rela$ as $\rela_1 \cup \cdots \cup \rela_\ell$. Then we have that $(\rela,\tup{a},\approx)$ is an extended $p$-representation, since for any structure $\reld$ over $\sigma$:
\begin{align*} 
& \homs(\rela\langle \approx \rangle,\tup{a}\langle \approx \rangle,\reld)\\ 
& = 
  \sigma_\theta(\homs(\rela\langle \approx_0 \rangle,\tup{a}\langle \approx_0 \rangle,\reld)) \\
& = \sigma_\theta( \homs(\rela_1\langle \approx_1 \rangle,\tup{t_1}\langle \approx_1 \rangle,\reld) \times \cdots \times 
\homs(\rela_\ell\langle \approx_\ell \rangle,\tup{t_\ell}\langle \approx_\ell \rangle,\reld)  ) \\
& = \sigma_\theta( \out(p_1,\reld) \times \cdots \times \out(p_\ell,\reld) ) \\
& = \out(p,\reld).
\end{align*}
Here, the first equality comes from our definition of $\approx$ from $\approx_0$, and the fact that each pair in $\approx_0$ has elements contained entirely in a universe $\adom{\rela_i}$; the second equality follows from this fact and our definitions of $\tup{a}$ and $\approx_0$; the third equality follows from our assumption that each $(\rela_i, \tup{t_i}, \approx_i)$ is an extended $p_i$-representation; the last equality follows from the definition of $\out$.

By induction, each given extended $p_i$-representation has a $p_i$-decomposition $E_i = (T_i, \chi_i)$, where $r_i$ is the root node of $T_i$, with associated mappings $\alpha_i,\beta_i$. We assume without loss of generality that the vertex sets $V(T_i)$ are pairwise disjoint (if not, they can be renamed to achieve this). We provide a $p$-decomposition as follows. Combining the trees $T_i$, create a rooted tree $T$ with one extra vertex~$r$ whose children are $r_1, \ldots, r_\ell$. Define $\alpha: \subplans(p) \to V(T)$ to be the mapping that extends each mapping $\alpha_i$ and where $\alpha(p) = r$; define $\beta: \subplans(p) \to \adom{\rela}\langle \approx \rangle^*$ to be the mapping with $\beta(p) = \tup{a}\langle \approx \rangle$, and where, for each subplan $q$ of a plan $p_i$, when $\beta_i(q) = (c_1, \ldots, c_{m_i})$,
%we define $\beta(q) = (c_1, \ldots, c_{m_i})\langle \approx \rangle$.
%DISCUSS PROPERLY
$\beta(q)$ is defined as the tuple obtained by further dividing each entry by $\approx^*$, that is, as the tuple $(c'_1, \ldots, c'_{m_i})$ where each $c'_j$ is the unique $\approx^*$-equivalence class containing $c_j$; such a class exists, since each $c_j$ is a $\approx_i^*$-equivalence class, and $\approx_i^*$ is contained in $\approx^*$ (due to $\approx_i$ being contained in $\approx$).

Define $\chi(r) = \{ \tup{a}\langle \approx \rangle \}$. For each subplan $q$ of $p$ with $q \neq p$, define $\chi({\alpha(q)}) = \{ \beta(q) \}$. It is straightforward to verify that $(T,\chi)$ is a rooted tree decomposition as required by the definition of $p$-decomposition with mappings $\alpha$ and $\beta$ as defined above.

It remains to show that the $p$-decomposition $(T,\chi)$ satisfies the containment property.
%We verify that the containment property holds, as follows.
We showed that $(\rela,\tup{a},\approx)$ is an extended $p$-representation, so it holds in the case that $q = p$. Suppose that $q$ is a subplan of $p$ with $q \neq p$. Then $q$ is a subplan of a plan $p_j$, for some $j\in[\ell]$. Since $(\rela_j, \tup{t_j}, \approx_j)$ is an extended $p_j$-representation, we have for any structure $\reld$ over $\sigma$ that
\[\out(q,\reld) \supseteq \homs(\rela_j\langle \approx_j \rangle_{\leq \alpha_j(q),E_j}, \beta_j(q), \reld).\]
Let us make some observations. First, we have $\alpha(q) = \alpha_i(q)$, and so $\rela_j\langle \approx_j \rangle_{\leq \alpha_j(q),E_j}$ is a substructure of $\rela\langle \approx_j \rangle_{\leq \alpha_j(q),E_j}$. We can write $\beta_j(q)$ as a tuple $(a'_1, \ldots, a'_k)\langle \approx_j \rangle$ where $a'_1, \ldots, a'_k \in \adom{\rela_j}$. By our definition of $\beta(q)$ from $\beta_j(q)$, we have $\beta(q) = (a'_1, \ldots, a'_k)\langle \approx \rangle$. We have 
\begin{align*}
& \homs(\rela_j\langle \approx_j \rangle_{\leq \alpha_j(q),E_j}, \beta_j(q), \reld) \\
& =
\homs(\rela_j\langle \approx_j \rangle_{\leq \alpha_j(q),E_j}, (a'_1,\ldots,a'_k)\langle \approx_j \rangle, \reld) \\
& \supseteq
\homs(\rela\langle \approx \rangle_{\leq \alpha(q),E}, (a'_1,\ldots,a'_k)\langle \approx \rangle, \reld).
\end{align*}
The equality follows from the above observations on $\beta_j(q)$. The containment can be explained as follows. We have that $\approx_j$ is a subset of $\approx$, and, on nodes below $\alpha(q)$, the tree decomposition $E$ is obtained from the tree decomposition $E_j$ by further dividing each element in a set associated via $\chi_j$ by $\approx^*$. We thus have a canonical homomorphism from $\rela_j\langle \approx_j \rangle_{\leq \alpha_j(q),E_j}$ to $\rela\langle \approx \rangle_{\leq \alpha(q),E}$, namely, the mapping that further divides each element of the first structure by $\approx^*$.  Since this homomorphism maps the tuple $(a'_1,\ldots,a'_k)\langle \approx_j \rangle$ to the tuple  $(a'_1,\ldots,a'_k)\langle \approx \rangle$, the containment holds by Proposition~\ref{prop:cm}.

By chaining together the established relationships and using $\beta(q) = (a'_1, \ldots, a'_k)\langle \approx \rangle$, we conclude $\out(q,\reld) \supseteq \homs(\rela\langle \approx \rangle_{\leq \alpha(q),E}, \beta(q), \reld)$, as required.

\subsection{Proof of Theorem~\ref{thm:plan-synthesis}}

The heart of the proof can be derived from argumentation in Section 4 of \cite{GottlobLeeValiant12-size-tw-bounds}. As we draw on this section, we employ the terminology of \cite{GottlobLeeValiant12-size-tw-bounds}, and refer to their argumentation. 
In particular, their result refers to conjunctive queries and functional dependencies. The former are known to be equivalent to SPJ queries (and are in syntax close to our $p$-representations), while the latter generalize key constraints. In the following arguments, as in~\cite{GottlobLeeValiant12-size-tw-bounds}, for elements $X_i, X_j\in \adom{\relc}$, we use the notation $X_i \rightarrow X_j$ to say that element $X_j$ is functionally determined by $X_i$ via some functional dependency.
%, which are known to be equivalent to SPJ plans, and functional dependencies, which generalize key constraints, and are of the form... query satisfies constraints...
%In this terminology, our theorem claims the following: Given a minimal conjunctive query $Q$ (i.e., no subset of query atoms is homomorphically equivalent to the original set of atoms) to which the chase has been applied/that satisfies the constraints, we can compute a SPJ plan $p'$ which computes the result of $Q$ for all databases satisfying $\Sigma$ and has the desired intermediate degree.
%
In Section 4 of \cite{GottlobLeeValiant12-size-tw-bounds}, it is explained how to transform a conjunctive query $Q_c$, to which the chase has been applied, to a conjunctive query $Q'$ over a modified set of relations that has no functional dependencies. 
Let us for the moment put aside the output relation and output variables, and consider the action of this transformation on the input relations and variables. The transformation considers functional dependencies of the form $X_i \rightarrow X_j$ over pairs of input variables, and iteratively removes them; a removal is performed by adding entailed functional dependencies, and by expanding each atom including $X_i$ to a new atom where $X_j$ is added (if not already present).

For each functional dependency $X_i \rightarrow X_j$, we can maintain a binary relation $S_{ij}$ whose first coordinate is a key for the relation, and such that adding $S_{ij}(X_i,X_j)$ to the query atoms would not change the query answers; this holds initially by taking $S_{ij}$ to be an appropriate projection of the relation of an atom mentioning both $X_i$ and $X_j$, and then when an entailed dependency $X_k \rightarrow X_j$ is added, due to dependencies $X_k \rightarrow X_i$ and $X_i \rightarrow X_j$, we can derive $S_{kj}$ (if needed) by taking the natural join of $S_{ki}$ and $S_{ij}$ and then projecting the middle coordinate.
It is readily verified that the sizes of these relations $S_{ij}$ never exceeds the size of any input relation.

When an atom including a variable $X_i$ is expanded to include the variable $X_j$, a new relation for the atom can be computed by joining the original relation with the relation $S_{ij}$. Since each relation $S_{ij}$ expresses a functional dependency, this never increases the size of any relation.

The result of the transformation, applied to the input atoms of a conjunctive query, is thus a new set of input atoms where every new relation can be derived from the original relations via a well-behaved SPJ plan of intermediate degree $1$, where every new relation has size bounded above by an original relation, and where the answers (over all variables) to the new set of atoms is the same as the answers to the original set.
Consider the application of the transformation to the input atoms of the conjunctive query corresponding to $\relc$. Let $\relc'$ be the relational structure representing the new set of atoms.
It is moreover shown in Lemma 4.7 of~\cite{GottlobLeeValiant12-size-tw-bounds} that the color number of an overall query (including input variables) is not changed by this transformation, and this implies that $C^{\relc'}_\Sigma(S) \leq C^{\relc'}_\Sigma(S') = C^{\relc}_\Sigma(S)$ for any set $S$ of variables and its expansion $S'$ after applying the transformation. 
%(Although the transformation, as presented, may 
%add to the set of input variables, one can 
Let us then consider $(\relc',\tup{c})$; due to the just-given inequality, we have $C_{\emptyset}^{\relc'}\wid(\relc',\tup{c}) \leq C_{\Sigma}^{\relc}\wid(\relc,\tup{c})$, and it thus suffices to establish the result for the empty set of unary keys $\Sigma' = \emptyset$ along with $(\relc',\tup{c})$, in place of $\Sigma$ with $(\relc,\tup{c})$.

To establish the result for $(\relc',\tup{c})$ in the absence of keys, it suffices to compute a tree decomposition $F$ of $(\relc',\tup{c})$ where $C_{\Sigma'}^{\relc'}\wid(F)$ is equal to $C_{\Sigma'}^{\relc'}\wid(\relc',\tup{c})$, which can be done in exponential time, and then compute a plan that materializes the answers to each bag, where the answers for a bag with variables $S$ would be considered to be the join of the set of atoms obtained from $\relc'$'s conjunctive query by taking each atom, and projecting onto $S$.
These different sets of answers can then be combined via joins and projects in a standard way, following the structure of the tree decomposition. But to find a plan that does this materialization for a bag with variables $\{a_1,\ldots,a_n\}$, we simply follow the proof of Theorem 6 of \cite{AtseriasGroheMarx13-size-bounds}, which explains how to construct a join-project plan for any pure join plan;
in doing so, they define plans $\phi_1, \ldots, \phi_n$; each of these plans is implementable as a well-behaved one, since they iteratively introduce one variable at a time: in their notation, the plan $\phi_i$, which can be viewed as a multiway join, concerns variables $\{ a_1, \ldots, a_i \}$.

\subsection{Proof of Lemma~\ref{lemma:td-bag-color}}

By the definition of $C_{\Sigma}^{\relc}\wid$ of the open structure $(\relc,\tup{c})$, there exists a vertex $t_0$ of the given tree decomposition such that $C_{\Sigma}^{\relc}\wid(\chi(t_0)) \geq d$.
Let $\tup{\chi(t_0)}$ be the tuple of elements in $\chi(t_0)$. Invoke Lemma~\ref{lemma:lower-color} on the structure $\relc$ and the open structure $(\relc,\tup{\chi(t_0)})$ to obtain a constant $c' > 0$ and structures $(\reld_n)_{n \geq 1}$ such that 
$|\homs(\relc,\chi(t_0),\reld_n)| \geq c' M_{\reld_n}^{C_{\Sigma}^{\relc}(\chi(t_0))}$.

\subsection{Proof of Lemma~\ref{lemma:substruct}}

We show how to pass from any map in $\homs(\relc, S~\cap~\adom{\relc}, \reld)$ to a map in $\homs(\relb,S,\reld)$, and then explain why the passage is injective, which suffices.
Suppose that $g \in \homs(\relc, S \cap \adom{\relc}, \reld)$ and $h$ is a retraction from $\relb$ to $\relc$. Then there exists an extension $g'$ of $g$ that is in $\homs(\relc, \adom{\relc}, \reld)$, from which it follows that the composition $g' \circ h$ is in $\homs(\relb, \adom{\relb}, \reld)$ (in writing this composition, $h$ is applied first).
It follows that the restriction $(g' \circ h) \upharpoonright S$ is in $\homs(\relb, S, \reld)$.  This passage is injective since if two mappings in $\homs(\relc, S \cap \adom{\relc}, \reld)$ differ on an element $c$ of $S \cap \adom{\relc}$, the two mappings that they are passed to will also differ on $c$, due to the assumption that $h$ acts as the identity on $C$.

%\section{Proofs of Section~\ref{sect:spju}}
%\label{sect:app-spju}

%By an \emph{exponential-time algorithm}, we mean an algorithm that runs in time $2^{Q(n)}$, where $Q$ is a polynomial and $n$ denotes the input size.

%Theorem~\ref{thm:sjpu} is proved by induction on the structure of the SPJU plan $p$, in the appendix.

\section{Proofs of Section~\ref{sect:spju}}

\subsection{Proof of Theorem~\ref{thm:sjpu}}

%\begin{proof}
We first show how to pass from an SPJU plan $p$ to a sequence of SPJ plans
$(p_i)$ and a sequence of maps $(\alpha_i)$
satisfying the first two conditions.  We show this by induction on
the structure of~$p$.

When $p = R$, we take $p_1 = R$ and define $\alpha_1$ by 
$\alpha_1(p_1) = p$.

When $p = \pi_{j_1, \ldots, j_n}(p')$, by induction,
for the SPJU plan $p'$ we have SPJ plans $p'_1, \ldots, p'_{m'}$
with maps $\alpha'_1, \ldots, \alpha'_{m'}$.
For each $i = 1, \ldots, m'$, define
$p_i =  \pi_{j_1, \ldots, j_n}(p'_i)$.
It is straightforwardly verified that $p$ is semantically equivalent
to $p_1 \cup \cdots \cup p_m$.
For each $i = 1, \ldots, m'$, define
$\alpha_i: \subplans(p_i) \to \subplans(p)$
to be the unique extension of $\alpha'_i$ such that
$\alpha_i(p_i) = p$; this 
satisfies the desired property due
to the aforementioned semantic equivalence, and induction.

When $p = \Join_\theta(p_1, \ldots, p_\ell)$,
by induction, we have for each $p_i$
a sequence of plans $p_{i,1}, \ldots, p_{i,m_i}$
with maps $\alpha_{i,1}, \ldots, \alpha_{i,m_i}$.
For each index sequence $j_1, \ldots, j_\ell$
with $1 \leq j_i \le m_i$ (for each $i \in [\ell]$),
define $q_{j_1, \ldots, j_\ell}$ as the plan
$\Join_\theta (p_{1,j_1}, \ldots, p_{\ell,j_\ell})$.
It is straightforward to verify that $p$ is semantically equivalent
to the union $(\cup)$ over all of these plans
$q_{j_1, \ldots, j_\ell}$.
To define the map $\alpha_{j_1, \ldots, j_\ell}$ corresponding to
the plan $q_{j_1, \ldots, j_\ell}$, we combine the maps
$\alpha_{1,j_1}, \ldots, \alpha_{\ell,j_\ell}$, 
and extend the combination to map 
$q_{j_1, \ldots, j_\ell}$ to $p$; this 
satisfies the desired property due
to the aforementioned semantic equivalence, and induction.

When $p = p' \cup p''$, by induction, for the SPJU plan $p'$
we have SPJ plans $p'_1, \ldots, p'_{m'}$, and for the SPJU plan
$p''$ we have SPJ plans $p''_1, \ldots, p''_{m''}$.
We simply take the concatenation of these two SPJ plan sequences,
and retain the associated maps.  It is straightforwardly verified
that the first two conditions hold.

To prove the theorem, consider the following algorithm.
Given an SPJU plan $p$, it first uses the passage just given
to compute a sequence of SPJ plans $p_1, \ldots, p_m$ 
with associated maps.  
It is straightforward to verify 
that this can be done in singly exponential time:
in the case 
 $p = \Join_\theta(p_1, \ldots, p_\ell)$,
the number of plans for $p$ is the product (over $i = 1, \ldots, \ell$)
of the number of plans for $p_i$, and
in the case 
 $p = p' \cup p''$,
the number of plans for $p$ is the sum of the number of plans for $p'$
and $p''$; it follows by a straightforward induction
that the total number of plans for any subplan $p^+$
is at most $2$ raised to the number of leaves below $p^+$,
and is thus singly exponential.
From Theorem~\ref{thm:p-rep},
for each plan $p_i$, the algorithm computes a
$p_i$-representation
$(\rela_i, \tup{a_i})$.  
Next, it continually checks to see if there is a pair $(i,j)$
of indices where there exists a homomorphism
from $(\rela_i,\tup{a_i})$ to $(\rela_j,\tup{a_j})$; checking
for such a homomorphism requires singly exponential time.
If there is such a pair, the algorithm arbitrarily selects one such pair
$(i,j)$, and removes the plan $p_j$ (and its associated map)
from the sequence;
this removal preserves property (1), due to 
Proposition~\ref{prop:cm}.
When this process terminates, all three properties of the 
theorem are satisfied.  
%\end{proof}

%We are now ready to prove the extension of our main result to SPJU plans.

%\begin{comment}

\subsection{Proof of Theorem~\ref{thm:main-spju}}

%\begin{proof}[Proof of Theorem~\ref{thm:main-spju}.]
Given an SPJU plan $p$, the algorithm first uses the algorithm of Theorem~\ref{thm:sjpu} to obtain, in exponential time, a sequence of SPJ plans $p_1, \ldots, p_m$ with open structures $(\rela_1,\tup{a_1}), \ldots, (\rela_m,\tup{a_m})$ such that $p$ is semantically equivalent (and thus $\Sigma$-semantically equivalent) to $p_1 \cup \cdots \cup p_m$.  
For each index $i$, we have that $(\rela_i,\tup{a_i})$ is a $p_i$-representation.

The algorithm then does the following for each plan $p_i$: 
%it computes a core $(\relc_i,\tup{c_i})$ of a $p$-representation, and then applies 
via Theorem~\ref{thm:main-spj} we obtain, in exponential time, a value $d_i$ and a plan $p'_i$ such that $p'_i$ is $\Sigma$-semantically equivalent to $p_i$ and $p'_i$ has intermediate degree $d_i$, which is the best possible intermediate degree.
% satisfy the properties given in that theorem statement.
Finally, the algorithm outputs the value $d = \max_{i \in [m]} d_i$ and the plan $p' = p'_1 \cup \cdots \cup p'_m$. %The computation of a $p$-representation can be performed in polynomial time, and the computation of a core can be performed in exponential time. From there, each value $d_i$ and each plan $p'_i$ can be computed in exponential time.

Since each plan $p'_i$ has intermediate degree $\leq d$, we have that $p'$ has intermediate degree $\leq d$. Moreover, choosing $k$ to be an index where $d_k = d$, we have that $p'_k$ does not have (for any $\epsilon > 0$) intermediate degree $\leq (d-\epsilon)$, implying that $p'$ has intermediate degree $d$.

To complete the proof, it suffices to argue that, for any SPJU plan $q'$ $\Sigma$-semantically equivalent to $p'$, and any $\epsilon > 0$, it holds that $q'$ does not have intermediate degree $\leq (d-\epsilon)$.
Let $k$ be an index where $d = d_k$. By Theorem~\ref{thm:sjpu} applied to $q'$, there exists a sequence $q'_1, \ldots, q'_n$ of SPJ plans, along with representations $(\relb_1, \tup{b_1}),\ldots,(\relb_n, \tup{b_n})$ satisfying the conditions of the theorem.

We claim that there exists an index $\ell\in [n]$ where $q'_\ell$ is $\Sigma$-semantically equivalent to $p'_k$. Let $(\relc_k,\tup{c_k})$ be a core of the chase of $(\rela_k,\tup{a_k})$.
Since $C_k$ satisfies $\Sigma$, we have $\tup{c_k} \in \out(p'_k, \relc_k) \subseteq \out(p',\relc_k) = \out(q',\relc_k)$, implying that there exists an index $\ell$ with $\tup{c_k} \in \out(q'_\ell, \relc_k)$. So, there exists a homomorphism from $(\relb_\ell, \tup{b_{\ell}})$ to $(\relc_k, \tup{c_k})$. 
By a symmetric argument, there exists an index $k'$ such that there is a homomorphism from $(\relc_{k'}, \tup{c_{k'}})$ to $(\relb_\ell, \tup{b_{\ell}})$.

By condition (3) of Theorem~\ref{thm:sjpu}, since we have a homomorphism from $(\relc_{k'}, \tup{c_{k'}})$ to $(\relc_k, \tup{c_k})$, we have $k' = k$, and so $q'_\ell$ is $\Sigma$-semantically equivalent to $p'_k$.
By Theorem~\ref{thm:lowerbound-intermediate-sjp-color} and Lemma~\ref{lem:degree-lower-bound}, $q'_\ell$ does not have (for any $\epsilon > 0$) output degree $\leq (d-\epsilon)$, and thus it does not have intermediate degree $\leq (d-\epsilon)$ implying by condition (2) of Theorem~\ref{thm:sjpu} that $q'$ does not have (for any $\epsilon > 0$) intermediate degree $\leq (d-\epsilon)$.
%\end{proof}

%\end{comment}

\end{document}